\newcommand{\GEsq}{G_{\!E}^{2}}
\newcommand{\GMsq}{G_{\!M}^{2}}
\newcommand{\rp}{r_{\!p}}
\def\qed{\hfill$\diamondsuit$}
\newcommand{\Rmnum}[1]{\expandafter\romannumeral #1}
\newtheorem{proposition}{{Proposition}}
\newtheorem{proof}{{Proof}}
\newcommand{\RNum}[1]{\uppercase\expandafter{\romannumeral #1\relax}}
\def\m{\mathcal}
\def\mr{\mathrm}
\newcommand{\ind}{\mathbbm{1}}
\def\T{{ \mathrm{\scriptscriptstyle T} }}
\begin{document}

\title{Revisiting the proton-radius problem using constrained Gaussian processes }

\author{Shuang Zhou}
\affiliation{Department of Statistics, Texas A\&M University, College Station, TX 77843}
\author{P. Giuliani}
\author{J. Piekarewicz}
\affiliation{Department of Physics, Florida State University, Tallahassee, FL 32306} 
\author{Anirban Bhattacharya}
\affiliation{Department of Statistics, Texas A\&M University, College Station, TX 77843}
\author{Debdeep Pati}
\affiliation{Department of Statistics, Texas A\&M University, College Station, TX 77843}

\begin{abstract}

\textbf{Background:} The ``proton radius puzzle'' refers to an eight-year old problem 
that highlights major inconsistencies in the extraction of the charge radius of the proton 
from muonic Lamb-shift experiments as compared against experiments using elastic
electron scattering. For the latter, the determination of the charge radius involves an 
extrapolation of the experimental form factor to zero momentum transfer.

\textbf{Purpose:} To estimate the proton radius by introducing a novel and powerful 
non-parametric model based on a constrained Gaussian process to model the electric form factor of the proton. 

\textbf{Methods:} Within a Bayesian paradigm, we develop a model flexible enough 
to fit the data without any parametric assumptions on the form factor. The Bayesian
estimation is guided by imposing only two physical constraints on the form factor: 
(a) its value at zero momentum transfer (normalization) and (b) its overall shape, 
assumed to be a monotonically decreasing function of the momentum transfer. 
Variants of these assumptions are explored to assess the impact of these constraints.

\textbf{Results:} So far our results are inconclusive in regard to the proton puzzle, as
they depend on both, the assumed constrains and the range of experimental 
data used to fit the Gaussian process. For example, if only low momentum-transfer 
data is used, adopting only the normalization constraint provides a value compatible 
with the smaller muonic result, while imposing only the  shape constraint favors the 
larger electronic value.

\textbf{Conclusions:} We have presented a novel technique to estimate the proton 
radius from electron scattering data based on a non-parametric Gaussian process.
We have shown the major impact of the physical constraints imposed on the form
factor and of the range of experimental data used to implement the extrapolation.
In this regard, we are hopeful that as this technique is refined and with the anticipated 
new results from the PRad experiment, we will get closer to resolve of the puzzle.  
\end{abstract}

\maketitle

\section{Introduction}
\label{Sec:Introduction}

Nuclear Physics is an extremely broad field of science whose mission is to understand 
all manifestations of nuclear phenomena\,\cite{LongRangePlan}. Regardless of whether 
probing individual nucleons, atomic nuclei, or neutron stars, a common theme across this 
vast landscape is the characterization of these objects in terms of their mass and radius.
Indeed, shortly after the discovery of the neutron in 1932, Gamow, Weizs\"acker, Bethe, 
and Bacher formulated the ``liquid-drop'' model to estimate the masses of atomic 
nuclei\,\cite{Weizsacker:1935,Bethe:1936}. Since then, remarkable advances in 
experimental techniques have been exploited to determine nucleon and nuclear masses 
with unprecedented precision; for example, the rest mass of the proton is known to a 
few parts part in a billion\,\cite{Mohr:2015ccw}. Similarly, starting with the pioneering work 
of Hofstadter in the late 1950's \cite{Hofstadter:1956qs} and continuing to this 
day\,\cite{DeJager:1987qc,Fricke:1995,Angeli:2013}, elastic electron scattering has provided 
the most accurate and detailed picture of the distribution of charge in nuclear systems. Although 
not as impressive as in the case of nuclear masses, the charge radii of atomic nuclei has 
nevertheless been determined with extreme precision; for example, the charge radius of 
${}^{208}$Pb is known to about two parts in 10,000\,\cite{Angeli:2013} (or 
$R_{\rm ch}^{208}\!=\!5.5012(13)\,{\rm fm}$). Given such an impressive track record, 
it came as a shocking surprise that the accepted 2010-CODATA value for the charge 
radius of the proton obtained from electronic hydrogen and electron scattering was in 
stark disagreement with a new result obtained from the Lamb shift in muonic 
hydrogen\,\cite{Pohl:2010zza}. This unforeseen conflict with the structure of the 
proton has given rise to the ``proton radius 
puzzle"\,\cite{Pohl:2013yb,Bernauer:2014cwa,Carlson:2015jba}, 

The value of the charge radius of the proton $\rp\!=\!0.84087(39)\,{\rm fm}$ determined 
from muonic hydrogen\,\cite{Pohl:2010zza,Pohl:2013yb} differs significantly (by $\sim$4\% or 
nearly 7$\sigma$) from the recommended CODATA value of $\rp\!=\!0.8775(51)\,{\rm fm}$.
Note that the CODATA value is obtained by combining the results from both electron scattering 
and atomic spectroscopy\,\cite{Pohl:2013yb,Mohr:2015ccw,Carlson:2015jba}. The muonic
measurement is so remarkably precise because the muon---with a mass that is more than 200 
times larger than the electron mass and thus a Bohr radius 200 times smaller---is a much more 
sensitive probe of the internal structure of the proton. Of great relevance to the proton puzzle is 
the recent measurement of the 2S-4P transition frequency in electronic hydrogen that suggests
a smaller proton radius of $\rp\!=\!0.8335(95)\,{\rm fm}$---in agreement with the result from 
muonic hydrogen\,\cite{Beyer79}. Although significant, it remains to be understood why the present 
extraction differs from the large number of spectroscopic measurements carried out in electronic 
hydrogen throughout the years.

As in the case of earlier physics puzzles---notably the ``solar neutrino problem"---one attempts to 
explain the discrepancy by exploring three non-mutually-exclusive options: (a) the experiment 
(at least one of them) is in error, (b) theoretical models used in the extraction of the proton radius 
are the culprit (see for example\,\cite{Robson:2013nwa} and references contained therein), 
or (c) there is new physics that affects the muon differently than the electron. Indeed, hints of 
possible violations to lepton universality are manifested in the anomalous magnetic moment 
($g\!-\!2$) of the muon\,\cite{Bennett:2006fi} and in certain decays of the B-meson into either a 
pair of electrons or a pair of muons\,\cite{Aaij:2014}. 

In an effort to resolve the ``proton radius puzzle" a suite of experiments in both spectroscopy and
lepton-proton scattering are being commissioned. Spectroscopy of both electronic and muonic atoms, 
as already initiated by Beyer {\sl et al.}\,\cite{Beyer79}, will continue with a measurement of a variety 
of transitions to improve both the value of the Rydberg constant and the charge radius of the proton;
note that the Rydberg constant and $\rp$ are known to be highly correlated. Lepton scattering 
experiments are planned at both the Thomas Jefferson National Accelerator Facility (JLab) and at 
the Paul Scherrer Institute (PSI). The proton radius experiment (PRad) at JLab has already collected 
data in the momentum-transfer range of 
$Q^{2}\!=\!(10^{-4}\,$--$\,10^{-1})\,{\rm GeV}^{2}$\,\cite{Gasparian:2017cgp}, a wide-enough region
to allow for comparisons against the most recent Mainz data\,\cite{Bernauer:2010wm}, but also to 
extend the Mainz data to significantly lower values of $Q^{2}$. Finally, the Muon Proton Scattering 
Experiment (MUSE) will fill a much-needed gap by determining $\rp$ from the scattering of both 
positive and negative muons of the proton. These experiments will be conducted concurrently with 
electron scattering measurements in an effort to minimize systematic uncertainties\,\cite{Gilman:2013eiv}. 
 
Within this broad context our contribution is rather modest, as our main goal is to address how best to 
extract the charge radius of the proton form electron scattering data. The view adopted here is that the 
puzzle lays not in the experimental data, but rather in the extraction of the proton radius from the 
scattering data. The proton charge radius is related to the slope of the electric form factor of the proton 
$G_{\rm E}(Q^{2})$ at the origin, {\sl i.e.,} at $Q^{2}\!=\!0$ (see Sec. \ref{Sec:Formalism}). Despite 
heroic efforts at both Mainz\,\cite{Bernauer:2010wm} and JLab\,\cite{Gasparian:2017cgp} to determine 
$G_{\rm E}(Q^{2})$ at extremely low values of $Q^{2}$, a subtle \textit{extrapolation} to $Q^{2}\!=\!0$ 
is unavoidable. Given the current data available, the value one can obtain for the proton radius from the 
extrapolation is quite sensitive to the model used to describe the form factor. In a first attempt at mitigating 
such uncontrolled 
extrapolations, Higinbotham and collaborators have brought to bear the power of statistical methods 
into the solution of the problem\,\cite{Higinbotham:2015rja}; see also\,\cite{Yan:2018bez}. They have 
concluded that ``statistically justified linear extrapolations of the extremely-low-$Q^{2}$ data produce a 
proton charge radius which is consistent with the muonic results and is systematically smaller than the 
one extracted using higher-order extrapolation functions". However, recent analyses of electron scattering 
data that suggest smaller proton radii consistent with the muonic Lamb shift have been called into 
question\,\cite{Bernauer:2016ziz}. Moreover, much controversy has been generated around the optimal 
(``parametric") model that should be used to fit the electric charge form factor of the proton---ranging from 
monopole, to dipole, to polynomial fits, to Pade' approximants, among many others. In an effort to eliminate 
the reliance on specific functional forms, we introduce a method that does not assume a particular 
parametric form for the form factor. Such a nonparametric approach aims to ``let data speak for itself''
without introducing any preconceived biases. Although the nonparametric approach does not assume a 
particular form for the form factor, several constraints justified by physical considerations are imposed. In 
essence, a \textit{nonparametric Bayesian} curve fitting procedure that incorporates various physical 
constraints is used to provide robust predictions and uncertainty estimates for the charge radius of the 
proton. In our analysis we use the 1422 data points from the Mainz collaboration \cite{bernauer2011high,bernauer2014electric,bernauer2011bernauer}.

The paper has been organized as follows. In Sec.~\ref{Sec:Formalism}, we introduce some of the
basic concepts necessary to understand the measurement of the electric form factor of the proton.
After such brief introduction, we explain the critical concepts behind our nonparametric approach,
including the selection of the basis functions and the Gaussian process used for their calibration. 
A synthetic data example is presented in Sec.~\ref{sec:sims} and the electron-scattering data analysis 
is presented Sec.~\ref{sec:real}. We offer our conclusions and some perspective for future
improvements in Sec.\,\ref{Sec:Conclusions}. Finally, several details about the implementation of 
the model and on the analysis on both synthetic and real data are presented in the various Appendices.

\section{Formalism}
\label{Sec:Formalism}

We start this section with a brief introduction to elastic electron scattering with particular emphasis 
on the determination of the electric form factor of the proton from the scattering data. Then, we 
proceed in significant more detail to describe the formalism associated with the determination of
the charge radius of the proton by extrapolating the experimental data to zero momentum transfer.

\subsection{Electron scattering}
\label{Sec:Electron scattering}

In the one-photon exchange approximation, the most general expression for the elastic cross section 
consistent with Lorentz and parity invariance is encoded in two Lorentz-scalar functions: the electric 
$G_E$ and magnetic $G_M$ form factors of the proton. That is,
\begin{equation}
 \frac{d\sigma}{d\Omega} = \left(\frac{d\sigma}{d\Omega}\right)_{\!\rm Mott}
 \left(\frac{\GEsq(Q^{2}) + \tau \GMsq(Q^{2})}{1+\tau} +
 2\tau \GMsq(Q^{2})\tan^{2}(\theta/2)\right) \,,
 \label{CrossSection}
\end{equation}
where the square of the four-momentum transfer is given by:
\begin{equation}
 Q^{2} \equiv -(p'-p)^{2} = 4EE'\sin^{2}(\theta/2).
 \label{Q2}
\end{equation}
Note that $E$ $(E')$ is the initial (final) energy of the electron, $\theta$ is the scattering angle (all in the laboratory 
frame), $\tau\!\equiv\!Q^{2}/4M^{2}$, and $M$ is the mass of the proton. The internal structure of the proton is 
imprinted in the two form factors, with the electric one describing (in a non-relativistic picture) the distribution 
of charge and the magnetic one the distribution of magnetization. Finally, the Mott cross section introduced in 
Eq.(\ref{CrossSection}) represents the scattering of a massless electron from a spinless and structureless point 
charge. That is,
\begin{equation}
 \left(\frac{d\sigma}{d\Omega}\right)_{\!\rm Mott} = 
  \frac{4\alpha^{2}}{Q^{4}}\frac{E^{\prime3}}{E}\cos^{2}(\theta/2) =
  \frac{\alpha^{2}}{4E^{2}\sin^{4}(\theta/2)}\frac{E^{\prime}}{E}\cos^{2}(\theta/2), 
 \label{Mott}
\end{equation}
where $\alpha$ is the fine structure constant. 

In a non-relativistic picture, the electric form factor of the proton is related to the Fourier transform of its spatial 
distribution of charge as follows:
\begin{equation}
 G_{E}(Q^{2}) = \int\!\rho_{{}_{\!E}}(r)\mathlarger{e}^{i{\bf Q}\cdot{\bf r}} d^{3}r =
    \int\!\rho_{{}_{\!E}}(r)\left(1-\frac{Q^{2}}{3!} r^{2}+\frac{Q^{4}}{5!} r^{4}+\ldots\right)d^{3}r = 
    1 - \frac{Q^{2}}{6}\langle r_{\!{}_{E}}^{2}\rangle + \frac{Q^{4}}{120}\langle r_{\!{}_{E}}^{4}\rangle + \ldots 
 \label{GE}
\end{equation}
This equation suggests that low-energy --or long wavelength-- electrons are unable to resolve the internal structure 
of the proton and are therefore only sensitive to its entire charge. As the momentum transfer increases and the 
wavelength becomes commensurate with the proton size, finer details may now be resolved. In particular, the 
charge radius of the proton is defined as: 
\begin{equation}
 \rp^{2} \equiv \langle r_{\!{}_{E}}^{2}\rangle = -6 \frac{dG_{E}}{dQ^{2}}\bigg|_{Q^{2}=0}.
\label{PRadius}
\end{equation}
Although the above expression for $\rp$ was motivated using non-relativistic arguments, its connection to 
the derivative of the electric form factor has been universally adopted as the definition of the proton radius. Based on this description we introduce the following expressions that are the cornerstone of the nonparametric approach.
\begin{align}
 & G_{E}(Q^{2}\!=\!0) =1, \label{GE0}\\
 & G_{\!E}^{\prime}(Q^2) \equiv \frac{dG_{E}}{dQ^{2}} <0, \label{GE1}\\
 & G_{\!E}^{\prime\prime}(Q^2) \equiv \frac{d^{2}G_{E}}{d(Q^{2})^{2}} >0.\label{GE2} 
\end{align}\label{GEs}


The first equation \eqref{GE0} is model independent since it is directly related to the charge of the proton. The other two equations \eqref{GE1}-\eqref{GE2}, which we will call the shape constraints, are not directed guaranteed by the above definitions, but rather, are deduced from the analytic properties of the form factor, see for example \cite{alarcon2018accurate,alarcon2018nucleon}.

\subsection{Modeling the electric form factor of the proton}
\label{Sec:Modeling}

Having introduced the electric form factor of the proton we now proceed to build a flexible nonparametric model 
that will allow us to extrapolate $G_{E}(Q^{2})$ to $Q^2\!=\!0$. We are interested in studying the impact of the
different constraints displayed in Eq.\eqref{GEs} on the estimation of $\rp$. Hence, we define four model variants 
that will take into account the different combinations of the constraints: 1) cGP: fully constrained model 
$\big($Eq.\eqref{GE0} and \eqref{GE1}-\eqref{GE2}$\big)$; 2) c0GP: constraint at zero $\big($Eq. \eqref{GE0}$\big)$; 3) c1GP: 
shape constraints $\big($Eq. \eqref{GE1}-\eqref{GE2}$\big)$; 4) uGP: unconstrained model $\big($none of the equations in 
Eq. \eqref{GEs} are taken into account$\big)$. 

Our main goal is to incorporate the general constraints given in Eq.\,(\ref{GEs}) into the estimation procedure without making parametric assumptions on the functional form of $G_{E}(Q^{2})$. The available experimental data will guide the shape of such nonparametric curve, ultimately allowing us to estimate $\rp$. To facilitate the implementation of the nonparametric approach, we assume without loss of generality that the ``basis functions" (see \ref{basis_subs}) employed to model the $G_E(Q^2)$ curve are defined in the closed interval $[0,1]$. We select a maximum value of $Q^2$, $Q^2_{\rm max}$, up to where the analysis is performed, a selection that has been shown to impact the estimation of $r_p$. Once the momentum-transfer range has been selected, $0\!\le\!Q^{2}\!\le\!Q_{\rm max}^{2}$, we define the dimensionless scaled variable $x$ as $x\!=\!Q^{2}/Q_{\rm max}^{2}$.

We note that although the condition $G_E(0)=1$ \eqref{GE0} is ultimately related to the charge of the proton, experimental systematic errors can have an appreciable impact on the fulfillment of this constraint in the obtained data. It has become a customary practice (see for example \cite{Yan:2018bez}) to represent the observed values as $f(Q^2)=n_0G_E(Q^2)$, where $n_0$ is a floating normalization parameter, $f(Q^2)$ are the observed values and $G_E(Q^2)$ is the true proton form factor. We can identify in our framework the choice $n_0=1$ with the requirement that our model estimate for the form factor has the fixed value of $1$ at $Q^2=0$ (cGP and c0GP). 
Instead, leaving $n_0$ as an adjustable parameter corresponds to the cases c1GP and uGP.

In the following sections we describe in detail the construction of the fully constrained model cGP, pointing out the possible differences that might be taken into account for the construction of the other three. Most of the details regarding this matter are shown in the Appendix \ref{ssec:variant}.

\subsection{Approximating $G_E$: basis construction}\label{basis_subs}
We start be defining a working grid formed by a collection of $N\!+\!1$ equally spaced points $x_j\!=\!j/N$ in the closed 
interval $[0, 1]$. We adopt the notation of \cite{maatouk2017gaussian} to define a set of basis functions:

\begin{equation}\label{eq:basis}
 h_j(x) = 
  \begin{cases} 
    1-N|x-x_j|, & \text{if }  |x-x_j|\leq 1/N;\\
   0,       & \text{otherwise}.
  \end{cases}
\end{equation}

It is particularly convenient to also define the corresponding integrals of $h_{j}(x)$ as follows:

\begin{align}
 \psi_j (x) & = \int_0^x h_j(t)\,dt, \label{eq:psi} \\ 
 \phi_j (x) & = \int_0^x dt \int_0^t h_j(s)\,ds.
 \label{PsiPhi}
\end{align}

Although analytic expressions for both $\psi_j (x)$ and $\phi_j (x)$ are readily available, it is more illuminating to 
display their behavior in pictorial form, as in Fig.\,\ref{BasisFunction}(a). 
The basis functions $h_j(x)$ can be used to approximate any continuous function $f(x)$ by linearly interpolating between 
the grid points. That is,

\begin{equation}
 f(x) \approx \sum_{j=0}^N f(x_j) h_j(x).
\end{equation}

To illustrate the quality of the approximation, we used a grid of size $N\!=\!10$ to display in Fig.\,\ref{BasisFunction}(b) 
the results for a dipole function of the form:

\begin{equation}\label{Dipole_Def}
 f(x) = \left(1+\frac{\bar{r}_{p}^{2}x}{12}\right)^{-2}, 
\end{equation}

where $x\!=\!Q^{2}/Q_{\rm max}^{2}$, $Q_{\rm max}\!=\!25.01\,{\rm fm}^{-1}$, and $\bar{r}_{p}\!=\!r_{p}Q_{\rm max}\!=\!4.21$. The apex of each triangle, namely, the scale factor multiplying each basis function $h_j(x)$, is the value of the dipole function 
at the j$_{\rm th}$ grid point, or $f(x_j)$. The approximation is so accurate that the underlying exact dipole function (shown in
red) is difficult to discern. As we show next, for the purpose of extracting the proton radius it is better not to approximate directly 
the electric form factor $G_{E}(Q^{2})$ using the basis functions $h_{j}(x)$, but rather the smoother set of related functions 
$\phi_{j}(x)$ defined in Eq.\,(\ref{PsiPhi}). To do so, we invoke the fundamental theorem of calculus for any twice differentiable 
function $f(x)$ defined on the closed interval $[0, 1]$. That is,
\begin{align}
 f(x) = f(0) + x f'(0) +\int_0^x dt \int_0^t f''(s) ds.
\end{align} 
If we now approximate $f''\!(s)$ under the integral sign using the basis functions $h_j(x)$ we obtain:
\begin{align} 
 f(x) \approx f(0) + x f'(0) + \sum_{j=0}^{N} f{''}(x_j) \int_0^x dt \int_0^t  h_j(s)ds =
 f(0) + x f'(0) + \sum_{j=0}^{N} f''(x_j) \, \phi_j(x).
 \label{m:approx}
\end{align}

\begin{figure}
 \includegraphics[width=0.8 \textwidth]{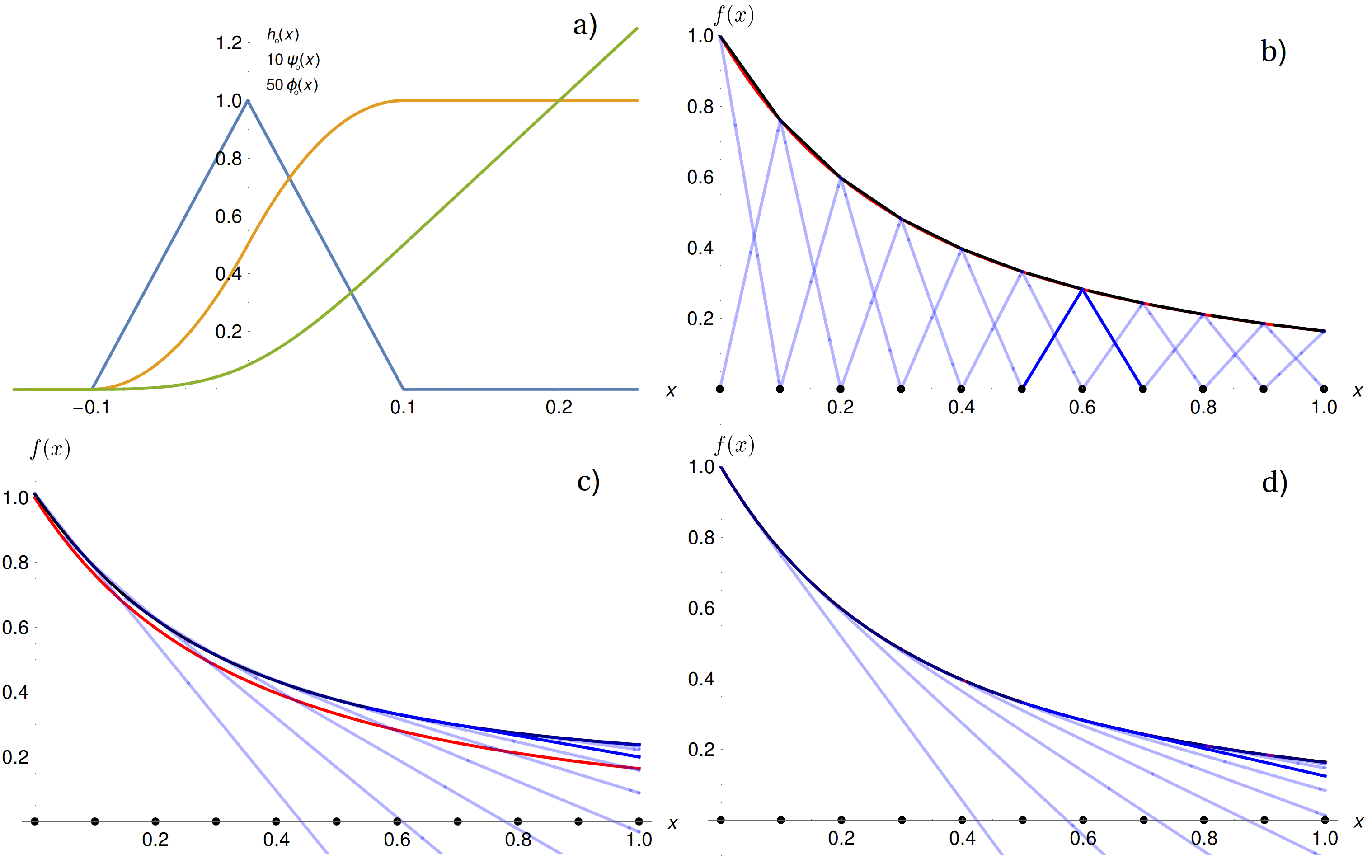}
  \centering
  \caption{(a) Functions $h_0(x)$, $\psi_0 (x)$ and $\phi_0 (x)$ for $N=10$. The functions $\psi_0$ and $\phi_0$ have been rescaled by a factor of 10 and 50 respectively. (b) Approximation (black) of the dipole function (Red) using the basis functions $h_j(x)$ (blue) on 11 gridpoints between 0 and 1 (black dots). The function $h_7(x)$, which is centered at $x=0.6$, is highlighted to illustrate its ``spike'' form. The approximation matches the function so well that the true red curve is hard to see. (c)-(d) Approximation (black) of the same dipole function (red) using the basis functions $\phi_j$ as in Eq.\,\eqref{m:approx} (c), and Eq.\,\eqref{eq:model1} (d). In both cases, the functions $\phi_j$ are plotted starting from a neighboring of their respective grid point and matching their value and their slope at the grid point with the complete approximation (black curve). In both cases the function $\phi_7(x)$ is highlighted. When the coefficients of the $\phi_j$ functions are fitted instead of matched to second derivatives (d) the red curve is hard to see again.}
  \label{BasisFunction}
\end{figure}

This approximation to the exact dipole is shown in Fig.\,\ref{BasisFunction}(c) together with the underlying behavior of
$\phi_j(x)$. In this case the approximation to the exact dipole is not as accurate as in Fig.\,\ref{BasisFunction}(b). However, 
in a regression problem neither the function, nor its derivative at $x\!=\!0$, nor the values of all second derivatives may be
known. Hence, we characterize our regression model in terms of $(N\!+\!3)$ free parameters $\xi_{j}$ that will be obtained 
from a suitable fit to the experimental data. That is,
\begin{align}
 f(x) \equiv f_{\xi}(x) \approx \xi_1 + \xi_2 \, x + \sum_{j=0}^N \xi_{j+3} \, \phi_j(x).
\label{eq:model1}
\end{align}
As displayed in Fig.\,\ref{BasisFunction}(d), once this scheme is adopted, the agreement with the real dipole function 
becomes excellent. Clearly, one great advantage of Eq.\,(\ref{eq:model1}) is that values for the floating normalization
and mean-square radius are directly encoded in $\xi_{1}$ and $\xi_{2}$. Moreover, this approximation has a nice 
physical underpinning. If we regard $f(t)$ as the one-dimensional trajectory of a particle as a function of time $t$, then 
the approximation:
\begin{align}\label{model_eq2}
f(t) \approx f(0) + t f'(0) + \sum_{j=0}^{N} f''(t_j) \, \phi_j(t),
\end{align}
may be explained as follows. At time $t=0$ the particle starts at a position $f(0)$ with an initial velocity $f'(0)$. As time evolves, corrections to the straight-line 
trajectory are implemented by the different $\phi_j$ in proportion to $f''(t_j)$, that can be thought as ``acceleration spikes'' 
that stir the particle into the correct trajectory. We now proceed to discuss how the various constraints are incorporated 
into our modeling framework.

\subsection{Incorporating full constraints}

The great virtue of the non-parametric approach adopted here is that no assumption is made about the functional form 
of the electric form factor. However, if the calibration parameters $\xi_{j}$ defined in Eq.\,(\ref{eq:model1}) are left 
unrestricted, the resulting model for $f_{\xi}(x)$ is likely to violate the physical constraints outlined in Eq.\,(\ref{GEs}). 
In the notation assumed in this section these constraints are given by: (a) $f_{\xi}(0)\!=\!1$, (b) $f'_{\xi}(x)\!<\!0$, and
(c) $f''_{\xi}(x)\!>\!0$. In this section we discuss the model formulation with all the constraints. We have shown in Appendix\,\ref{ssec:theory} that in order to satisfy these constraints the model parameters must obey the following linear relations:
\begin{align}
 &\xi_1 =1,  \label{cons1} \\
 & \xi_2 + \sum_{j=0}^N c_j \, \xi_{j+3} \le0,  \label{cons2} \\
 &\xi_{j+3} \ge 0, \ \text{for}\ j = 0, 1, \ldots, N. \label{cons3}
\end{align}\label{eq:cons_set}

where $c_j\!=\!\psi_j (1)$ is the area under the triangle formed by the function $h_j(x)$, except for the first one $c_{{}_{0}}$
and last one $c_{{}_{N}}$ which are equal to half the area of the triangle. In order to incorporate the constraints in Eq.\,\eqref{eq:cons_set} we define the following set:

\begin{align}
\m C_{\xi} \equiv \bigg\{\xi\in \mathbb{R}^{N+2}: ~ \xi_2 + \sum_{j=0}^N c_j \ \xi_{j+3}\le0, \ \xi_{j+3} \ge 0, \ j = 0, \dots, N \bigg\},
\end{align}\label{cset}

in other words, the list $\xi =\{\xi_{2},\ \xi_{3},\ldots ,\ \xi_{N+3}\}$ belongs to $\m C_{\xi}$ if the $\xi_j$ satisfies the relationships \eqref{eq:cons_set}. 

~The proton radius introduced in Eq.\,(\ref{PRadius}) is expressed directly in terms of $\xi_{2}$ as:
\begin{align*}
r_{p} = \frac{\sqrt{{-6\xi_{2}}}}{Q_{\max}}, 
\end{align*}
where $Q_{\max}$ enters to account for the rescaling of $Q^{2}$ into the dimensionless variable $x\!=\!Q^{2}/Q_{\rm max}^{2}$. 

Note that the value of $\xi_1$ is fixed at $1$ and $r_p$ only depends on the value of $\xi_2$ in the constraint set $\m C_{\xi}$. We provided a detailed discussion on a partially constrained model with the condition $\xi_1=1$ removed in Appendix.\,\ref{ssec:variant}. The rest of the discussion in the following sections obeys a fully constrained model.

\subsection{Probabilistic model for fully constrained function estimation}
The observed experimental data consists of $n$ pairs of the form $(x_{i},g_{i})$, where $x_{i}\!=\!Q_{i}^{2}/Q_{\rm max}^{2}$ 
and $g_{i}$ is equal to the form factor $G_E(Q_i^2)$ up to some experimental noise. Specifically, one assumes 
that the $n$ experimental measurements $g_{i}$ have normally distributed experimental errors $\epsilon_i$. That is, 
$g_{i}\!=\!G_E(Q^2_i)\!+\!\epsilon_i$, where we assume that each $\epsilon_i$ is a normally distributed variable with zero mean and standard 
deviation $\sigma$.

Let $Y = (y_1, \ldots, y_n)^\T$ with $y_i :\,= g_i - \xi_1 =g_i - 1$ (the subtraction of the independent term $\xi_1$ is made in order to build an homogeneous matrix equation), 
and set $\varepsilon = (\epsilon_1, \ldots, \epsilon_n)^\T$. Also, define a basis matrix $\Phi$ (a $n \times (N+2)$ matrix) with $i$th row $(x_i, \phi_0(x_i), \ldots, \phi_N(x_i))$. With these ingredients, we express our model in vectorized notation as:
\begin{align}\label{eq:model_vect}
Y = \Phi \xi + \varepsilon, \quad \varepsilon \sim \m N_n(0, \sigma^2 \mr I_n), \quad \xi \in \m C_{\xi},
\end{align}
where $\m C_{\xi}$ is defined in Eq.\,\eqref{cset}. The notation $v \sim \m N_n(\mu, \Sigma)$ means that the random variable $v$ follows a multivariate Gaussian distribution with mean $\mu$ and covariance matrix $\Sigma$.

We operate in a Bayesian framework \cite{gelman2014bayesian} and express pre-experimental uncertainty in $\xi$ through a prior distribution $P(\xi)$. The prior for $\xi$ is combined with the data likelihood $P(Y|\xi)$ to obtain the posterior distribution for $\xi$ given the observed values $Y$:

\begin{equation}
 P(\xi|Y) = \frac{P(Y|\xi) P(\xi)}{P(Y)}.
\end{equation}
This posterior distribution of the parameters $P(\xi|Y)$ can then be used to make inference on $r_p$ including point estimates and uncertainty quantification through credible intervals. Since we assume Gaussian distributed noise $\varepsilon_i$ for the observational points $y_i$, our likelihood term $P(Y|\xi)$ will be of the form $Y \sim \m N_n(\Phi \xi, \sigma^2 \mr I_n)$, which represents an exponential decay in the square of the difference between our observed data and our model prediction, usually denoted by $\chi^2$ and defined as: $\chi^2=\sum_1^n (Y_i-f_\xi(x_i))^2$. The choice of a suitable prior $P(\xi)$ is critical for a valid inference on $r_p$. It is evident from Eq.\,\eqref{m:approx} and Eq.\,\eqref{eq:model1} that a flexible representation for $f$ can be reproduced through the coefficients $\xi$ which is in turn relatable to $f$ through its derivatives. In the unconstrained setting, a natural choice of prior for $\xi$ can be induced through a Gaussian process prior on $f$. On the other hand, the prior for $\xi$ should be supported on the restricted space $\m C_\xi$ so that any prior draw obeys the constraints for $\xi$. We combine these two features to propose a flexible constrained Gaussian prior for $\xi$ and describe this procedure in the following subsection.

\subsection{Prior specification: Gaussian Process}
A Gaussian process (GP)\,\cite{rasmussen2004gaussian} is a distribution of functions on the functions space such that the collection of random 
variables obtained by evaluating the random function at a finite set of points is multivariate Gaussian. A GP is completely defined by 
a mean function $\mu(x)$ and a covariance function $K(x,x')$. Therefore, any finite collection of points 
$y_1(x_1),...y_N(x_N)$ at locations $x_1,...,x_N$ has a joint Gaussian distribution given by:
\begin{equation}
 \Big(y_1(x_1),\dots,y_N(x_N)\Big)\sim \m N (\mu,\Sigma),
 \label{eq:distr}
\end{equation}
where $\mu=\big(\mu(x_1),\ldots,\mu(x_N)\big)$ and $\Sigma_{ij}=\tau^2 K(x_i,x_j)$. Intuitively, one can think that the mean function 
represents a central value at each $x$ around which we expect our observations to be. The deviation of these observations from the 
mean function is controlled by the parameter $\tau$. In turn, the covariance function $K(x_i,x_j)$ controls the correlation between the 
observed deviations at different points $x_i$ and $x_j$. We use the notation $f|X \!\sim\!\mbox{GP}(\mu(X), \tau^2 K(X, X'))$ to denote that 
the function $f$ follows a Gaussian process with mean function $\mu$ and covariance function $\tau^2K$. As is commonly 
done\,\cite{jeffreys1946invariant} we have placed an (improper) objective prior on $\tau^2$. For a more detailed explanation 
on Gaussian Processes see \,\cite{rasmussen2004gaussian}.

The model parameters $\xi_{j}$ are related to first and second derivatives of the form factor $G_E$, or equivalently to its rescaled version 
$f$ at the various grid points $x_j$. Since Gaussian processes are closed under linear operations, such as taking 
derivatives\,\cite{rasmussen2004gaussian}, they represent an optimal choice in estimating the form factor. If 
$f\!\sim\!\mbox{GP}(0, \tau^2 K)$\footnote{The selection $\mu(x)=0$ is done to avoid centering the GP around any parametric 
form.}, then any finite number of observations $f(x_1),...,f(x_N)$ follow the distribution specified by Eq.\,(\ref{eq:distr}). Therefore, a 
collection of random variables that involves derivatives $f'(0),f''(x_0)...,f''(x_N)$ also follow a Gaussian distribution with a covariance 
matrix $\Gamma$ that involves up to four mixed partial derivatives of the covariance function $K(x,x')$; 
see Theorem 2.2.2 in\,\cite{adler1981geometry}. That is,
\begin{eqnarray}\label{eq:cov}
\Gamma = \begin{bmatrix}
 \frac{\partial^2 K}{\partial x \partial x'}(0,0)&\frac{\partial^3 K}{\partial x \partial {x'}^2}(0,x_0) & \cdots & \frac{\partial^3 K}{\partial x \partial {x'}^2}(0,x_N)\\[1.5ex]
\frac{\partial^3 K}{\partial x^2 \partial x'}(x_0,0)& \frac{\partial^4 K}{\partial x^2 \partial {x'}^2}(x_0,x_0) & \cdots &\frac{\partial^4 K}{\partial x^2 \partial {x'}^2}(x_0,x_N)\\
\vdots &\vdots &\ddots&\vdots \\
\frac{\partial^3 K}{\partial x^2 \partial x'}(x_N,0)&\frac{\partial^4 K}{\partial x^2 \partial {x'}^2}(x_N,x_0)&\cdots & \frac{\partial^4 K}{\partial x^2 \partial {x'}^2}(x_N,x_N)\\
\end{bmatrix}_{(N+2)\times(N+2)}.
\end{eqnarray}
For illustration purposes consider the first row of the matrix $\Gamma$. It specifies how the derivative of the function at zero, $\xi_2$, correlates with all the other $\xi_j$. The correlation between $\xi_2$ and the other $\xi_j$ for $j\!>\!2$ is controlled by the mixed partial third derivative of $K$ at $x_j$. 

If the model parameters $\xi_{j}$ are left unconstrained, then a natural prior, induced from a GP prior on the unknown function $f$, would be a 
finite-dimensional Gaussian prior $\xi\!\sim\!\m N_{N+2}(0, \tau^2 \, \Gamma)$ with $\Gamma$ as in Eq.\,\eqref{eq:cov}. However, since the 
various shape constraints on the function impose a corresponding set of constraints on the model parameters, we adopted a truncated Gaussian prior on $\xi$: 
\begin{align*}
p(\xi) = \frac{1}{M_\xi} \, (2\pi)^{-(N+2)/2} \, |\Gamma|^{-1/2} \, (\tau^2)^{-(N+2)/2} \, e^{- \frac{\xi^T \Gamma^{-1} \xi}{2 \tau^2}} \, \ind_{\m C_{\xi}}(\xi),
\end{align*}
where the ``indicator function'' $\ind_{\m C_{\xi}}(\xi)$ filters the $\xi_j$ such that only the allowed combinations are those that satisfy the constraints listed in Eq.\,(\ref{eq:cons_set}): $\ind_{\m C_{\xi}}(\xi)=1$ if $\xi \in C_{\xi}$, and $\ind_{\m C_{\xi}}(\xi)=0$ otherwise.
In the above expression $M_\xi$ is a constant of proportionality required to make $p(\xi)$ a density distribution, {\sl i.e.,} $p(\xi)$ must integrate to 
one. We shall denote $p(\xi)$ by $\m{N}_{N+2}(0, \tau^2 \, \Gamma)\ind_{\m C_{\xi}}(\xi)$ and refer to it as the constrained Gaussian Process (cGP) prior for $\xi$.

To fully specify the cGP prior 
we still need to define the covariance function $K(x,x')$ that determines the matrix $\Gamma$. Following common practice, 
we chose $K$ to be a stationary Mat{\'e}rn kernel with smoothness parameter $\nu\!=\!5/2$ and length-scale $\ell\!>\!0$. Such a kernel only depends on the relative distance between the 
coordinates $r\!\equiv\!|x-x'|$ and can be written in closed form as follows:
\begin{align*}
\mathrm{K}(x,x') \equiv \mathrm{k}_{\nu=5/2, \ell}(r) = \bigg(1 + \frac{\sqrt{5} \, r}{\ell} + \frac{5 r^2}{3 \ell^2} \bigg) \, \exp\bigg(- \frac{\sqrt{5} \, r}{\ell} \bigg). 
\end{align*}
In our analysis we also explored the values $\nu=3$ and $\nu=7/2$. The more general definition for the Mat{\'e}rn kernel is shown in the Appendix \ref{ssec:hyp}. The optimal value for the correlation length $\ell$ is chosen by a cross-validation scheme outlined also in the Appendix.\,\ref{ssec:hyp}.

\subsection{Posterior sampling and inference}
Given the complex nature of the model space associated with the allowed values of $\xi$, an analytic expression of $M_{\xi}$ is not available. 
However, we show in Appendix\,\ref{ssec:theory} that $M_{\xi}$ does not depend on the unknown parameter $\tau$. Hence, provided $\Gamma$ is fixed, one can exploit this fact and use a Markov Chain Monte Carlo (MCMC) algorithm to sample the posterior distribution. The model along with priors on various components are represented in a hierarchical fashion as follows:
\begin{align*}
& Y \mid \xi, \sigma^2, \tau^2 \sim \m N_n(\Phi \xi, \sigma^2 \mr I_n), \\
& \xi \sim \m N_{N+2}(\xi; 0, \tau^2 \, \Gamma) \, \ind_{\m C_{\xi}}(\xi), \quad p(\tau^2) \propto \frac{1}{\tau^2}, \quad p(\sigma^2) \propto \frac{1}{\sigma^2},
\end{align*}
in which we have made the common non-informative prior choice for the observational noise standard deviation $\sigma^2$. 
For the hierarchical model above, the joint posterior distribution of the model parameters is given by: 
\begin{align}\label{eq:posterior}
P(\xi, \tau^2, \sigma^2 \mid Y) \propto \bigg\{ (\sigma^2)^{-n/2} \, e^{-\frac{ \|Y - \Phi \xi \|^2}{2 \sigma^2}} \bigg\} \ \bigg\{ (\tau^2)^{-(N+2)/2} e^{- \xi^\T \Gamma^{-1} \xi/(2 \tau^2)} \, \ind_{\m C_{\xi}}(\xi) \bigg\} \ (\tau^2)^{-1} \, (\sigma^2)^{-1}.
\end{align} 

The final normalizing constant of the posterior distribution is intractable and hence we resort to MCMC algorithm \cite{gelman2014bayesian} to sample from the posterior distribution of the model parameters. More specifically, we use Gibbs sampling to iteratively sample from the full conditional distribution of (i) $\xi \mid \tau^2, \sigma^2, Y$ \footnote{Recall that in Bayesian notation $\xi \mid \tau^2, \sigma^2, Y$ means the posterior distribution of $\xi$ given $ \tau^2, \sigma^2,$ and $Y$.}, (ii) $\tau^2 \mid \xi, \sigma^2, Y$, and (iii) $\sigma^2 \mid \xi, \tau^2, Y$. The conditional posterior of $\xi$ in (i) is a truncated multivariate normal distribution which is sampled using the method proposed in \cite{botev2017normal}.
~The conditional posteriors of $\sigma^2$ and $\tau^2$ in (ii) and (iii) are inverse-gamma distributions (IG) and hence easy to sample from. The details of the algorithm are provided in the Appendix \ref{ssec:hyp}. 

After discarding initial burn-in samples, let $\xi_j^{(1)}, \ldots, \xi_j^{(T)}$ be $T$ successive iterate values of $\xi_j$ from the Gibbs sampling algorithm, for $j = 2,\dots, N+3$. Our point estimates for $r_p$ based on the posterior samples are:
\begin{align}\label{eq:est1}
\widehat{r}_p = T^{-1} \sum_{t=1}^T \frac{\sqrt{- 6 \xi_2^{(t)}}}{Q_{\max}}. 
\end{align}
The confidence interval of $95 \%$ for $r_p$ is also computable from our sampling algorithm. We shall denote the lower bound of this interval by CI$_\mathrm{l}$ (the 2.5 \% quantile) and the upper bound of this interval by CI$_\mathrm{u}$ (the 97.5 \% quantile).

\section{Pseudo-data analysis}\label{sec:sims}
Before analyzing the real data, we test the GP methods on synthetically generated datasets. The details of this analysis can be found in Appendix \ref{DetailsPseudo}. A general guidance on the prior and hyperparameter choices can be found in Appendix\, \ref{ssec:hyp}. Recall that we defined four variants of the method proposed to understand the role of each constraint, which can be described as follows in terms of the $\xi$:
\begin{enumerate}
\item {\bf cGP:} denotes the proposed constrained GP model as described in Eq.\,\eqref{eq:model_vect}. The curve is restricted to be convex and the value at $Q^2=0$ is fixed at $1$ ($\xi_1 =1$). 
\item {\bf c$_0$GP:} denotes the model in Eq.\,\eqref{eq:model_vect} with the only constraint being \eqref{GE0}, the value at zero ($\xi_1 =1$). The parameters $\xi_2, \ldots, \xi_{N+3}$ are left unconstrained in this model and therefore the curve is not necessarily monotonic and convex. 
\item {\bf c$_1$GP:} denotes the model with only shape constraints \eqref{GE1} and \eqref{GE2}, which implies that the function is non-increasing and convex, but the value at zero is not fixed ($\xi_1$ is left unconstrained). 
\item {\bf uGP:} denotes the completely unconstrained GP, all the parameters $\xi_1,\xi_2, \ldots, \xi_{N+3}$ are free. 
\end{enumerate}
Note that since for cGP and c$_0$GP we fix $\xi_1\!=\!1$, we use $\widehat{r}_p$ in Eq.\,\eqref{eq:est1} to estimate the proton radius, while for c$_1$GP and uGP we use $\widetilde{r}_p$ defined in Eq.\,\eqref{eq:est_c1} as our estimator. 

To mimic the real dataset, we use the $Q^2$ from the electron-proton scattering data obtained from Mainz \cite{bernauer2011high,bernauer2014electric,bernauer2011bernauer}, and generate the pseudo $G_E$ data using the ``Dipole function'' given by:
\begin{eqnarray*}
G_E(Q^2) = \bigg( 1+ \frac{r_p^2Q^2}{12}\bigg)^{-2},
\end{eqnarray*}
$r_p$ being the pseudo-radius of the proton. The ``Dipole function'' is a good proxy for the electric form factor equation and would serve as the ground truth for conducting the simulation study. In the following numerical examples we set $r_p = 0.84$ fm. We extract $n = 500$ sample points of $Q^2$ from the Mainz dataset in three regimes: i) low $Q^2 (\leq 1.36$ fm$^{-2})$, ii) medium $Q^2 (\leq 4.85$ fm$^{-2})$ and iii) high $Q^2 (\leq 25.12$ fm$^{-2})$. To generate noisy observations of $G_E$ we add independent and identically distributed zero-mean Gaussian noise with standard deviations in the set $\{0$, $0.002$, $0.005$, $0.01\}$, where $\sigma = 0$ means no noise is added to $G_E$. The interval for $\sigma$ is chosen to contain the typical observed errors in the Mainz data \cite{bernauer2011high,bernauer2014electric,bernauer2011bernauer}. The four models are used with smoothness parameter $\nu = 2.5$. Following \cite{maatouk2017gaussian}, the number of basis functions is set to $N = \{n/4, n/2, n\}$. To select the optimal \textit{length-scale} parameter $\ell$ we developed a cross validation procedure by analyzing the MSE (Mean Squared Error) as a function of $\ell$, see Appendix \ref{DetailsPseudo} for the implementation details. The selected optimal values for $l$ are $l=20$ for both low and medium regime, and $l=1$ for the high regime. Since the scale for $\ell$ is on the re-scaled variable $x$, $\ell \gg 1$ can be interpreted as an indicative that the whole range of $Q^2$ considered is highly correlated.

Tables \ref{tab:dipole_I}-\ref{tab:dipole_III} in the Appendix \ref{DetailsPseudo} show the posterior summaries of the estimates of the radius $r_p$ and $95\%$ credible intervals, the lower and upper bound denoted respectively by CI$_\mathrm{l}$ and CI$_\mathrm{u}$. Since we know the generated $G_E$ values as well as the generated radius $r_p$, we are able to evaluate the results of the GP methods with different constraints and in different regimes. We observed that in all three regimes using a smaller number of the basis functions ($N$) lead to smaller values of MSEs on the 20\% held out $Q^2$ values. 

We found that in the presence of noise, imposing all the constraints (cGP) reduces the uncertainty in the estimation while maintaining accuracy, while only imposing the constraint at zero (c0GP), tends to give accurate results but with wider credible intervals. If we only consider the shape constraints (c1GP) the estimates of the radius are somewhat variable as the noise level increases, becoming more biased for the higher $Q^2$ regimes. The unconstrained model (uGP) leads to the widest credible intervals in general and reasonably good estimates when the noise level is small. Comparing results across different regimes, we found that in the medium and high $Q^2$ regimes our methods tend to give lower estimates at the origin as the noise level increases. The trend of obtaining lower estimates of $r_p$ in higher regimes could be caused by many reasons. One could be that the model is able only to borrow information from one side when estimating over the boundary (the origin), but the model hyperparameters are selected according to the overall model fitting. We give a more detail explanation in favor of this argument in the Appendix \ref{DetailsPseudo} and we shall explore this trend in a future work.

We conclude that all the physical constraints are necessary for providing a realistic estimate of the radius. It is also evident that the low $Q^2$ regime data informs about the radius more reliably than the high $Q^2$ regime data, at least under the assumption of additive independent and identically distributed errors and the fulfillment of all the constrains \eqref{GE0}, \eqref{GE1} and \eqref{GE2} by the data. However in the real-data scenario with unknown errors, and possibly with some violation of the constraints, specially of the first one \eqref{GE0}, it might be important to consider the full dataset to take into account all sources of variation in the analysis.

\section{Electron-scattering data analysis}\label{sec:real}

In this section, we re-analyze the electron-proton scattering data obtained from Mainz \cite{bernauer2011high,bernauer2014electric,bernauer2011bernauer}. We conducted the analysis in two regimes: low $Q^2<1.36$ fm$^{-2}$ (the first 500 data points) and high $Q^2<25.12$ fm$^{-2}$ (the full data set). The low regime was chosen based on the results in the pseudo-data analysis in which we observed that in this range the models gave a more accurate estimate of the slope of the assumed Dipole function. On the other hand, even though in the high regime we observed some biasing toward lower estimates of the slope, we considered also the full data analysis. It is well known that due to the difficulty of measuring the form factor for smaller values of the momentum, the experimental data might be significantly biased for $Q^2 \approx 0$ and also the noise structure could not satisfy the assumptions we made on the pseudo data analysis: it could not be independent and identically distributed and all the points might not share the same variance. Thus, incorporating the whole range of values could help the analysis to overcome that experimental bias. Finally, having the two extremum (low and high regimes) is beneficial for comparison. 

The analysis started with conducting pilot experiments with subsets of the data of size $n = 250$ randomly selected from the range of the potential values ($Q^2$) for the high regime, and with the full 500 points in the low regime. The pilot experiments provided us with a better idea of the roles of the different hyperparameters of our model, $N, \ell$ and $\nu$, before eventually analyzing the full dataset. Recall that the $Q^2$ values are rescaled to $[0,1]$ before the analysis. Overall we used $500$ MCMC iterations after discarding a burning of $100$ samples to form the posterior summary estimates of the radius.

\begin{table}[htbp!]
 \caption{High regime posterior estimates of the radius and credible interval for cGP, c$_0$GP, c$_1$GP and uGP with $N =\{n/4, n\}$ and $\nu = \{2.5,3,3.5\}$ and $n = 1422$.} \small
 \begin{center}
  \begin{tabular}{cclcccccc} \hline
  $\bf{ \nu}$ & & {$2.5$} &{$2.5$} & {$3$} & {$3$} &{$3.5$} & {$3.5$} \\ 

   \text{$N$} & & n/4 & n  & n/4 & n & n/4 & n \\ \hline

  cGP  & $r_p$ &0.8435 & 0.8452 & 0.8413 & 0.8431 & 0.8425 & 0.8408 \\
   & CI$_\mathrm{l}$ &0.8396 &0.8426 &0.8265 & 0.8406 &0.8301 &0.8383 \\
   & CI$_\mathrm{u}$ &0.8481 & 0.8476 &0.8524& 0.8457 & 0.8511 & 0.8435 \\ \hline
 
   c$_0$GP & $r_p$ &0.8355 & 0.8448 &0.8226 &0.8431 & 0.8319 &0.8406 \\
  & CI$_\mathrm{l}$ &0.8254 &0.8373 &0.8045 & 0.8328 &0.8167  &0.8305  \\
   & CI$_\mathrm{u}$ &0.8467 &0.8519 &0.8415 & 0.8527 &0.8467 &0.8497  \\ \hline

   c$_1$GP & $r_p$ &0.8423 & 0.8311& 0.8295 & 0.8259 &0.8347 & 0.8225 \\
  & CI$_\mathrm{l}$ &0.8346 & 0.8266 & 0.7993 &0.8217 & 0.8111& 0.8190  \\
   & CI$_\mathrm{u}$ &0.8507 & 0.8369 & 0.8447 & 0.8303 & 0.8461 & 0.8266 \\ \hline
 
 uGP  & $r_p$ & 0.8474 & 0.8577 & 0.7665 & 0.8563 & 0.8253 & 0.8505 \\ 
 & CI$_\mathrm{l}$ &0.8256 & 0.8419 & 0.7374 &0.8398 & 0.7969 & 0.8338 \\ 
 & CI$_\mathrm{u}$ &0.8683 & 0.8742 & 0.7938 & 0.8744 & 0.8530 & 0.8680 \\ \hline 

\end{tabular}%
\end{center}
\label{tab:full}%
\end{table}%

\begin{table}[htbp]
 \caption{Low regime posterior estimates of the radius and credible interval for cGP, c$_0$GP, c$_1$GP and uGP with $N =\{n/4, n\}$ and $\nu = \{2.5,3,3.5\}$ and $n = 500$.} \small
 \begin{center}
  \begin{tabular}{cclcccccc} \hline
    $\bf{ \nu}$ & & {$2.5$} &{$2.5$} & {$3$} & {$3$} &{$3.5$} & {$3.5$} \\ 
   \text{$N$} & & n/4 & n  & n/4 & n & n/4 & n \\ \hline
  cGP & $r_p$ & 0.8529 & 0.8550 & 0.8543 & 0.8561 & 0.8550 & 0.8570 \\
  & CI$_\mathrm{l}$& 0.8488 & 0.8514 & 0.8503 & 0.8529 & 0.8511 & 0.8539 \\
  & CI$_\mathrm{u}$ & 0.8576 & 0.8587 & 0.8591 & 0.8593 & 0.8597 & 0.8601 \\ \hline
  c$_0$GP & $r_p$ & 0.8399 & 0.8408 & 0.8411 & 0.8432 & 0.8399 & 0.8458 \\
  & CI$_\mathrm{l}$ & 0.8213 & 0.8269 & 0.8143 & 0.8309 & 0.8168 & 0.8346 \\
  & CI$_\mathrm{u}$ & 0.8533 & 0.8516 & 0.8598 & 0.8547 & 0.8584 & 0.8556 \\ \hline
   c$_1$GP & $r_p$ & 0.8725 & 0.8719 & 0.8721 & 0.8731 & 0.8735 & 0.8739 \\
   & CI$_\mathrm{l}$ & 0.8613 & 0.8626 & 0.8628 & 0.8660 & 0.8640 & 0.8664 \\
   & CI$_\mathrm{u}$ & 0.8857 & 0.8815 & 0.8820 & 0.8799 & 0.8836 & 0.8820 \\ \hline
  uGP  & $r_p$ & 0.8573 & 0.8618 & 0.8612 & 0.8667 & 0.8593 & 0.8654 \\
  & CI$_\mathrm{l}$ & 0.8212 & 0.8424 & 0.8321 & 0.8467 & 0.8249 & 0.8449 \\
  & CI$_\mathrm{l}$ & 0.8898 & 0.8830 & 0.8897 & 0.8851 & 0.8899 & 0.8845 \\ \hline
\end{tabular}%
\end{center}
\label{tab:low}%
\end{table}%

\begin{figure}[htbp!]
\begin{center}
   \includegraphics[scale = 0.7]{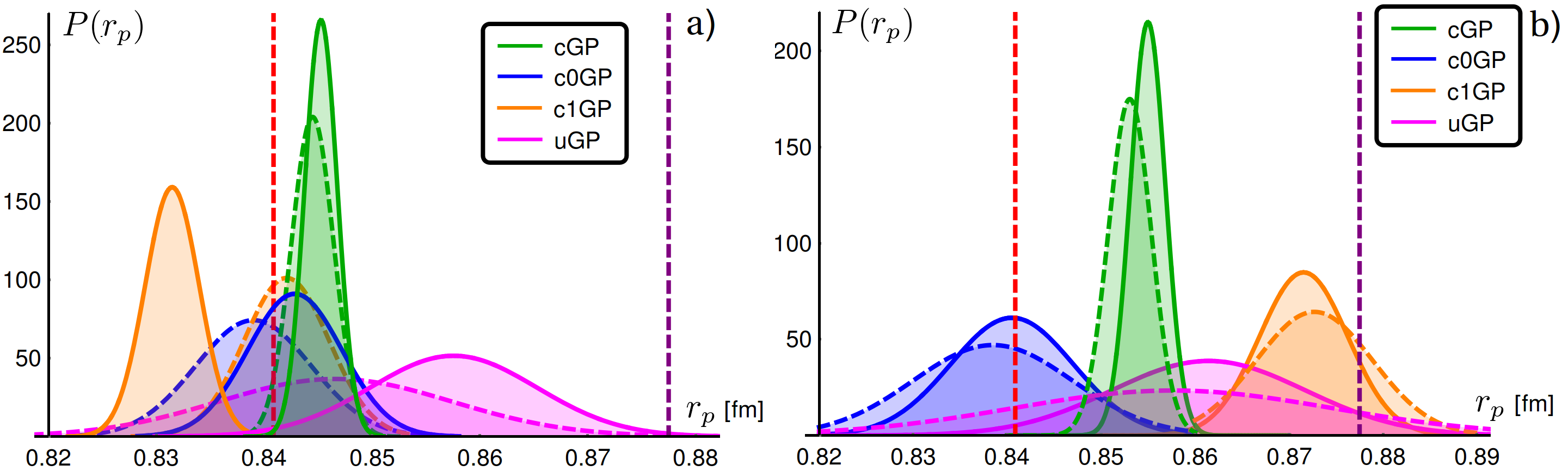}
 \end{center}

\caption{Estimated density plots of MCMC samples of radius $r_p$ for cGP, c$_0$GP, c$_1$GP and uGP with $N =n/4$ (dotted line), $n$ (solid line) and $\nu = 2.5$ for the high $Q^2$ regime (a) and for the low $Q^2$ regime (b). The vertical dashed lines stand for the muonic result of 0.84 fm (Red) and the recommended CODATA value of 0.88 fm (Purple). }
\label{fig:hist_all}
\end{figure}

Similar to the pseudo data analysis, we conducted a cross validation procedure to select the optimal scale-length parameter $\ell$ for each regime, the details of which are shown in the Appendix~\ref{DetailsElec}. Our analysis guided us to choose $\ell_\mathrm{opt}=0.5$ for the full data set and $\ell_\mathrm{opt}=10$ on the low $Q^2$ set. 

Having chosen the correlation length we performed the MCMC iterations for the four models, selecting the number of grid points $N=n/4$ and $N=n$ in order to compare results. Tables \ref{tab:full} and \ref{tab:low} show the posterior medians of $r_p$ of the four models and the $95\%$ credible intervals in the high and low regime respectively. Fig.\,\ref{fig:hist_all} shows the density plots (posterior distribution $P(r_p)$) for all discussed GP models with $\nu=2.5$ and both $N = n/4$ and $N=n$ in the high regime (a) and low regime (b). The detailed histograms for each model in both regimes are shown in the Appendix (Fig.\,\ref{fig:highhist_n4}, \ref{fig:highhist_n} for high regime and Fig.\,\ref{fig:lowhist_n4}, \ref{fig:lowhist_n} for low regime). The function fits for the different models are shown in Fig. \ref{fig:fit_high} and Fig. \ref{fig:fit_low}.

For the high regime we see that the estimates became more sensitive to the choice of the hyperparameters $\nu$ and $N$ as the constrains were removed: cGP estimations of the radius are in all cases around $0.843$ fm, while on the other extremum the unconstrained model uGP estimations range between $0.76$ and $0.85$ fm. Incorporating constraints also strongly affects the credible intervals of each model: cGP credible intervals are between 0.005 and 0.02 fm wide, while uGP intervals can be as wide as 0.06 fm. In respect to the influence of $N$, it seems that for all the models a lower number of grid points produces a lower estimate of the radius, with the exception of c1GP in which $N$ has a reversed effect. Also, the credible intervals tend to get wider for all models when $N$ decreases. Finally, the influence of $\nu$ does not seem to have a clear tendency on the estimation, but its effect get suppressed when the constraints are present. 

For the low regime we can see that overall all the models seem to be more stable for changes in the parameters $\nu$ and $N$ when comparing with the high regime. Also, as expected from the pseudo data analysis, overall all models gave a larger estimates of the radius than the estimates obtained in the full data case, being c1GP the model with the biggest increase. c$_1$GP gave estimates for $r_p$ around $0.87$ fm and the credible intervals included $0.88$ fm in the low regime, a huge difference in comparison with its performance in the high regime in which its estimates were consistently below $0.842$ fm. uGP models gave also slightly larger estimated $r_p$ around $0.86$ fm, however the credible intervals were wide and include both $0.84$ fm and $0.88$ fm in most of the cases. In contrast, it seems c$_0$GP was not affected too much by the change of range in $Q^2$. Among all models, we can see that cGP and c$_0$GP are the most robust to the range of $Q^2$ used.

\begin{figure}[htbp!]
\begin{center}
  \includegraphics[scale = 1]{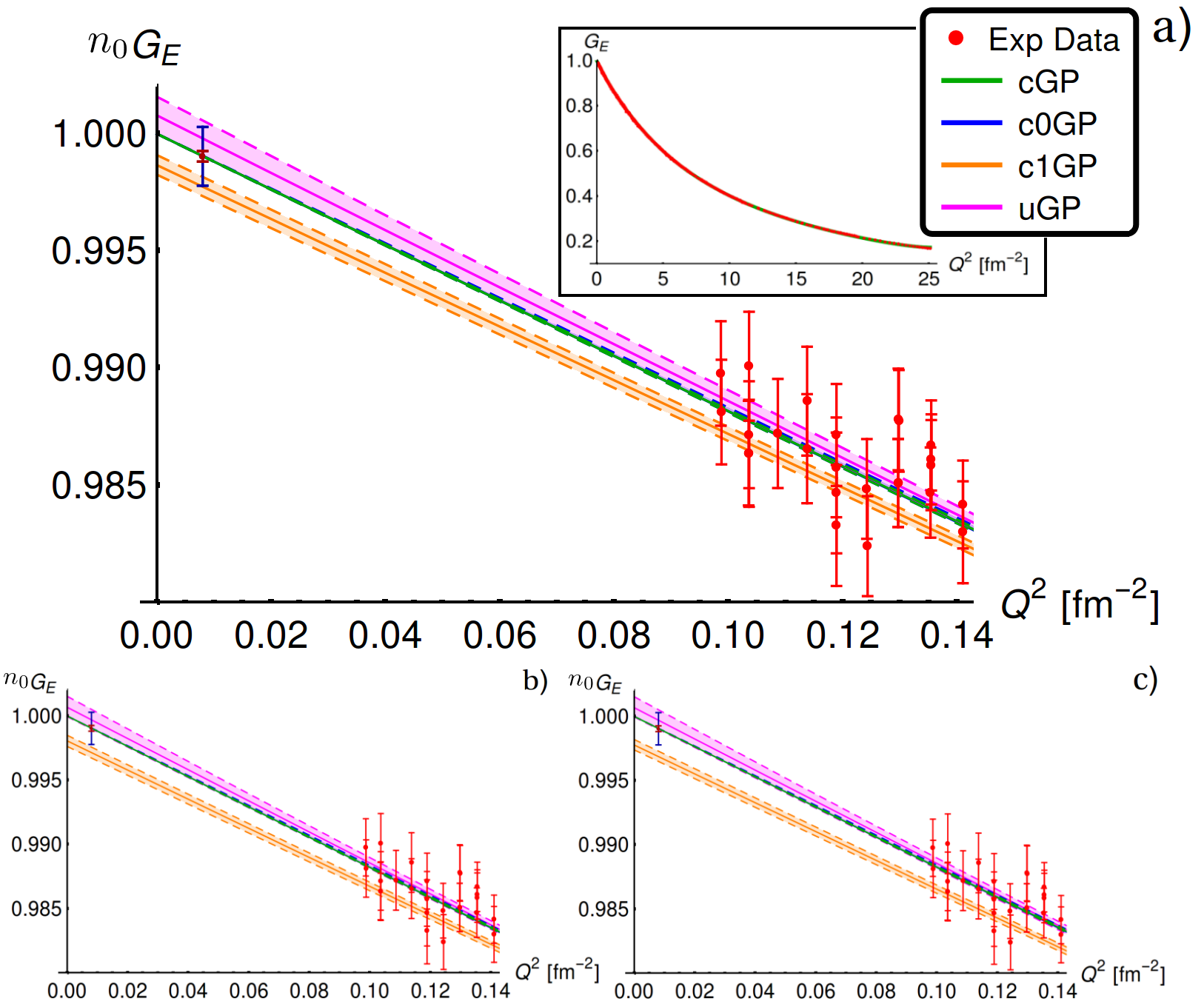}

  \end{center}
\caption{Function fit with $\nu = 2.5$ (a), $3$ (b), $3.5$ (c) and $N=n$ in the high regime. The inset plot in (a) shows the overall fit of the models for $\nu=2.5$ to the entire data range. The solid curves denote the model predictions while the shaded intervals bounded by dotted lines represent the $95\%$ confidence intervals for the predictions. The red dots denote the experimental data obtained from Mainz with its respective error bars. The red and blue points near the origin at $Q^2=0.008 fm^{-2}$ represent the lower value the new PRad experiment will be able to measure, with two different estimates for the projected uncertainty \cite{Gasparian:2017cgp} and arbitrary $G_E(Q^2)$ value.} 
\label{fig:fit_high}
\end{figure}

\begin{figure}[htbp!]
\begin{center}
  \includegraphics[scale = 1]{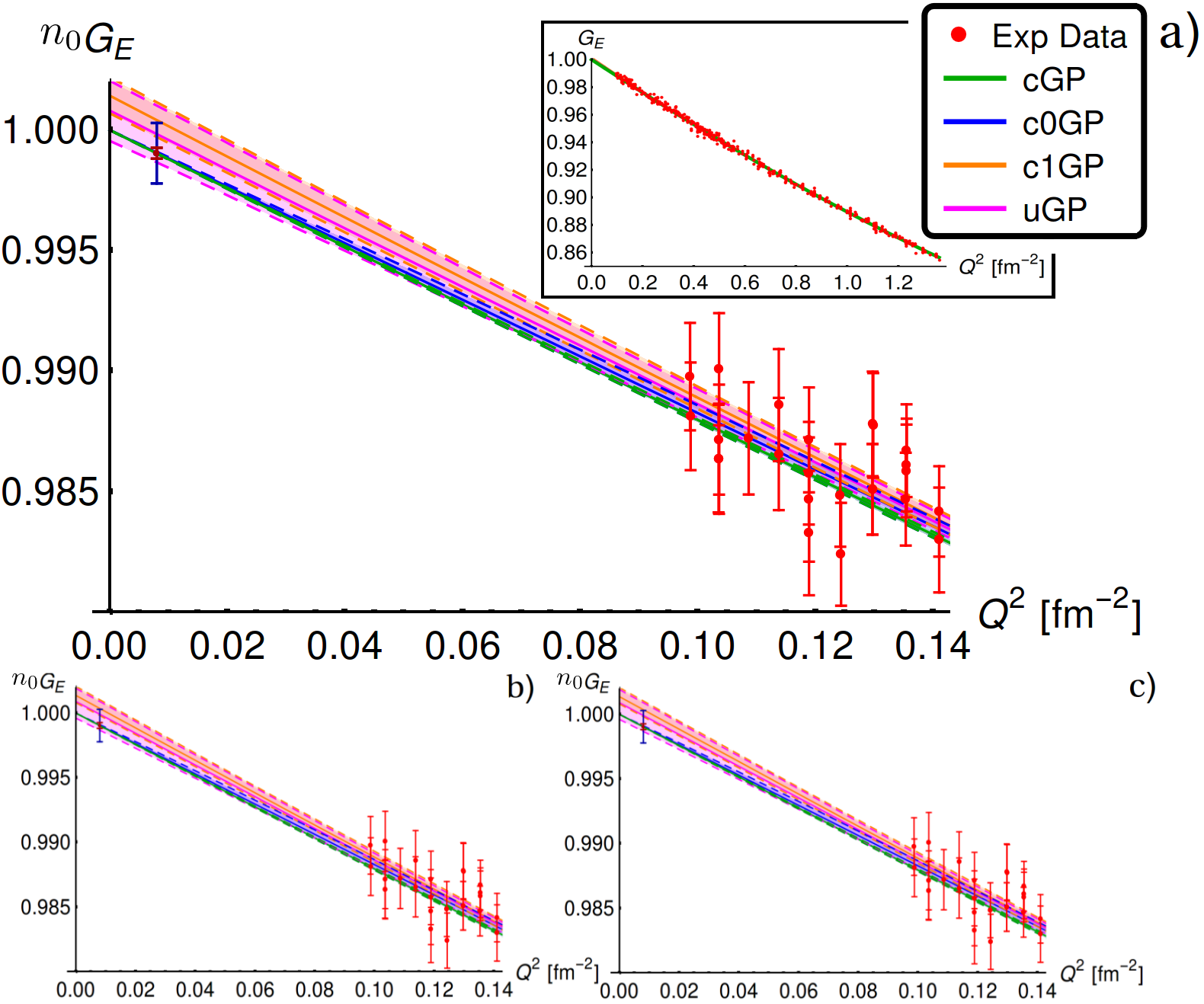}
  \end{center}
\caption{Function fit with $\nu = 2.5$ (a), $3$ (b) $3.5$ (c) and $N=n$ in the low regime. The inset plot of (a) shows the overall fit of the models for $\nu=2.5$ to the entire data range. The solid curves denote the model predictions while the shaded intervals bounded by dotted lines represent the $95\%$ confidence intervals for the predictions. The red dots denote the experimental data obtained from Mainz with its respective error bars. The red and blue points near the origin at $Q^2=0.008 fm^{-2}$ represent the lower value the new PRad experiment will be able to measure, with two different estimates for the projected uncertainty \cite{Gasparian:2017cgp} and arbitrary $G_E(Q^2)$ value.} 
\label{fig:fit_low}
\end{figure}

For the particular choice $\nu=2.5$ we show in Fig.\,\ref{fig:hist_all} the final posterior distribution of $r_p$ of all the models on both regimes and both choices of $N$, and we denote by $P(r_p)$ the posterior density function of $r_p$. As a general trend we can see that as the number of grid points $N$ increases the estimate of c1GP moves to a lower value of $r_p$ while the estimates of all the other models increase to a higher value of $r_p$. This effect is less prominent in the low regime and overall cGP is the most robust with respect to changing the number of grid points. In all the cases, as $N$ increases the variability in the estimation reduces (the estimated $\sigma$ is slightly smaller than those for $N = n/4$), giving more precise results. As we observed in Tables \ref{tab:full} and \ref{tab:low}, in going from high regime to low regime all the models, with the exception of c0GP, gave a larger estimate of the radius, being c1GP the one that showed the biggest change. uGP is the only model that includes both $0.84$ and $0.88$ fm in its support in both regimes. 

Fig.\,\ref{fig:highhist_n4} and \ref{fig:highhist_n} (high regime), and Fig.\,\ref{fig:lowhist_n4} and \ref{fig:lowhist_n} (low regime) in the Appendix \,\ref{DetailsElec} show more in detail each individual posterior histogram of the MCMC samples from GP models for both $N = n/4$ and $N =n$. Fig.\,\ref{fig:highhist_f0} (high regime) and \ref{fig:lowhist_f0} (low regime) in the Appendix \,\ref{DetailsElec} show the MCMC samples of $n_0G_E(0)$ from c$_1$GP and uGP in the cases where $N = n/4$ and $N=n$. Recall that $n_0$ is defined as a floating normalization factor, while $G_E(0)$ is a guaranteed property by the definition of $G_E$. The sample centers of $n_0G_E(0)$ deviate from 1 by a very small amount ($|n_0G_E(0)-1| \lesssim 0.0014$) for both models in both regimes. It is remarkable how such a small deviation in the case of c1GP can make such drastic changes when $r_p$ results are compared with the fully constrained model cGP. For example, in the low regime for $N=n/4$ cGP estimates $r_p=0.853$ fm while c1GP, having a value of $1.0014$ at zero, estimates $r_p=0.873$, a result that highlights the impact that a floating normalization can have on the extraction of the radius.

Fig.\,\ref{fig:fit_high} and \ref{fig:fit_low} show the function fits for the high and low regime respectively, with $\nu\!=\!2.5$, $3$, $3.5$ for $N=n$. The overall fit is good for all the methods in both regimes, the real differences appear as $Q^2\rightarrow0$. For this reason we show the full fit in each regime only for $\nu=2.5$ in the inset of the respective top plot, being the full fits for the other values of $\nu$ visually indistinguishable. 

Overall we found relatively small variability in the function fits across different values of $\nu$ in both regimes, not enough to change the estimation of the radius by more than $0.01$ fm within any of the models. Due to the constraint at the origin, both posterior medians of cGP and c0GP agree as $Q^2\rightarrow0$ with very narrow credible intervals, while c1GP and uGP are either below or above and start going close to the other GP models estimates as $Q^2$ grows. As expected, the shape constraints help reduce the variability of the models, which is evidenced by the smaller credible intervals of c1GP in comparison with uGP, specially in the low regime. In the low regime, it seems that without the location restriction the extrapolations are likely to attain values at $Q^2=0$ larger than $1$, which in turn pushes the estimate of the radius to larger values, as can be also seen in Fig.\,\ref{fig:hist_all}. In the low regime, as a general trend, we observed wider credible intervals for all the models.

The blue and red points near $Q^2=0.008 fm^{-2}$ displayed in Fig.\,\ref{fig:fit_high} and \ref{fig:fit_low} for an arbitrary $G_E(Q^2)$ value represent the lowest momentum that will be measured by the new PRad experiment \cite{Gasparian:2017cgp}. The blue and red error bars are two different estimates of the projected uncertainty the measurement will have. In the case of our proposed model, it seems that the blue uncertainty could allow us to discard either c1GP or uGP, while the red uncertainty would allow us to discard up to three of the model selected, clearly imposing a defined constraint in the final estimation of the radius.

As we have shown by our analysis the extrapolation near the boundary can be subtle and highly subjective to the data. Obviously the constraint at $Q^2=0$ can reduce the influence from the data range and also the model hyperparameters, however the question is how much we can trust on the constraints, and if without the constraints how much we can trust on the estimation procedure near $Q^2=0$, we leave this issue for the future work.

\vfill\eject

\section{Conclusions}
\label{Sec:Conclusions}

The charge radius of the proton is a fundamental parameter that has attracted enormous attention during the last decade because of a discrepancy
between two experimental methods. The value of the charge radius of the proton $\rp\!=\!0.84087(39)\,{\rm fm}$ determined from muonic 
hydrogen\,\cite{Pohl:2010zza,Pohl:2013yb} differs significantly (by $\sim$4\% or nearly 7$\sigma$) from the recommended CODATA value of 
$\rp\!=\!0.8775(51)\,{\rm fm}$ obtained from decades of experiments in electron scattering and atomic spectroscopy. Many possible solutions to the 
``proton puzzle'' have been proposed ranging from errors in the experimental data or in its interpretation all the way to new physics associated to a 
violation in lepton universality. There is even a recent publication that questions whether muonic hydrogen and electron scattering experiments measure 
the same observable\,\cite{Donnelly:2018uft}. Within this wide context our contribution is rather modest. In our view, the proton puzzle lays not in the 
experimental data, but rather in the extraction of the proton radius from the scattering data. To extract the charge radius from the electron scattering 
data set, one must extrapolate from the measured values of the electric form factor at a finite momentum transfer $Q^{2}$ all the way to $Q^2=0$. 
How to properly extrapolate to $Q^2\!=\!0$ has been the source of much controversy and innumerable debates. Many of these debates center around 
the optimal functional form ({\sl e.g.,} monopole, dipole, polynomial, Pad\'e, {\sl etc.}) that should be adopted to carry out the extrapolation and on how
best to determine the parameters associated to such functions. In this paper we also seek for an optimal extraction of the proton radius from the
scattering data. However, in contrast to most of these approaches and in an effort to eliminate any reliance on specific functional forms, we have 
introduced a non-parametric method that does not assume any particular functional form for the form factor. Rather, we adopt a method that is 
flexible enough to ``let the data speak for itself" and that solely relies on two physical constraints inherent to the form factor: (a) $G_E(Q^2\!=\!0)\!=\!1$ 
and (b) $G_E(Q^2)$ is a monotonically decreasing function of the momentum transfer. Note that this last constraint implies that $G'_E(Q^2)\!<\!0$ and 
$G''_E(Q^2)\!>\!0$ for all values of $Q^{2}$. These shape constraints are adopted in our study and their individual effects on the estimation of $r_p$ are explored.

The modeled form factor was expanded in terms of a suitable set of basis functions with coefficients restricted exclusively by the shape constraints.
To determine the optimal coefficients, the experimental data was divided into two $Q^{2}$ regions: (i) low $Q^2\!\leq\!1.36$ fm$^{-2}$ and (ii) high $Q^2\!\leq\!25.12$ fm$^{-2}$. For each of these regions, the optimal hyperparameters --the correlation 
length $\ell$, the smoothness parameter $\nu$, and the number of grid points $N$-- were obtained by monitoring the performance of the algorithm 
against the 20\% of the data that was left out from the calibration. The actual implementation of the algorithm was carried out via MCMC sampling of 
the posterior distribution using Bayesian inference.

To test the robustness and reliability of the approach we started by confronting our results against 
(known) synthetically-generated data with random Gaussian errors in low, medium and high regime. For the case in which both shape constraints were incorporated (labeled in the 
main text as cGP) we obtained an accurate and precise determination of the proton radius in both the low and medium $Q^2$ regions. In the high 
$Q^2$ region where the entire synthetic data set was used, we observed a systematic shift towards lower values of the (known) radius. We believe
that this problem may be associated to the method chosen to determine the hyperparameters. We plan to devote more attention to this matter in a 
future work.

In the case of the real experimental data from Mainz, we also found that the extraction of the proton radius is sensitive to the range of $Q^2$ 
values considered in the analysis. In the case of the high $Q^{2}$ region where the entire experimental data set is incorporated, the CODATA value 
of $r_{p}\!=\!0.878\,{\rm fm}$ is disfavored regardless of the adopted constraints. If both constraints are incorporated (cGP) we extract a charge 
radius of $r_{p}\!=\!0.8452^{+0.0024}_{-0.0026}\,{\rm fm}$. The value is even lower if we assume a floating normalization (c$_{1}$GP):
$r_{p}\!=\!0.8311^{+0.0058}_{-0.0045}\,{\rm fm}$. We note that we also considered a scenario of largely academic interest in which no constraints 
were incorporated. As expected, the unconstrained model (uGP) returned posterior distributions that were wide enough to be consistent with both
the muonic hydrogen and CODATA values. We conclude that if the entire Mainz data set is included, our analysis favors the smaller value of the
proton radius, as suggested by the muonic Lamb shift. 

However, if the low $Q^2$ region is used to inform the posterior distribution, we obtained mixed results. First, when both shape constraints are
included, we obtain a proton radius of $r_{p}\!=\!0.8550^{+0.0037}_{-0.0036}\,{\rm fm}$---that falls almost in the middle of the two experimental 
values. If now one of the constraints is removed the behavior is radically different. Removing the normalization constraint in favor 
of a floating normalization (c$_{1}$GP) shifts the posterior distribution to a large enough value of $r_{p}$ to make it consistent with the CODATA
estimate. Note that the value at zero of c1GP is $1.0014$, not far away from $1$, and yet that is enough to produce a radius $0.02$ fm bigger than the fully constrained model cGP. In contrast, leaving the normalization fixed at $G_E(Q^2\!=\!0)\!=\!1$ but relaxing the demand for $G_E(Q^2)$ to be a monotonically 
decreasing function of $Q^{2}$ results in a value for $r_{p}$ consistent with muonic result. In this regard, we anticipate that the PRad analysis 
will play a critical role in helping resolve this ambiguity. However, based solely on the present analysis focused on the low $Q^2$ region (where 
the behavior of the form factor is nearly linear) our results are inconclusive as far as resolving the proton puzzle.

 In the future, we propose to improve our model in order to overcome a possible bias in the analysis of the high $Q^2$ region, an objective that 
 could be accomplished by developing a better procedure for estimating the hyperparameters. As this technique is still in development, we would 
 like to test it on more synthetic data sets, similar in spirit to the framework developed by Yan et al \cite{Yan:2018bez}. We trust that
 lessons learned from their project will help us improve the robustness of our non-parametric model.

 Yet, even if the resolution of the proton
 puzzle is found elsewhere, the advances along this direction would have not been in vain. The proton puzzle as well as many other 
 developments have allowed us to realize the importance of {\sl enhancing the interaction between nuclear experiment and theory through 
 information and statistics}\,\cite{Ireland:2015}. We are entering into a new era in which statistical insights will become essential and uncertainty
 quantification will be demanded. 
 
\begin{acknowledgments}
 We are enormously grateful to Prof. Douglas Higinbotham for his unconditional help, guidance and lightning fast email responses. 
 This material is based upon work supported by the U.S. Department
 of Energy Office of Science, Office of Nuclear Physics Awards
 Number DE-FG02-92ER40750. Dr. Bhattacharya acknowledges NSF CAREER (DMS 1653404), NSF DMS 1613156 and National Cancer Institute's R01 CA 158113, and Dr. Pati acknowledges NSF DMS 1613156 for supporting this research.
\end{acknowledgments}

\newpage
\appendix

\section{Appendix}\label{App}

\subsection{Theoretical guarantees for the constraints on $f_\xi$}\label{ssec:theory}
  Denote by $\m C_f$ the function subspace of all the $f_{\xi}$ defined in Eq \eqref{eq:model1} that obey the constraints \eqref{GEs}. We show below that the constraints that define $\m C_f$ can be \textit{equivalently} represented as linear restrictions on $\xi$. We state Proposition \ref{prop:const} which provides an explicit characterization of the stated linear constraints. 
\begin{proposition}\label{prop:const}
$f_{\xi} \in \m C_f $ \textit{if and only if} $\xi \in \m C_{\xi}$, recall $\m C_{\xi}$ is defined in Eq.\,\eqref{cset}.

\end{proposition}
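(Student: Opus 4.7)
The plan is to compute $f_\xi$ and its first two derivatives in closed form using the relations $\phi_j'(x)=\psi_j(x)$ and $\psi_j'(x)=h_j(x)$, and then establish each of the three biconditionals separately. Explicitly,
\begin{align*}
f_\xi(0) &= \xi_1, \\
f_\xi'(x) &= \xi_2 + \sum_{j=0}^N \xi_{j+3}\,\psi_j(x), \\
f_\xi''(x) &= \sum_{j=0}^N \xi_{j+3}\,h_j(x),
\end{align*}
from which the value constraint $f_\xi(0)=1\iff\xi_1=1$ is immediate and requires no further argument.

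Next I would treat the convexity constraint $f_\xi''(x)\ge 0$ on $[0,1]$. The forward direction ($\xi_{j+3}\ge 0$ for all $j$ implies $f_\xi''\ge 0$) is clear since every $h_j$ is non-negative. For the reverse direction I would evaluate at the knots: by construction $h_j(x_k)=\delta_{jk}$ because the tent $h_j$ has value $1$ at its own center and vanishes at the two neighboring grid points (and beyond). Therefore $f_\xi''(x_k)=\xi_{k+3}$, and $f_\xi''\ge 0$ at every grid point forces $\xi_{k+3}\ge 0$ for all $k$. Hence this constraint is equivalent to the positivity block in $\mathcal C_\xi$.

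Finally, for the monotonicity constraint $f_\xi'(x)\le 0$ on $[0,1]$, I would leverage the convexity already established. Granted $\xi_{j+3}\ge 0$, each $\psi_j(x)=\int_0^x h_j(t)\,dt$ is non-decreasing in $x$, so $f_\xi'(x)$ itself is non-decreasing on $[0,1]$. Consequently $f_\xi'\le 0$ on $[0,1]$ if and only if $f_\xi'(1)\le 0$, i.e.
$$\xi_2 + \sum_{j=0}^N \psi_j(1)\,\xi_{j+3} \;=\; \xi_2 + \sum_{j=0}^N c_j\,\xi_{j+3} \;\le\; 0,$$
which is exactly the remaining inequality defining $\mathcal C_\xi$. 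Assembling the three biconditionals gives $f_\xi\in\mathcal C_f\iff\xi\in\mathcal C_\xi$.

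The only real subtlety is the ordering of the arguments: the reduction of the pointwise inequality $f_\xi'\le 0$ to a single endpoint evaluation at $x=1$ relies on $\xi_{j+3}\ge 0$, so in both directions of the equivalence the convexity block must be handled first and then used to characterize the monotonicity block. Beyond this, there is no computational obstacle; the whole proposition is essentially a restatement of the clever choice to expand the second derivative (rather than $f_\xi$ itself) in the tent basis $\{h_j\}$, which turns the three shape conditions into three linear conditions on $\xi$.
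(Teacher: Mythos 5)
Your proof is correct and follows essentially the same route as the paper's: the convexity block reduces to $\xi_{j+3}\ge 0$ via the non-negativity of the tent functions $h_j$ (your knot evaluation $h_j(x_k)=\delta_{jk}$ just makes the reverse implication explicit where the paper asserts it), and the monotonicity block reduces to the single inequality at $x=1$ because each $\psi_j$ is non-decreasing with $\psi_j(1)=c_j$, which is exactly the paper's $\max_{x\in[0,1]}$ argument. Your remark that the endpoint reduction presupposes $\xi_{j+3}\ge 0$ is a point the paper leaves implicit, but it does not change the substance of the argument.
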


\begin{proof} 
We first check the convexity constraint, by taking second order derivative we have $f_{\xi}''(x) = \sum_{j=0}^N \xi_{j+3}h_j(x)$, by the non-negativity of $h_j$ for all $x \in [0,1]$ and any $j = 0,\dots, N$, the set $\{f_{\xi}''(x)\ge 0, \forall x\in[0,1]\}$ is equivalent to $\{\xi_{j+3} \ge 0, j =0, \dots, N\}$. 
 To impose the non-increasing constraint, we need to check the following:
\begin{eqnarray*}
f_{\xi}'(x) = \xi_2 + \sum_{j=0}^N \xi_{j+3} \psi_j (x) \le 0, \forall x\in[0,1]. 
\end{eqnarray*}
Observe that this is equivalent to 
\begin{eqnarray}\label{eq:app1}
 \xi_2 \le - \max_{x\in[0,1]}\bigg(\sum_{j=0}^N \xi_{j+3} \psi_j (x)\bigg) = -\sum_{j=0}^N c_j \xi_{j+3}.
\end{eqnarray}
\eqref{eq:app1} follows since $\psi_j$ defined in (\ref{eq:psi}) is a non-decreasing function of $x$ and $\max_{x\in[0,1]} \psi_j(x) = \psi_j(1) =: c_j $ for $j=0,\dots, N$. This concludes the proof of the proposition. \qed 
\end{proof} 
In Proposition \ref{prop:normalizing}, we provide a detailed discussion on why the normalizing constant $M_{\xi}$ of the truncated prior distribution of $\xi$ is independent of $\tau$. 

\begin{proposition}\label{prop:normalizing}
The normalizing constant $M_{\xi}$ associated with the truncated prior distribution of $\xi$ is a constant in $[0,1]$ that does not depend on $\tau^2$. 
\end{proposition}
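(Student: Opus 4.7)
The plan is to recognize that $M_{\xi}$ is by definition the normalizing integral of the truncated Gaussian density, namely
\begin{align*}
M_{\xi} = \int_{\m C_{\xi}} (2\pi)^{-(N+2)/2}\, |\Gamma|^{-1/2}\, (\tau^2)^{-(N+2)/2}\, e^{-\xi^\T \Gamma^{-1} \xi/(2\tau^2)} \, d\xi,
\end{align*}
so it is exactly $\bbP(Z \in \m C_{\xi})$ where $Z \sim \m N_{N+2}(0, \tau^2 \Gamma)$. Since probabilities lie in $[0,1]$, the $[0,1]$ claim is automatic once we show $\tau$-independence, so the real content is the latter.

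The key observation I would exploit is that the constraint set $\m C_{\xi}$ is a \emph{convex cone}: every defining inequality in $\m C_{\xi}$, namely $\xi_2 + \sum_{j=0}^N c_j \xi_{j+3} \le 0$ and $\xi_{j+3} \ge 0$ for $j=0,\ldots,N$, is homogeneous of degree one in $\xi$. Consequently, for any $\tau>0$ we have $\xi \in \m C_{\xi}$ if and only if $\xi/\tau \in \m C_{\xi}$. I would verify this cone property first as a small lemma using Proposition \ref{prop:const} (or directly from the definition of $\m C_{\xi}$), since it is the structural fact that drives the whole argument.

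Given the cone property, I would apply the change of variables $\eta = \xi/\tau$, which has Jacobian determinant $\tau^{N+2}$ and leaves the integration region $\m C_{\xi}$ invariant. The quadratic form in the exponent becomes $\eta^\T \Gamma^{-1} \eta/2$, and the factor $(\tau^2)^{-(N+2)/2} \cdot \tau^{N+2}$ collapses to $1$. Hence
\begin{align*}
M_{\xi} = \int_{\m C_{\xi}} (2\pi)^{-(N+2)/2} \, |\Gamma|^{-1/2} \, e^{-\eta^\T \Gamma^{-1} \eta/2} \, d\eta = \bbP(W \in \m C_{\xi}), \quad W \sim \m N_{N+2}(0, \Gamma),
\end{align*}
which is manifestly a number in $[0,1]$ that does not involve $\tau^2$, completing the proof.

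The only delicate step is the invariance of $\m C_{\xi}$ under positive scaling, but this is immediate from the fact that all defining inequalities are homogeneous, so there is no real obstacle; the argument is essentially a one-line change of variables once the cone structure is noted. It is worth remarking why this matters downstream: because $M_{\xi}$ is free of $\tau^2$, the conditional posterior of $\tau^2$ in the Gibbs sampler remains a tractable inverse-gamma distribution, which is what justifies the sampling scheme described in the following subsection.
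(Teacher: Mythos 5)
Your proof is correct and follows essentially the same route as the paper's: a change of variables $\xi \mapsto \xi/\tau$ combined with the observation that $\m C_{\xi}$ is invariant under positive scaling because its defining inequalities are homogeneous. You spell out the cone structure and the Jacobian more explicitly than the paper does, which is a welcome clarification but not a different argument.
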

\begin{proof}
By definition
\begin{eqnarray*}
M_{\xi} = \int_{C_{\xi}} (\tau^2)^{-(N+2)/2}(|\Gamma|)^{(-1/2)} e^{-\frac{1}{2\tau^2}{\xi}^T \Gamma^{-1} \xi} d\xi.
\end{eqnarray*}
By change of variable ${\xi}' =\xi/\tau$, observe that the truncated region $C_{{\xi}'}$ is the same as $C_{\xi}$ as long as $\tau >0$. Hence, $M_{\xi} \in [0,1]$ does not depend on $\tau$. 

\end{proof}

\newpage
\subsection{Details on the constrained model without the constraint $\xi_1 = 1$ (c$_1$GP)}\label{ssec:variant}
As mentioned before, in order to account for a possible systematic error in the experimental data one could consider adding an unknown multiplicative parameter $n_0$ to $G_E$, a normalization constant. Assuming $f(x) = n_0 G_E(x)$ and expanding as in Eq. \eqref{eq:model1}, we get:
\begin{align}
n_0\,G_E(x) &\approx  n_0\,G_E(0)+ x\, n_0\,G'_E(0)  + \sum_{j=0}^{N} n_0\,G''_E(x_{j+3})\,\phi_j(x),\\
&  \tilde{\xi}_1 + \tilde{\xi}_2\, x + \sum_{j=0}^N \tilde{\xi}_{j+3}\phi_j (x). \label{eq:model2}
\end{align}
With the assumption $G_E (0)=1$, $\tilde{\xi}_1$ can capture all the information about $n_0$. Consider the constraint set
\begin{align}\label{cset2}
\m C_{\tilde{\xi}} \equiv \bigg\{\tilde{\xi} \in \mathbb{R}^{N+3}: ~ \tilde{\xi}_1 \in \mathbb{R}, \ \tilde{\xi}_2 + \sum_{j=0}^N c_j \ \tilde{\xi}_{j+3}\le0, \ \tilde{\xi}_{j+3} \ge 0, \ j = 0, \dots, N \bigg\}
\end{align}
where $\tilde{\xi} = \{\tilde{\xi}_j,j=1,\dots, N+3\}$. Then the proton radius introduced in Eq.\,(\ref{PRadius}) is expressed in terms of both $\xi_{1}$ and $\xi_{2}$ as:
\begin{align*} 
 r_{p} = \frac{1}{Q_{\max}} \sqrt{-6\frac{\tilde{\xi}_{2}}{\tilde{\xi}_{1}}}. 
\end{align*}
By dividing by $\xi_1$ we are able to take out the effect on the radius estimation from the floating systematic error term. Following the same line as in section \ref{Sec:Formalism}, now we discuss the partially constrained model that only incorporates constraints (b) and (c) in Eq.\,\eqref{eq:cons_set} (refer to c$_1$GP model). Let $\tilde{Y} = (\tilde{y}_1, \ldots, \tilde{y}_n)^\T$ with $\tilde{y}_i \equiv g_i$, and define the corresponding basis matrix $\tilde{\Phi}$ (a $n \times (N+3)$ matrix) with $i$th row $(1, x_i, \phi_0(x_i), \ldots, \phi_N(x_i))$. Similar to the model in Eq.\,\eqref{eq:model_vect}, now we have:
\begin{align}\label{eq:model_vect2}
\tilde{Y} =  \tilde{\Phi} \tilde{\xi} + \varepsilon, \quad \varepsilon \sim \m N_n(0, \sigma^2 \mr I_n), \quad  \tilde{\xi} \in \m C_{\tilde{\xi}}.
\end{align}
Again, the random variables $f(0), f'(0),f''(x_0)...,f''(x_N)$ follow a Gaussian distribution, with the following covariance matrix:

\begin{eqnarray}\label{eq:cov2}
\widetilde{\Gamma} = \begin{bmatrix}
K(0,0) & \frac{\partial K}{\partial x'}(0,0) & \frac{\partial^2 K}{\partial {x'}^2}(0,x_0) & \cdots & \frac{\partial^2 K}{\partial {x'}^2}(0,x_N)\\[1.5ex]
\frac{\partial K}{\partial x}(0,0) & \frac{\partial^2 K}{\partial x \partial x'}(0,0)&\frac{\partial^3 K}{\partial x \partial {x'}^2}(0,x_0) & \cdots & \frac{\partial^3 K}{\partial x \partial {x'}^2}(0,x_N)\\[1.5ex]
\frac{\partial^2 K} {\partial x^2} (x_0, 0)&\frac{\partial^3 K}{\partial x^2 \partial x'}(x_0,0)& \frac{\partial^4 K}{\partial x^2 \partial {x'}^2}(x_0,x_0) & \cdots &\frac{\partial^4 K}{\partial x^2 \partial {x'}^2}(x_0,x_N)\\
\vdots & \vdots &\vdots &\ddots&\vdots \\
\frac{\partial^2 K}{\partial x^2} (x_N, 0) & \frac{\partial^3 K}{\partial x^2 \partial x'}(x_N,0)&\frac{\partial^4 K}{\partial x^2 \partial {x'}^2}(x_N,x_0)&\cdots & \frac{\partial^4 K}{\partial x^2 \partial {x'}^2}(x_N,x_N)\\
\end{bmatrix}_{(N+3)\times(N+3)}.
\end{eqnarray}

Similar to Eq.\,\eqref{eq:posterior}, the joint posterior distribution of the model parameters with partial constraints is: 
\begin{align}\label{eq:posterior_c1}
P(\tilde{\xi}, \tau^2, \sigma^2 \mid \tilde{Y}) \propto \bigg\{ (\sigma^2)^{-n/2} \, e^{-\frac{ \|\tilde{Y} - \tilde{\Phi} \tilde{\xi} \|^2}{2 \sigma^2}} \bigg\} \ \bigg\{ (\tau^2)^{-(N+3)/2} e^{- \tilde{\xi}^\T {\widetilde{\Gamma}}^{-1} \tilde{\xi}/(2 \tau^2)} \, \ind_{\m C_{\tilde{\xi}}}(\tilde{\xi}) \bigg\} \ (\tau^2)^{-1} \, (\sigma^2)^{-1}.
\end{align} 
Therefore the estimation of the proton radius based on the posterior samples of ${\tilde{\xi}}^{(t)}_1$ and ${\tilde{\xi}}^{(t)}_2$, with $t =1,\dots,T$ is: 
\begin{align}\label{eq:est_c1}
\widetilde{r}_p = T^{-1} \sum_{t=1}^T \frac{\sqrt{- 6 {\tilde{\xi}}_2^{(t)}/{\tilde{\xi}}_1^{(t)}}}{Q_{\max}}. 
\end{align}
Note that Proposition\,\ref{prop:const} and \ref{prop:normalizing} in Appendix\,\ref{ssec:theory} still hold for the c$_1$GP model. To see how Proposition \ref{prop:normalizing} holds, the normalizing constant is:
\begin{eqnarray*}
M_{\tilde{\xi}} = \int_{C_{\tilde{\xi}}} (\tau^2)^{-(N+3)/2}(|\widetilde{\Gamma}|)^{(-1/2)} e^{-\frac{1}{2\tau^2}{\tilde{\xi}}^T {\widetilde{\Gamma}}^{-1} \tilde{\xi}} d\tilde{\xi},
\end{eqnarray*}
and since ${\tilde{\xi}}_1 \in \mathbb{R}$, by the change of variable $\tilde{\xi}' = \tilde{\xi}/\tau$, it is easy to see $\m C_{\tilde{\xi}'} = \m C_{\tilde{\xi}}$, thus the integration does not depend on $\tau$ as well.

\newpage
\subsection{Details on the choices of priors and hyperparameters, and on the Gibbs sampling steps} \label{ssec:hyp}

\noindent {\bf \underline{Choice of $\nu$:}}
Assuming $f$ to be smooth in addition to being convex, the minimum possible smoothness required is twice differentiability. In an unconstrained Gaussian process regression, Corollary 3.1 and 3.2 of \cite{yang2017frequentist} show that the point-wise posterior credible intervals contain the true function with at least the nominal coverage probability provided that the prior smoothness is set to be less than or equal to the smoothness of the underlying function. We conjecture that this will continue to hold in the case of function estimation using a constrained Gaussian process, which motivated the following choice of $\nu$. It is well-known that the reproducing kernel Hilbert space of Gaussian process endowed with Mat\'{e}rn covariance kernel with smoothness $\nu$ consists of H\"{o}lder class of smoothness $\nu + 0.5$. Hence, the choice of $\nu = 2.5$ (corresponding to twice-differentiable functions) ensures that the posterior credible intervals will not underestimate the uncertainty in estimating $r_p$. The choices of $\nu = 3$ and $\nu = 3.5$ were made in order to assets the impact of this hyperparameter on the estimation of $r_p$. The following is the most general definition of the Mat\'{e}rn covariance kernel:

\begin{equation}
 k_\nu(r) \equiv \frac{2^{1-\nu}}{\Gamma(\nu)}\bigg(\frac{\sqrt{2\nu}r}{\ell} \bigg) ^\nu K_\nu \bigg( \frac{\sqrt{2\nu}r}{\ell} \bigg),
\end{equation}

where $K_\nu$ is the modified Bessel function of the second kind.\\
{\bf \underline{Choice of $\tau$:}} $\tau$ controls the prior signal to noise ratio. An objective choice is the non-informative prior $p(\tau^2) \propto 1/\tau^2$.\\ 
{\bf \underline{Choice of $\ell$:}} The parameter $\ell$ is typically called the \textit{length-scale} parameter of a Gaussian process. It controls the rate of decay of the covariance kernel with the inter-site distances. Typically one chooses $\ell$ so that the correlation between two points far apart in the covariate space is very small. Empirically, one can use a variogram plot of the data to estimate the value of $\ell$. Instead, we used a cross-validation approach to estimate the value of $\ell$. We varied $\ell$ in the range $ \{0.05, 0.1, 0.5,1,2, \dots, 20\}$, where the overall scale is the same after scaling $Q^2$ to $[0,1]$, and used 5-fold cross-validation (repeating the cross validation five times) to chose the optimal value of $\ell$ that minimizes the predictive mean squared error. \\
{\bf \underline{Choice of $\sigma$:}} Based on the error values in the experimental data, we noticed that the estimated error (standard deviation) is no larger than $0.01$, which motivated our choice of values for $\sigma$ for the pseudo data analysis, since is computationally less expensive than adding a prior distribution to it. For the real data analysis we allowed $\sigma$ to vary by putting an objective prior such that $p(\sigma^2)\propto 1/\sigma^2$. \\ 
{\bf \underline{Choice of $N$:}} The number of grid points $N$ in Eq.\,\eqref{eq:basis} directly influences the approximation power of the function estimation method. The role of the grid points is to project a Gaussian process onto a regular grid. The function at any intermediate value is then obtained using linear interpolation (do not confuse with the linear interpolation made by the functions $h_{j}(x)$ in Sec. \ref{Sec:Formalism}). It is unreasonable to set $N$ to a value larger than the sample size $n$, since that may lead to overfitting. In order to conduct a thorough analysis on the full dataset by constrained GP and following \cite{maatouk2017gaussian} we considered $N=\{n/4,n/2,n\}$ in our data-analysis. \\
\noindent {\bf \underline{Gibbs sampling:}} Given the above choice of hyperparameters, the joint posterior distribution in Eq.\,\eqref{eq:posterior} can be updated by Gibbs sampling,
\begin{itemize}
\item Update $[\xi \mid \tau^2, \sigma^2, \Phi, Y] \sim N(\mu_{\xi}, \Sigma_{\xi}) \ind _{{\mathcal{C}_{\xi}}}(\xi)$, with $\Sigma_{\xi} = (\Phi^T \Phi/\sigma^2 + \Gamma^{-1}/\tau^2)^{-1}$ and $\mu_{\xi} = \Sigma_{\xi}^{-1} \Phi^T Y/\sigma^2$,
\item Update $[\tau^2 \mid \xi, \sigma^2, \Phi, Y] \sim \text{IG}( a_{\tau}, b_{\tau})$, with $a_{\tau} = (N+2)/2$ and $b_{\tau} = \xi^T \Gamma^{-1}\xi/2$,
\item Update $[\sigma^2 \mid \xi, \tau^2, \Phi, Y] \sim \text{IG}( a_{\sigma}, b_{\sigma})$, with $a_{\sigma} = n/2$ and $b_{\sigma} = \|Y-\Phi\xi \|^2/2$,
\end{itemize}
where IG denotes an Inverse Gamma distribution. Note that the above Gibbs sampling procedure is applicable to all proposed GP models associated with different constraint sets, and $\Phi, Y, N, \xi, \Gamma$ vary in different cases. 

\subsection{Details on the Pseudo-Data Analysis}\label{DetailsPseudo}

In this section we give details on the implementation and results of the pseudo data analysis presented in Sec. \ref{sec:sims} for the subsets of the data in the three $Q^2$ regimes: i) low $Q^2 (\leq 1.36$ fm$^{-2})$, ii) medium $Q^2 (\leq 4.85$ fm$^{-2})$ and iii) high $Q^2 (\leq 25.12$ fm$^{-2})$. Recall that the data was generated using the Dipole function defined in Eq.~\eqref{Dipole_Def} with an input radius of $0.84$ fm. Fig.\,\ref{fig:dipole_mse} shows the cross validation results for selecting the optimal correlation length. Tables \ref{tab:dipole_I}-\ref{tab:dipole_III} show the result summaries of the estimates of the radius $r_p$ and $95\%$ credible intervals for the three regimes with varying level of noise. Fig.\,\ref{fig:dipole_fit} shows the model fitting in all regimes as $Q^2 \rightarrow 0$, and the inset plot at the right upper corner shows that model fit in its respective entire range.

By dividing the total data into 80\% training and 20\% testing datasets, the cross validation procedure seeks to minimize the MSE (Mean Squared Error) defined as the average of squared deviations from the $20$ \% data held out and the built model. In low and medium $Q^2$ regimes, the MSEs dropped fast for relatively small values of $\ell$, and then stayed flat as $\ell$ increases. On the other hand, in the high $Q^2$ regime the MSEs dropped first and then increased slightly as $\ell$ increases. For very small values of $\ell$ ($\sim 0.1$) we obtained much higher MSEs in all cases, since a very small value of $\ell$ reduces the correlation of the constrained Gaussian process between neighboring points and fails to borrow information from neighbors for an accurate extrapolation. 

The MSE behavior as a function of $\ell$ is fundamentally different for the high regime in comparison with the low and medium regimes. The reason for this difference is the way the experimental points are distributed across the entire $Q^2$ range. In the Mainz data with range $25.12$ fm$^{-2}$, the $Q^2$ are collected more often with small values and only a few are collected with large values: $70\%$ of the $Q^2$ values are less than $5$ fm$^{-2}$ but only around $5\%$ of $Q^2$ are greater than $17$ fm$^{-2}$. Therefore, when re-scaling to the [0,1] interval, in the low regime the $Q^2$ values are more evenly distributed, while in the high regime most of the $Q^2$ values are concentrated around 0 and only a few are close to 1. The selected length-scale parameter $\ell$ depends on the dispersion of the $Q^2$ values, thus, in the high $Q^2$ regime the cross validation procedure tends to select a smaller value of $\ell$ so that the correlation between two points with long distance is relatively small. On the other hand, in the low $Q^2$ regime the cross validation procedure tends to select a larger value of $\ell$ that leads to a stronger correlation between any two points of $Q^2$, causing that when estimating at each point the model can borrow enough information from the neighboring points.

\begin{figure}[htbp!]

\includegraphics[scale = 1]{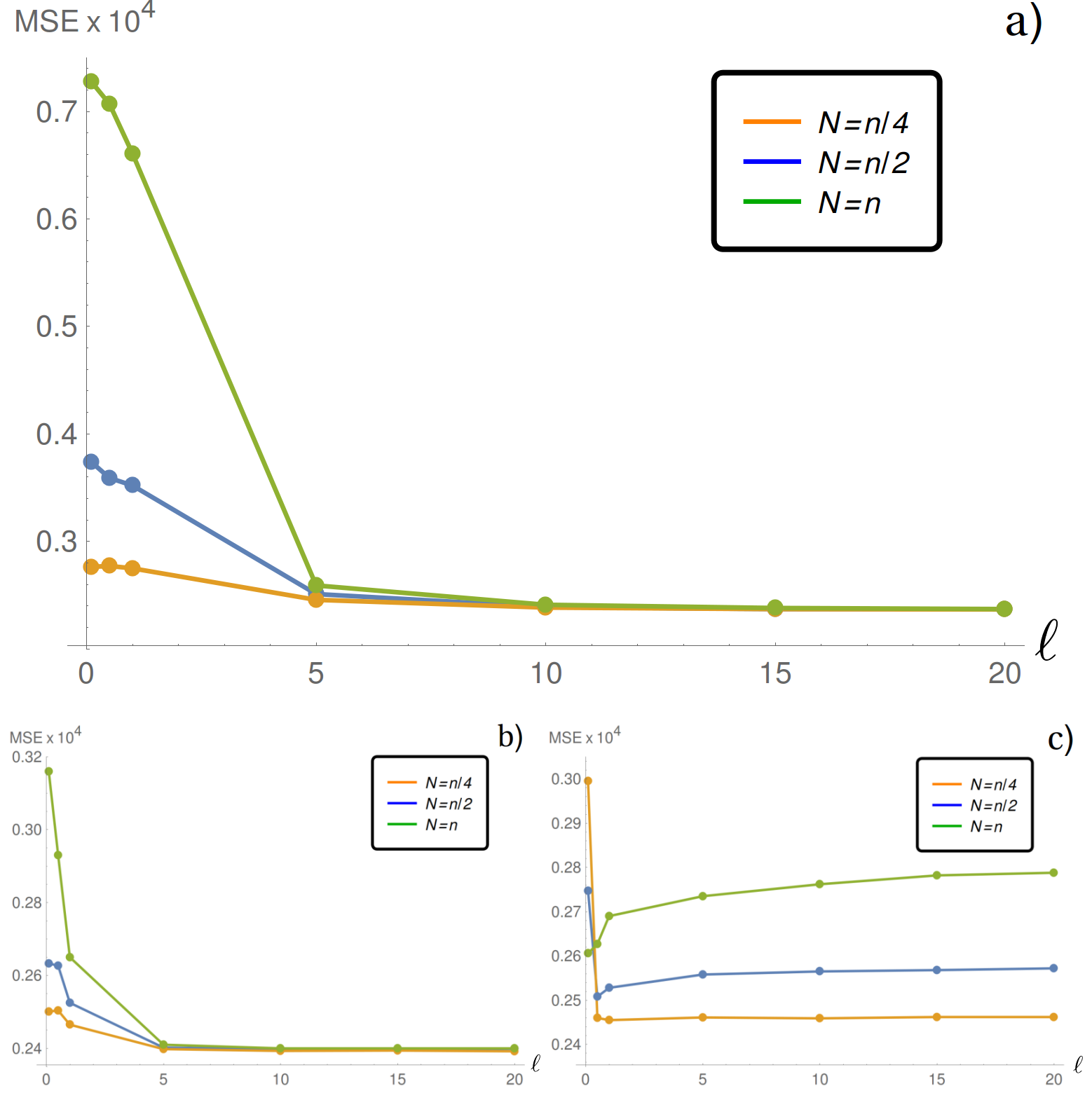}\\

\caption{MSEs versus $\ell$ for $\nu = 2.5$, $\sigma = 0.005$ and $n = 500$. (a) shows the MSEs for data in regime i), (b) for regime ii), and (c) for regime iii). In each plot, the orange line stands for the MSEs with the number of grid points $N = n/4$, the blue line stands for $N = n/2$, and the green line for $N = n$.}
\label{fig:dipole_mse}
\end{figure}

\begin{table}[htbp]
 \caption{Posterior estimates of the radius and credible interval for cGP, c$_0$GP, c$_1$GP and uGP with $N =n/4$, $\nu = 2.5$ and $\ell_\mathrm{opt}=20$ for a subset of data of size $ n=500$ in regime i).} \small
 \begin{center}
  \begin{tabular}{cclcccc} \hline 
  $\bf{ \sigma}$ & & 0   & 0.002 & 0.005 & 0.01 \\ \hline

  cGP & $r_p$ & 0.8400 & 0.8384 & 0.8402 & 0.8415 \\
  & CI$_\mathrm{l}$ & 0.8393 & 0.8340 & 0.8340 &0.8304\\
  & CI$_\mathrm{u}$ &0.8403 & 0.8429 & 0.8488 & 0.8530\\ \hline
 
 c$_0$GP &$r_p$ & 0.8400 & 0.8359 & 0.8364 & 0.8528 \\
  &CI$_\mathrm{l}$& 0.8393 & 0.8272 & 0.8243 & 0.8235\\
  &CI$_\mathrm{u}$ & 0.8402 & 0.8440 & 0.8536 & 0.8951 \\\hline

  c$_1$GP &$r_p$ & 0.8388 & 0.8391 & 0.8435 & 0.8584 \\
  &CI$_\mathrm{l}$& 0.8375 & 0.8301 & 0.8302 & 0.8337 \\
  &CI$_\mathrm{l}$ & 0.8401 & 0.8474 & 0.8612 & 0.8881\\ \hline

  uGP &$r_p$& 0.8389 &0.8363 & 0.8336 & 0.8315 \\
  &CI$_\mathrm{l}$ & 0.8376 & 0.8182 &0.8089 &0.7801 \\
  &CI$_\mathrm{u}$ & 0.8403 & 0.8555 & 0.8601 & 0.8780 \\ \hline
   
\end{tabular}%
\end{center}
\label{tab:dipole_I}%
\end{table}%

\begin{table}[htbp]
\caption{Posterior estimates of the radius and credible interval for cGP, c$_0$GP, c$_1$GP and uGP with $N =n/4$, $\nu = 2.5$ and $\ell_\mathrm{opt}=20$ for a subset of data of size $n = 500$ in regime ii).} \small
 \centering
 \begin{center}
  \begin{tabular}{cclcccc} \hline
  $\bf{ \sigma}$ & & 0   & 0.002 & 0.005 & 0.01 \\ \hline

 cGP  &$r_p$ & 0.8399 & 0.8337 & 0.8327 & 0.8319 \\
  &CI$_\mathrm{l}$ & 0.8394 & 0.8297 & 0.8276 & 0.8238 \\
   &CI$_\mathrm{u}$ & 0.8404 & 0.8381 & 0.8413 & 0.8421 \\ \hline
 
  c$_0$GP &$r_p$ & 0.8399 & 0.8376 & 0.8341 & 0.8430 \\
   &CI$_\mathrm{l}$& 0.8394 & 0.8311 & 0.8276 & 0.8217 \\
   &CI$_\mathrm{u}$& 0.8404 & 0.8459 & 0.8413 & 0.8705 \\ \hline

   c$_1$GP &$r_p$& 0.8385 & 0.8317 & 0.8247 & 0.8200 \\
  &CI$_\mathrm{l}$& 0.8373 & 0.8232 & 0.8152& 0.8074\\
   &CI$_\mathrm{u}$&  0.8399 & 0.8397 & 0.8354 & 0.8337 \\ \hline

  uGP &$r_p$& 0.8385 & 0.8344 & 0.8306 & 0.8237 \\
  &CI$_\mathrm{l}$& 0.8372 & 0.8208 & 0.8126 & 0.8027 \\
   &CI$_\mathrm{u}$ & 0.8398 & 0.8487 &0.8544 & 0.8517 \\ \hline
    \end{tabular}%
\end{center}
\label{tab:dipole_II}%
\end{table}%

\begin{table}[h]
\caption{Posterior estimates of the radius and credible interval for cGP, c$_0$GP, c$_1$GP and uGP with $N =n/4$, $\nu = 2.5$ and $\ell_\mathrm{opt} = 1$ for a subset of data of size $n = 500$ in regime iii).} \small
 \centering
 \begin{center}
  \begin{tabular}{cclcccc} \hline
  $\bf{ \sigma}$ & & 0   & 0.002 & 0.005 & 0.01 \\ \hline

  cGP  &$r_p$ & 0.8386 & 0.8312 & 0.8213 & 0.819\\
  &CI$_\mathrm{l}$ & 0.8249 & 0.8252 & 0.8131 & 0.8035 \\
   &CI$_\mathrm{u}$ & 0.8454 & 0.8373 & 0.8295 & 0.8250 \\ \hline

 c$_0$GP  &$r_p$ & 0.8169 & 0.8355 & 0.8294 & 0.8255 \\
   &CI$_\mathrm{l}$& 0.8151 & 0.8241 & 0.8126 & 0.8077\\
   &CI$_\mathrm{u}$& 0.8188 & 0.8478 & 0.8451 & 0.8458 \\ \hline

  c$_1$GP &$r_p$& 0.8303 & 0.8176 & 0.8071& 0.7939 \\
  &CI$_\mathrm{l}$&  0.8019 & 0.8077 & 0.7963 & 0.7792 \\
   &CI$_\mathrm{u}$& 0.8409 & 0.8290 & 0.8191 & 0.8088\\ \hline

  uGP &$r_p$& 0.8400 & 0.8229 & 0.8202 & 0.8118 \\
  &CI$_\mathrm{l}$& 0.8387 & 0.8030& 0.7918 & 0.7765 \\
   &CI$_\mathrm{u}$ & 0.8404 & 0.8438 & 0.8448 & 0.8526 \\ \hline
    \end{tabular}%
\end{center}
\label{tab:dipole_III}%
\end{table}%

\begin{figure}[htbp!]
\begin{center}
  \includegraphics[scale = 0.7]{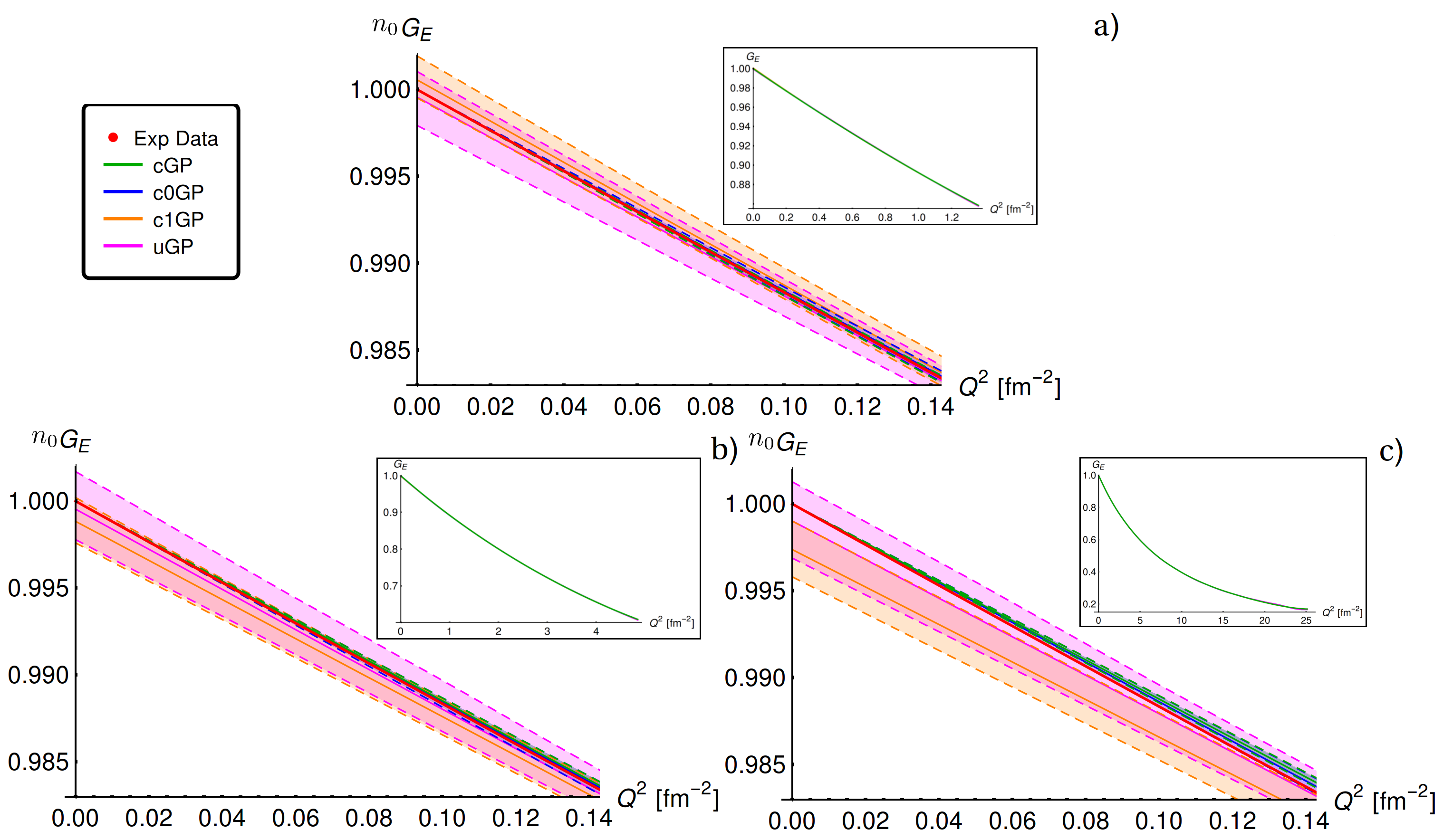}
\end{center}
\caption{Model fits in regimes i)-iii) with noise level $\sigma = 0.005$. (a) is for regime i), (b) for regime ii) and (c) for regime iii). The red line stands for the true function values; the green line stands for the cGP estimates; the blue line stands for the c$_0$GP; the orange line stands for the c$_1$GP; the purple line stands for uGP. In each case the $95\%$ point-wise credible intervals are delimited by dashed lines with the respective color.}\small
\label{fig:dipole_fit}
\end{figure}

The first column in Table \ref{tab:dipole_I}-\ref{tab:dipole_II} shows that in the no-noise setting of the low and medium regimes, both cGP and c$_0$GP estimate the radius very close to $0.84$ fm (the true value), however c$_1$GP and uGP give slightly biased estimates, consistently toward lower values. We find that without the restriction \eqref{GE0}, the estimates of the radius are drifted away from the true value even when there is no random noise in $G_E$. 

In Table \ref{tab:dipole_I}, where the $Q^2$ values are in the low regime, the cGP method estimates $r_p$ very well, and the credible interval becomes wider when the noise level increases. c$_0$GP recovers $r_p$ close to the true value when the noise level is small, but the estimates become biased as the noise level increases. Also with higher noise level, the credible intervals of c$_0$GP are much wider than those of cGP, in the same sense as the uGP credible intervals are wider than those of c1GP. This behavior indicates us that the shape constrains (Eq.~\eqref{GE1} and~\eqref{GE2}) can indeed play an important role in reducing the variability of the estimation. Nevertheless, if we only consider the shape constraints (refer to the third row in Table \ref{tab:dipole_I}), the estimates of the first derivatives are somewhat variable as the noise level increases, indicating that in the presence of noise, imposing all the physical constraints reduces the uncertainty in the estimation while maintaining accuracy. We find that uGP leads to reasonably good estimates and thin credible intervals when the noise level is small, however when the noise level is $\sigma = 0.01$, it leads to a credible interval $0.1$ fm wide.

We observed similar results from Tables \ref{tab:dipole_II} and \ref{tab:dipole_III} for the medium and the high $Q^2$ regimes. Comparing results across different regimes, we found that in the medium and the high $Q^2$ regimes, all the GP methods tend to give lower estimates at the origin as the noise level increases. In the high $Q^2$ regime we obtained slightly biased estimates of the radius and wider credible intervals even for $\sigma = 0$, especially for c$_0$GP. The fact that in the high $Q^2$ regime we do not obtained as good estimates of the radius as we obtained in the low $Q^2$ regime could be related to the smaller correlation length selected, as we explained in the previous paragraphs. Since in the high regime small $\ell$ reduces the correlations between points which are close to each other, when estimating over $Q^2 \approx 0$ the model can use less information from the data near the origin than in the low regime case. We shall investigate this topic further in a future work and propose a way to gauge the bias of our estimates and to improve the overall prediction.

We can see in Fig.\,\ref{fig:dipole_fit} that without the restriction $n_0G_E(0)=1$, the estimates of c$_1$GP and uGP are off from the truth (red line) for small values of $Q^2$. On the contrary, and as expected, cGP and c$_0$GP agree with the truth as $Q^2 \rightarrow 0$. Without the shape constraints (Eq.~\eqref{GE1}-\eqref{GE2}) we found in the high $Q^2$ regime that the estimates of c$_0$GP and uGP are not even convex toward higher values of $Q^2$.

\newpage

\subsection{Details on the Electron-Scattering Data Analysis}\label{DetailsElec}

Fig.\,\ref{fig:MSE_Combined} (a) shows the 5-fold cross-validation MSEs for the high regime over $\ell \in \{ 0.05, 0.1, 0.5, 1\}$ of cGP for $n=250$ with $N= \{n/4, n/2, n\}$. Since it is evident that smaller values of $\ell$ ($\leq 0.5$) causes the MSE to increase, we focused on $\ell \ge 0.5$ in the subsequent analysis, following also our observations from the MSEs results in the high $Q^2$ regime of pseudo generated data. From the results used to plot Fig.\,\ref{fig:MSE_Combined} (a) and Fig.~\ref{fig:dipole_mse}, we noted that choosing a smaller number of grid points leads to more accurate predictions in terms of MSE when $\ell>0.5$. We therefore chose the number of grid points $N = n/4$ and considered different smoothness parameters $\nu = \{2.5, 3, 3.5\} $ to perform the finer cross validation procedure to select the optimal value of $\ell$ in the grid $\ell \in\{0.5, 1, 1.5, 2, 2.5\}$ (Fig.\,\ref{fig:MSE_Combined} (a) (Inset)). We saw that the MSEs increased as $\ell$ increased from $0.5$ to $2.5$ for all $\nu$ and $\nu=2.5$ gave relatively lower MSEs in this case. The results of our analysis guide us to chose $\ell_\mathrm{opt}=0.5$ for the full data set analysis.

Fig.\,\ref{fig:MSE_Combined} (b) shows the 5-fold cross-validation MSEs for the low regime. Again based on the results in the pseudo data analysis, that a larger value of $\ell$ is preferred, we conducted the cross validation for cGP model over the parameter set $ \ell \in \{1, 5, 10, 15, 20 \}$. Fig.\,\ref{fig:MSE_Combined} (b) shows that as $\ell$ increases, the MSEs drops fast first and then stays stable for large values of $\ell$ ($\ge 10$). Also, in the low regime cGP with smaller number of grid points ($N=n/4$) gave lower MSEs, which is similar to the full dataset case. This analysis leads us to chose $\ell_\mathrm{opt}=10$ on the low $Q^2$ set.

\begin{figure}[h!]
\begin{center}
  \includegraphics[scale = 0.7]{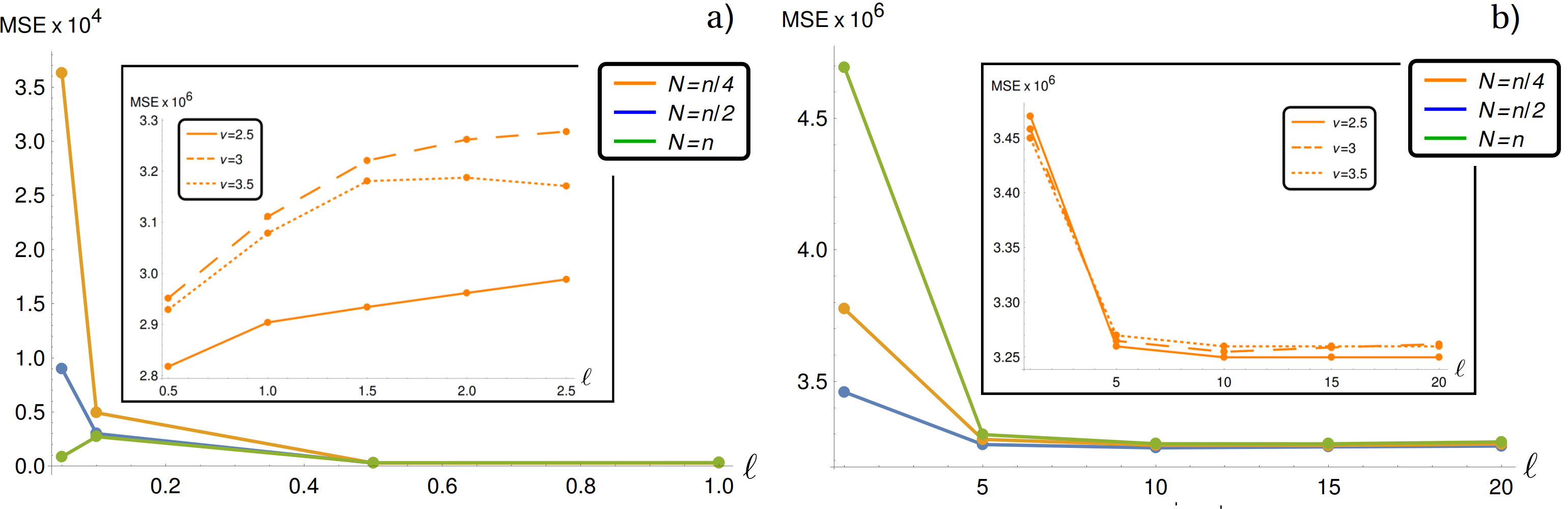}

 \end{center}  
\caption{(a) MSEs over the range $\ell \in [0,1]$ for cGP model in the high regime with $n=250$, $N = n/4$, $n/2$, $n$. (b) MSEs over $\ell$ for cGP model in the low regime with $N = \{n/4, n/2, n\}$, $n=500$ and $\nu = 2.5$. In both graphs the green line stands for the case with $N = n$, the blue line stands for $N = n/2$, and the orange line stands for $N= n/4$. (a) (Inset) MSEs over $\ell \in [0.5,2.5]$ for cGP model with $N=n/4$ on full dataset. (b) (Inset) MSEs over $\ell \in [1,20]$ for cGP model with $N=n/4$ on the low regime. In both insets $\nu = 2.5$ (solid), $\nu = 3$ (dashed), $\nu = 3.5$ (dot-dashed).}
\label{fig:MSE_Combined}
\end{figure}

The following figures show the detailed histograms for the 400 MCMC samples for the four models discussed in the real data analysis section. In all cases $\nu=2.5$ was used. Fig.\,\ref{fig:highhist_n4} and \ref{fig:highhist_n} show the results for the high regime with $N=n/4$ and $N=n$ respectively. Fig.\,\ref{fig:lowhist_n4} and \ref{fig:lowhist_n} show the results for the low regime with $N=n/4$ and $N=n$ respectively. Fig.\,\ref{fig:highhist_f0} and \ref{fig:lowhist_f0} show the samples of $\xi_1$ ($n_0G_E(0)$) for c$_1$GP and uGP for high and low regimes respectively.

\begin{figure}
\begin{center}
  \includegraphics[scale = 0.55]{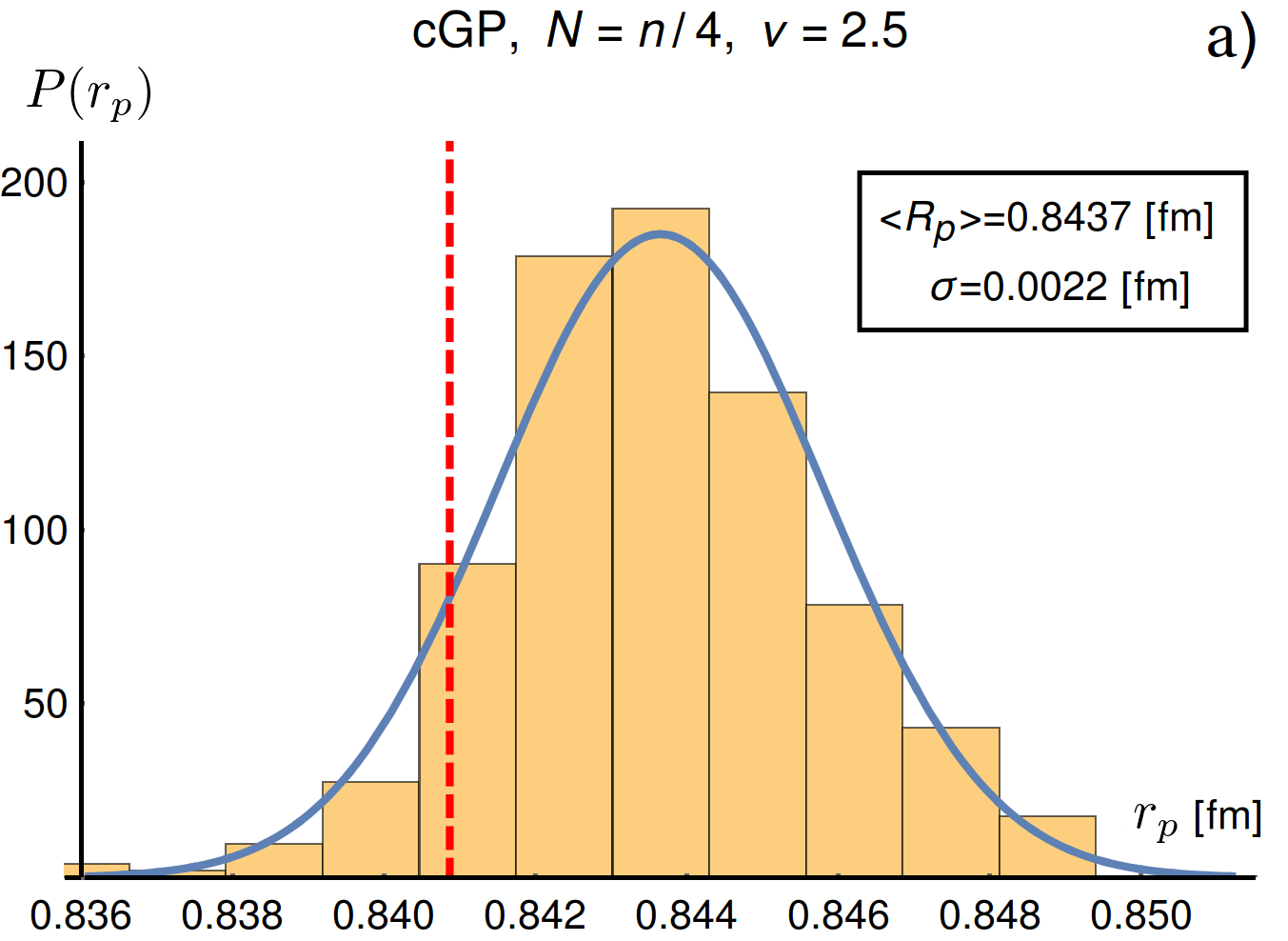}
  \includegraphics[scale = 0.55]{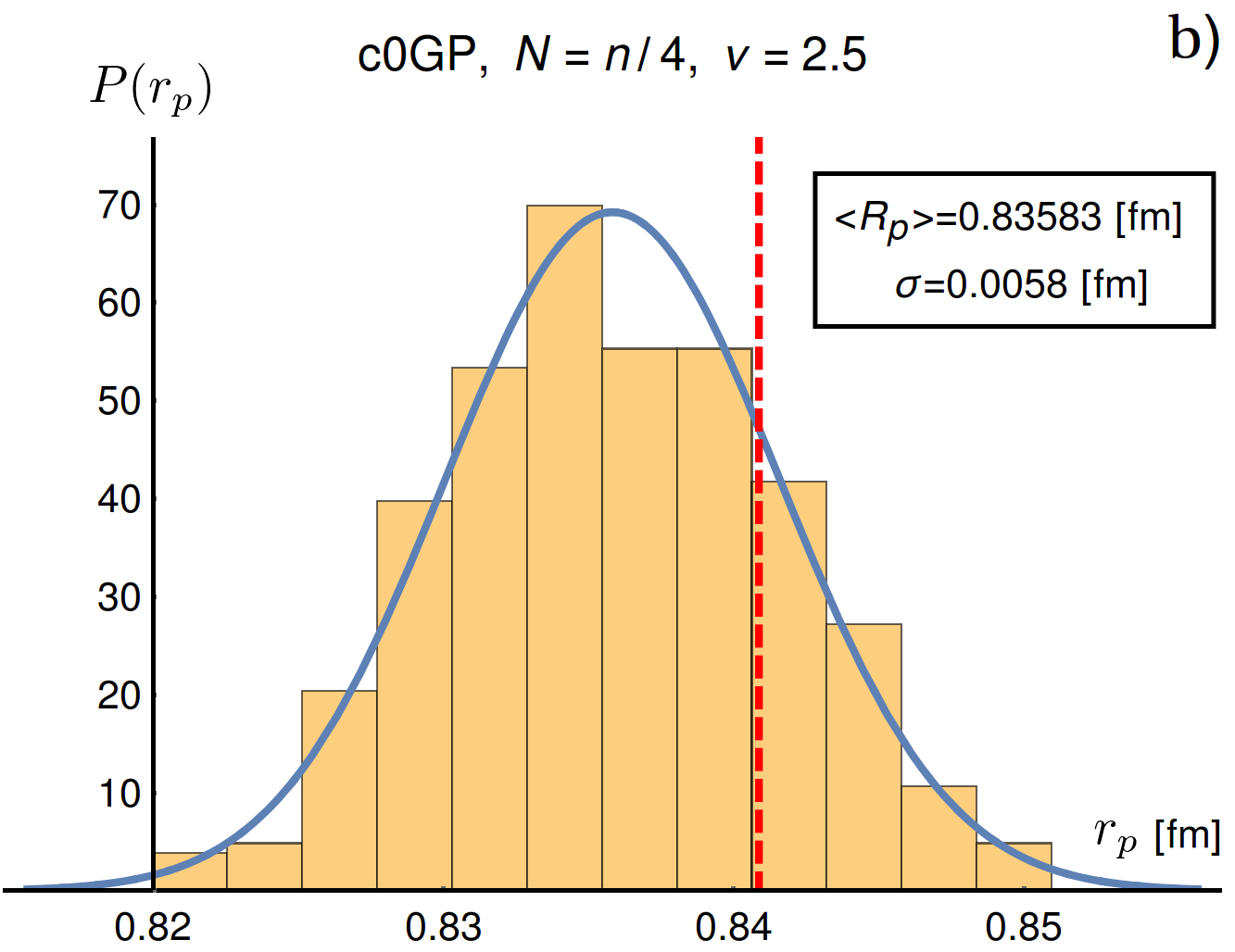}
  \\
  \includegraphics[scale = 0.55]{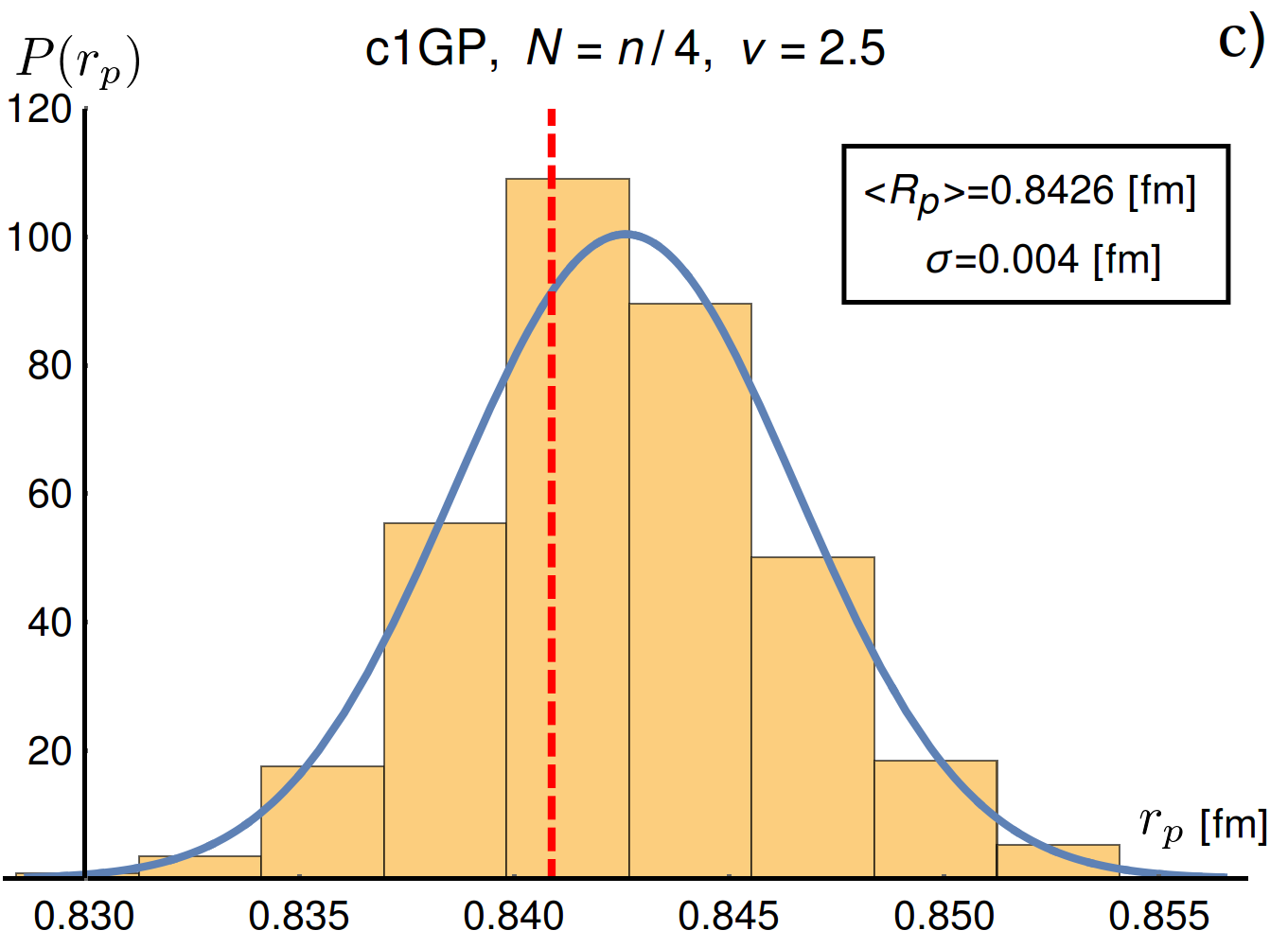}
  \includegraphics[scale = 0.55]{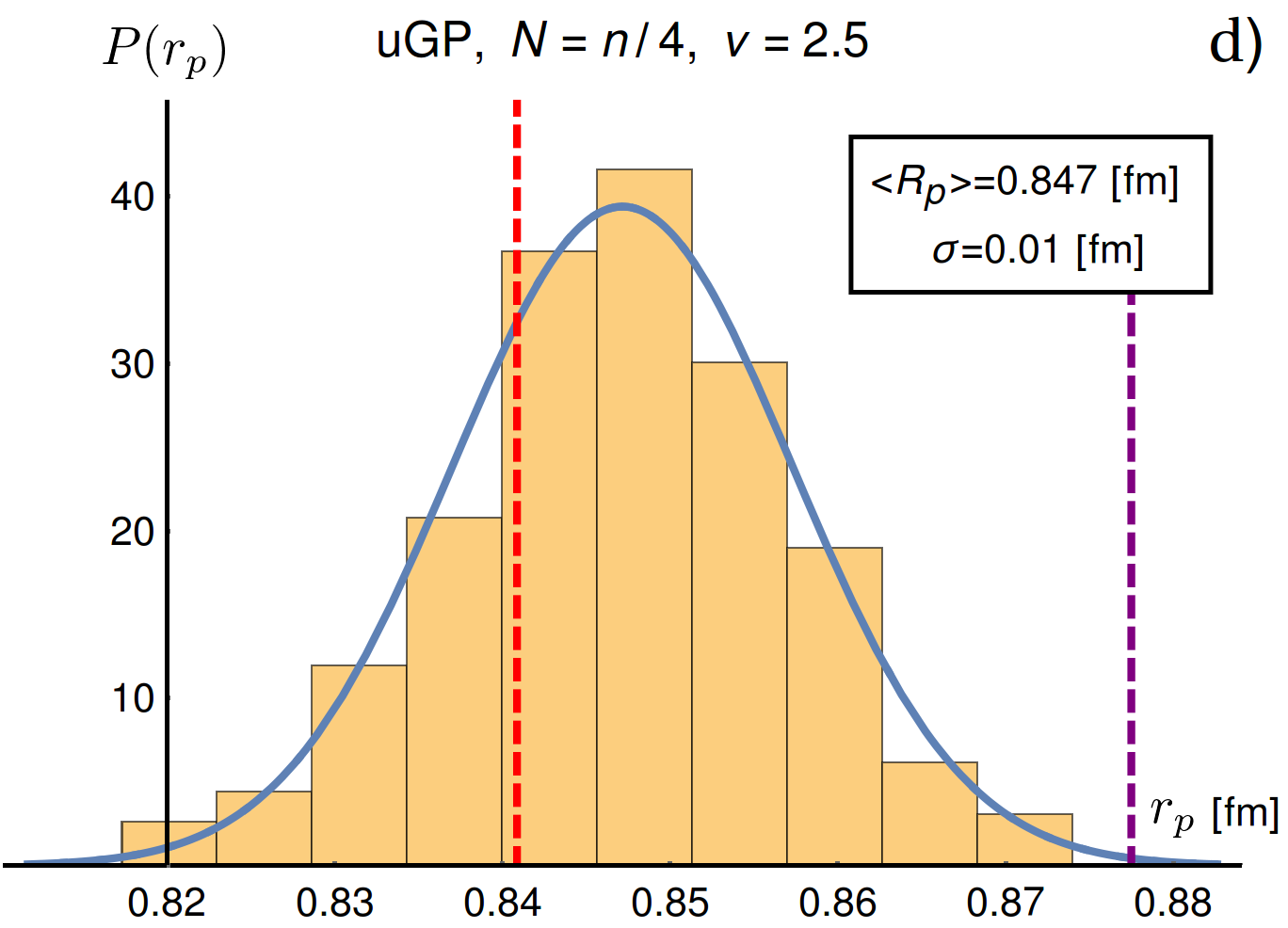}

  \end{center}
\caption{Histogram plots of MCMC samples of the radius $r_p$ for cGP (a), c$_0$GP (b), c$_1$GP (c) and uGP (d) with $N = n/4$ and $\nu = 2.5$ for the full dataset. The red and purple vertical dashed lines indicate the values of 0.84087 fm and 0.8775 fm respectively.}
\label{fig:highhist_n4}
\end{figure}

\begin{figure}
\begin{center}
   \includegraphics[scale = 0.55]{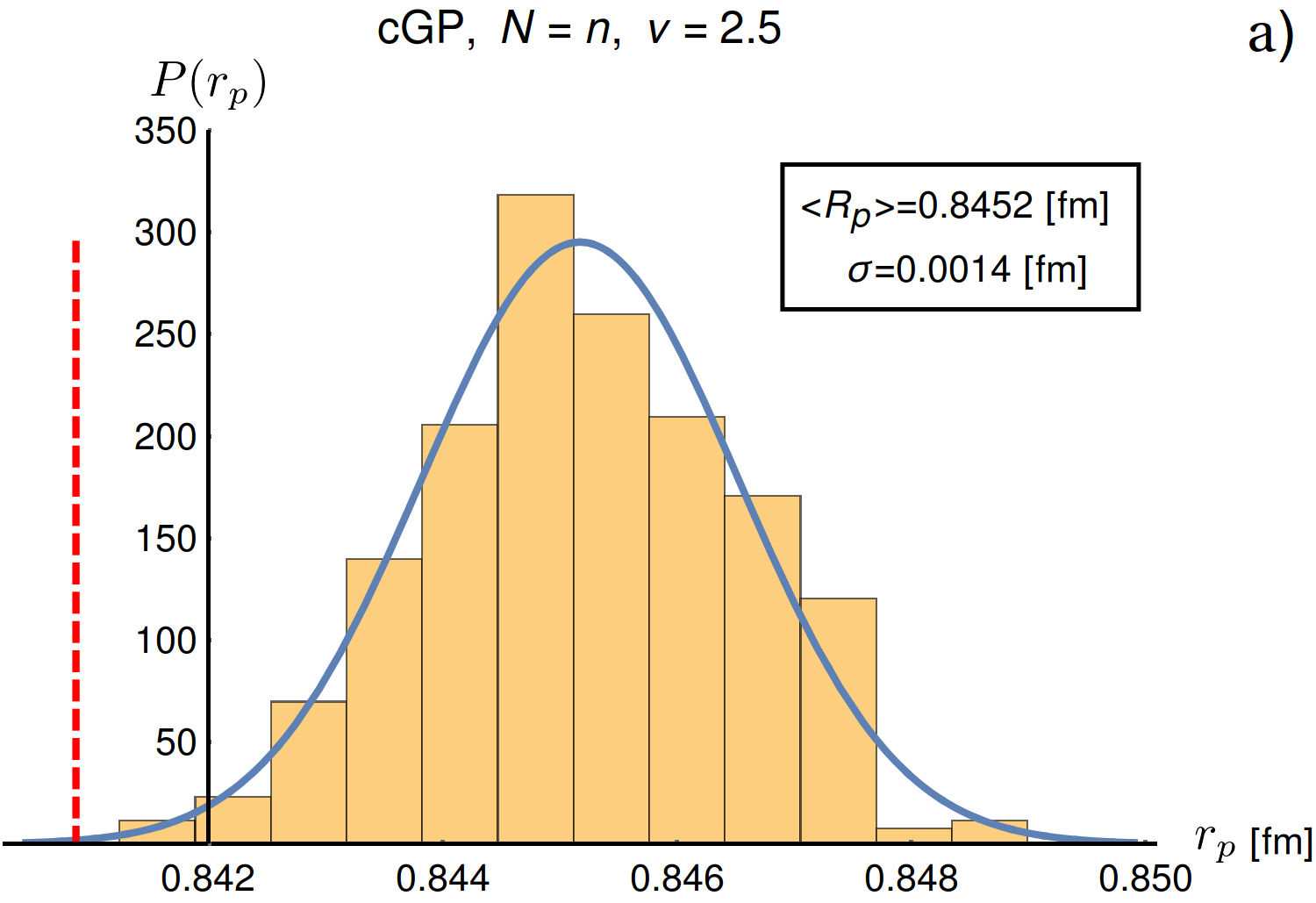}
  \includegraphics[scale = 0.55]{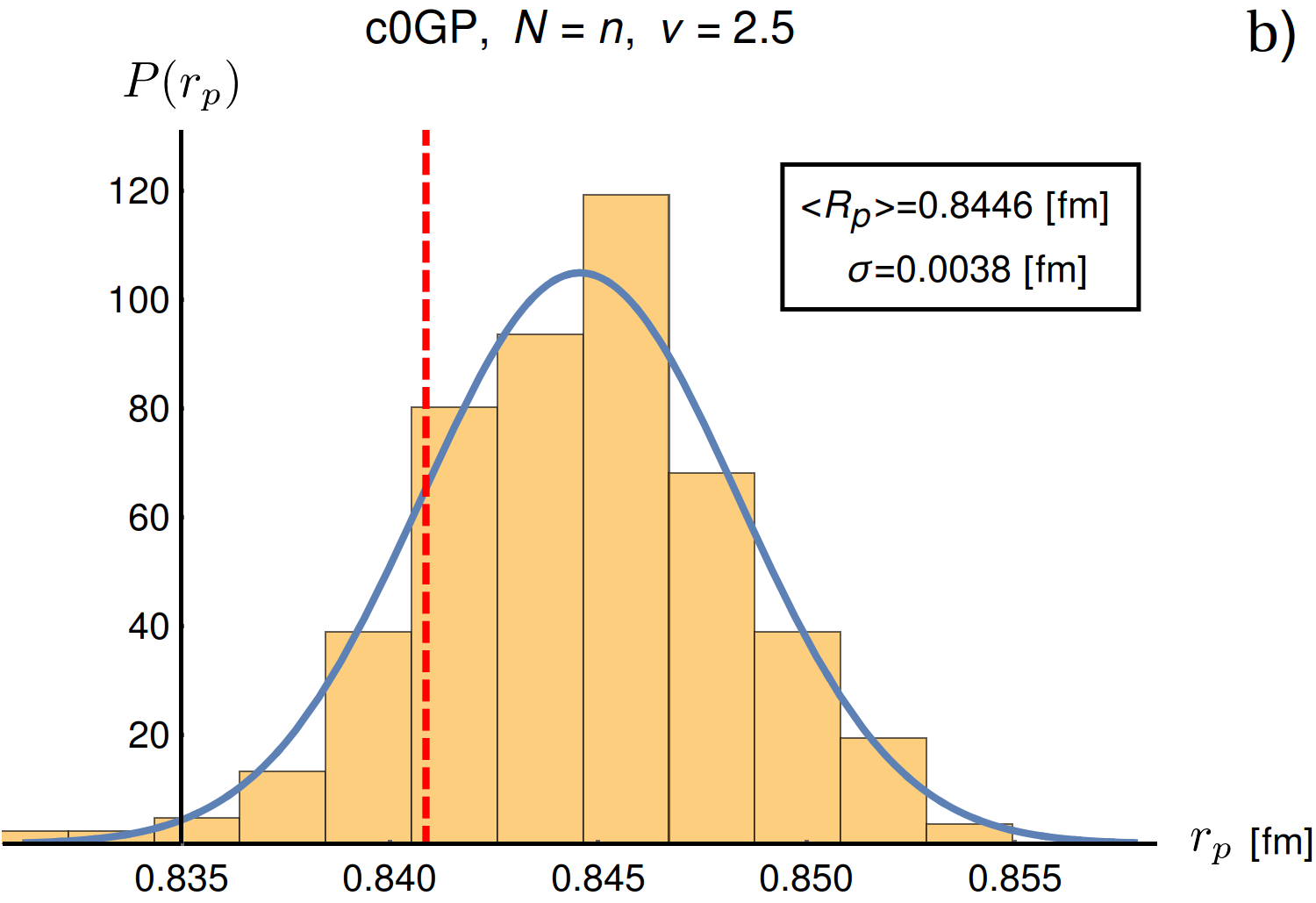}
  \\
   \includegraphics[scale = 0.55]{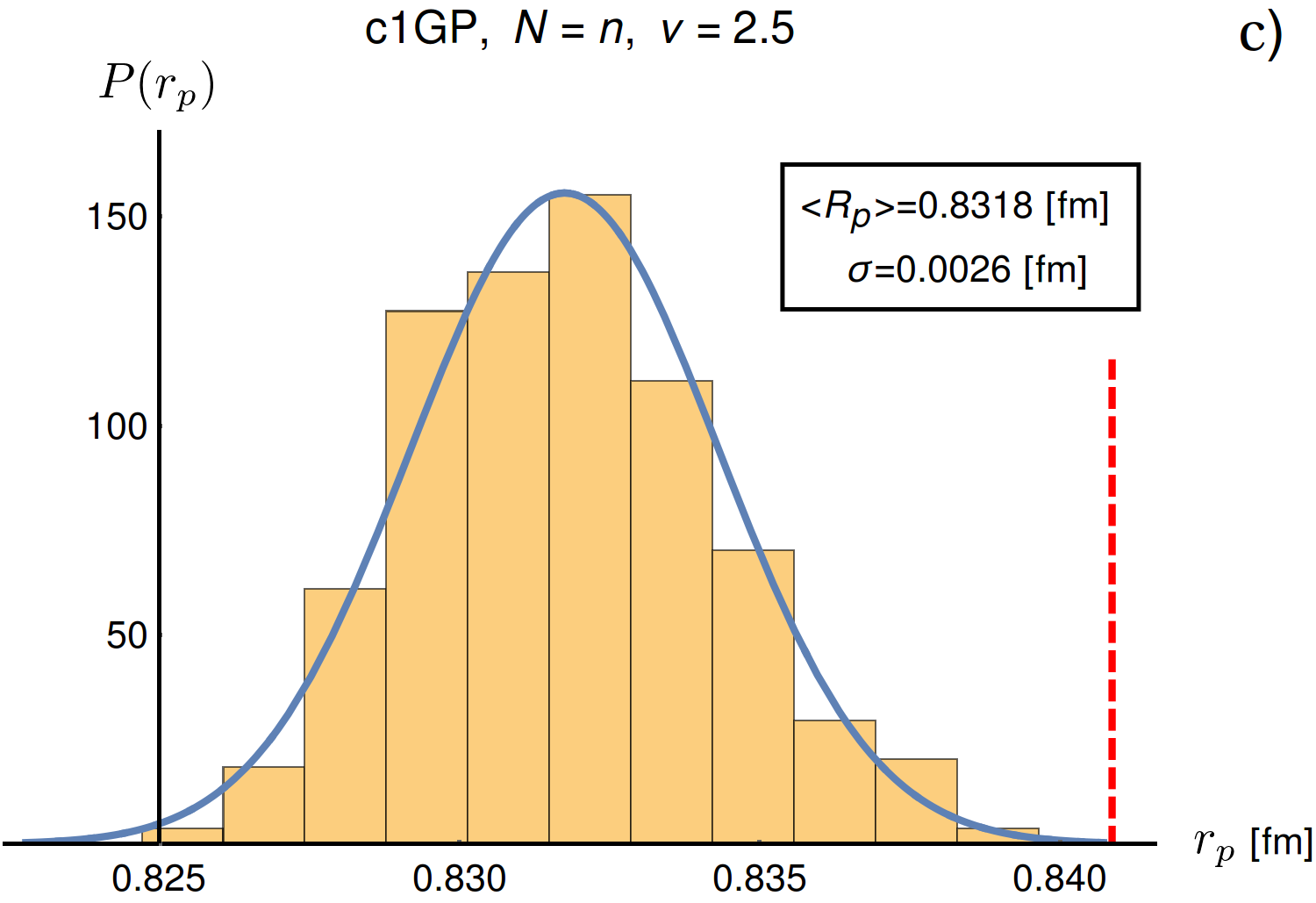}
  \includegraphics[scale = 0.55]{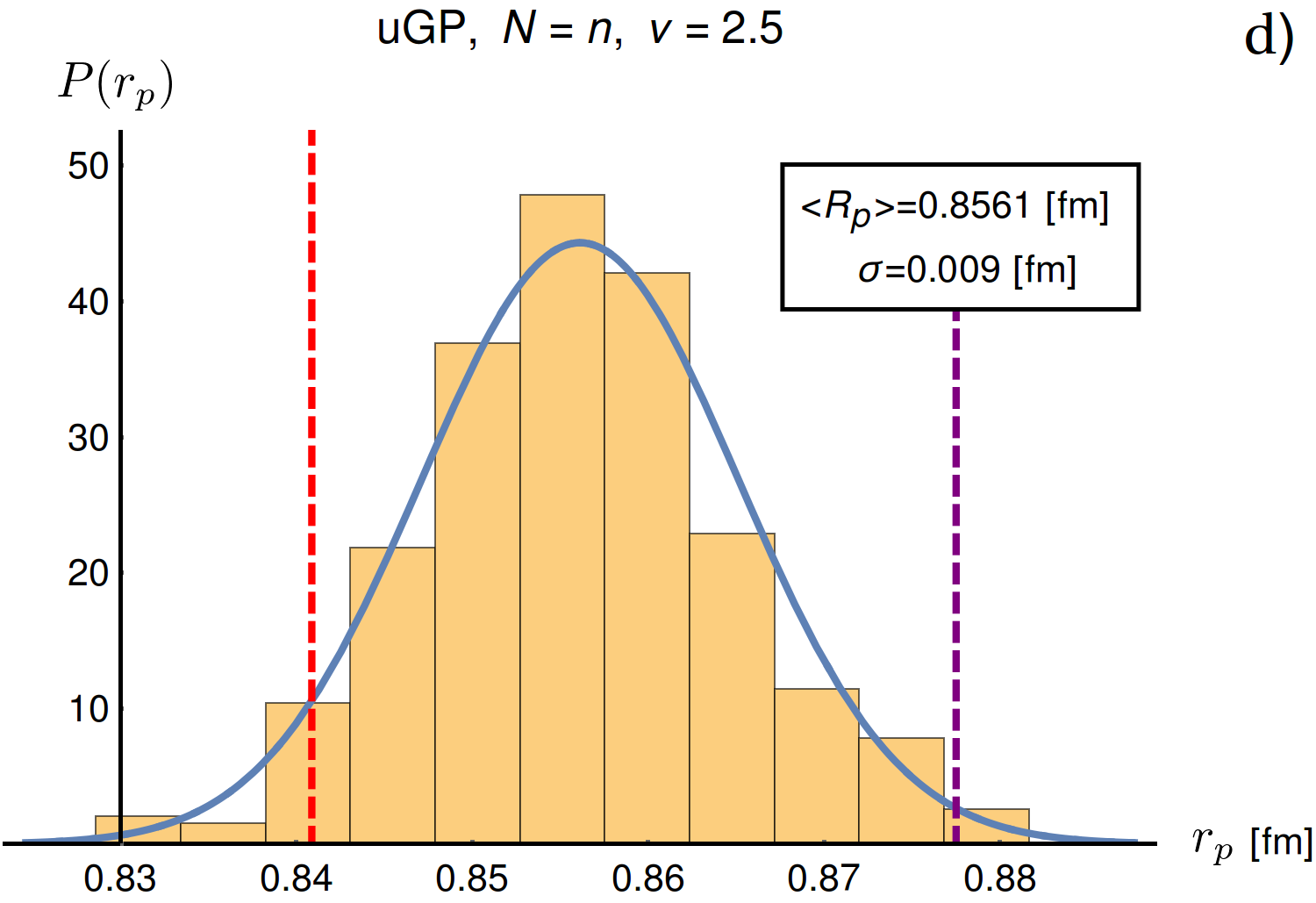}
  \end{center}
\caption{Histogram plots of MCMC samples of the radius $r_p$ for cGP (a), c$_0$GP (b), c$_1$GP (c) and uGP (d) with $N = n$ and $\nu = 2.5$ for the full dataset. The red and purple vertical dashed lines indicate the values of 0.84087 fm and 0.8775 fm respectively.}
\label{fig:highhist_n}
\end{figure}

\begin{figure}
\begin{center}
  \includegraphics[scale = 0.55]{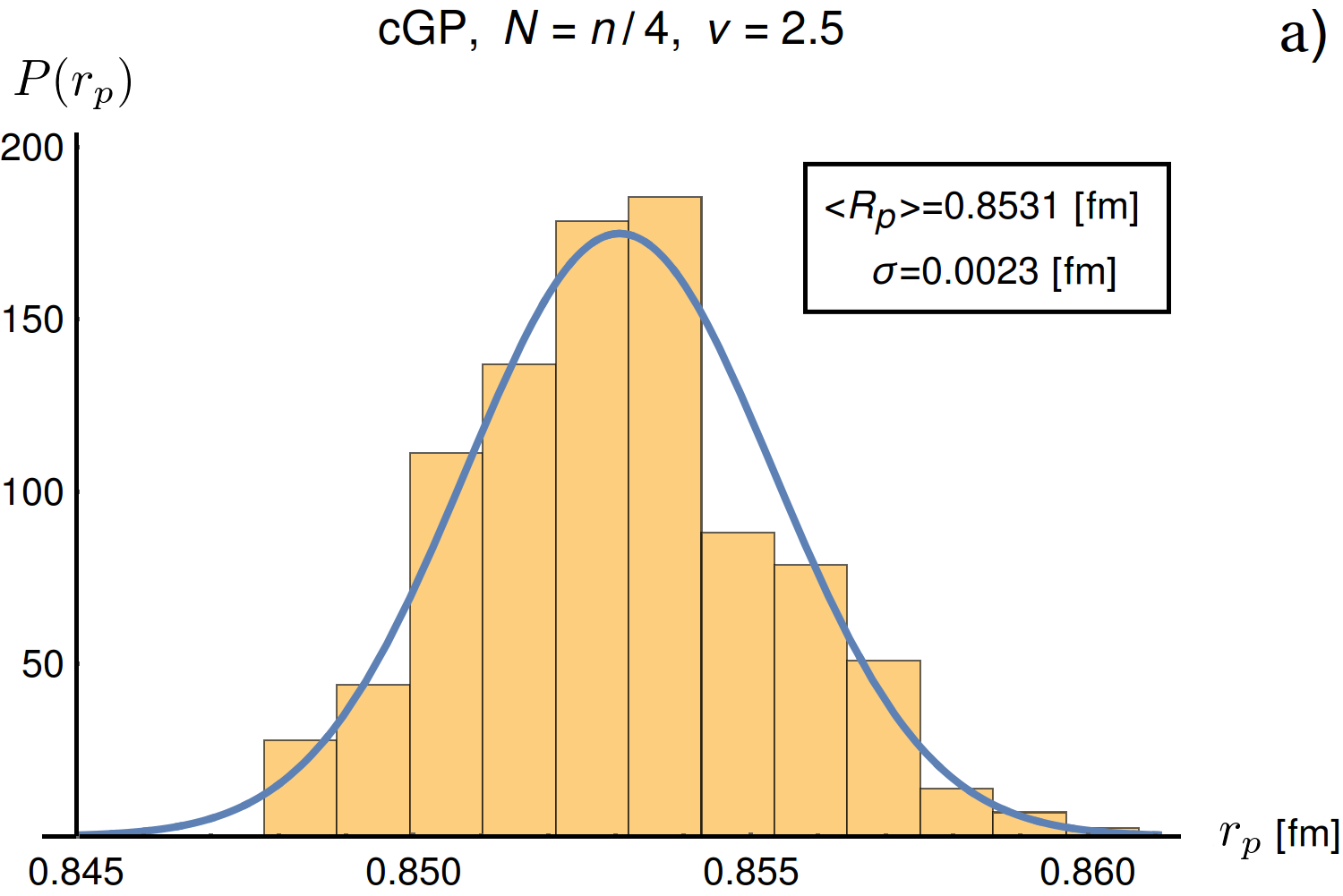}
  \includegraphics[scale = 0.55]{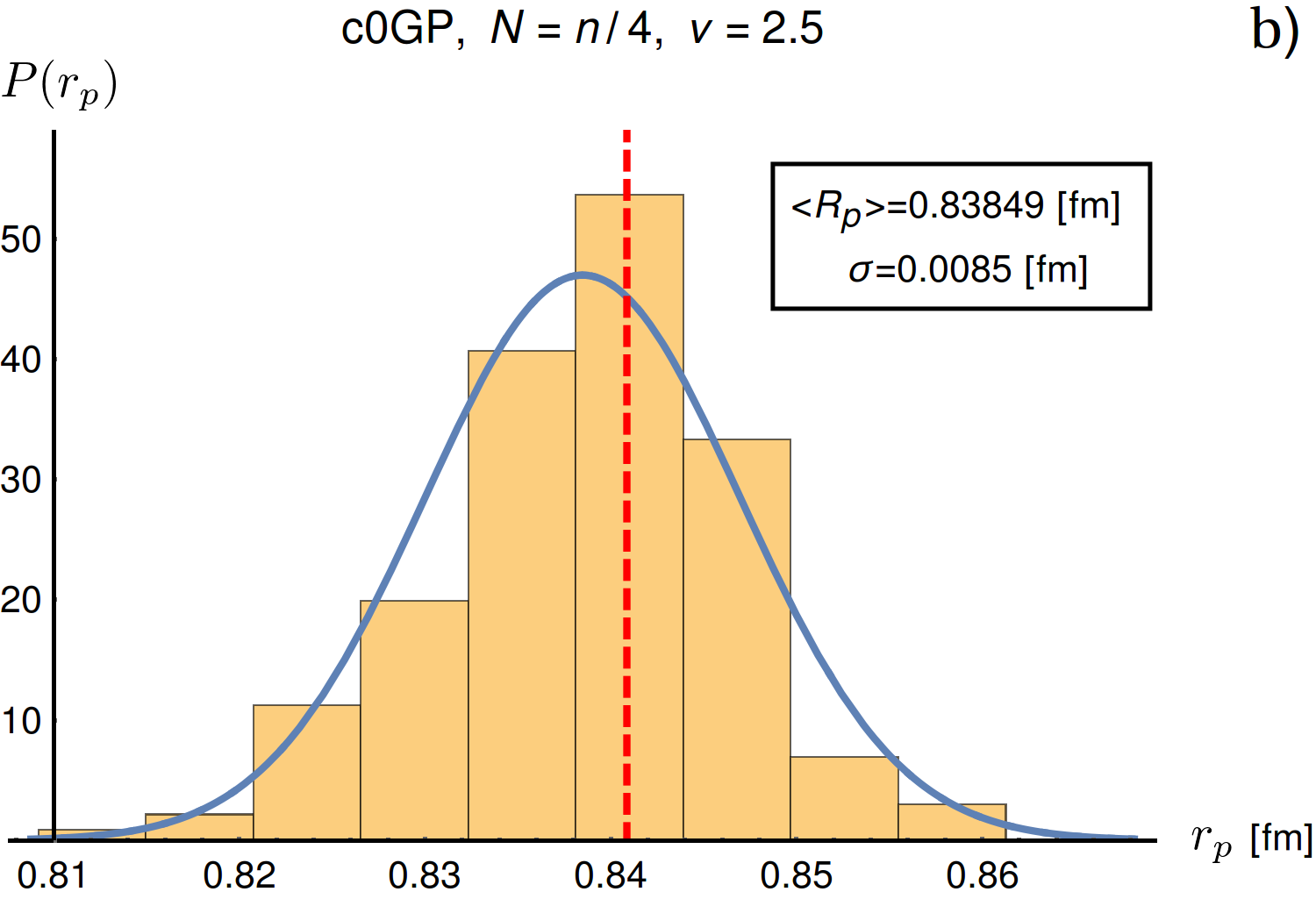}
  \\
  \includegraphics[scale = 0.55]{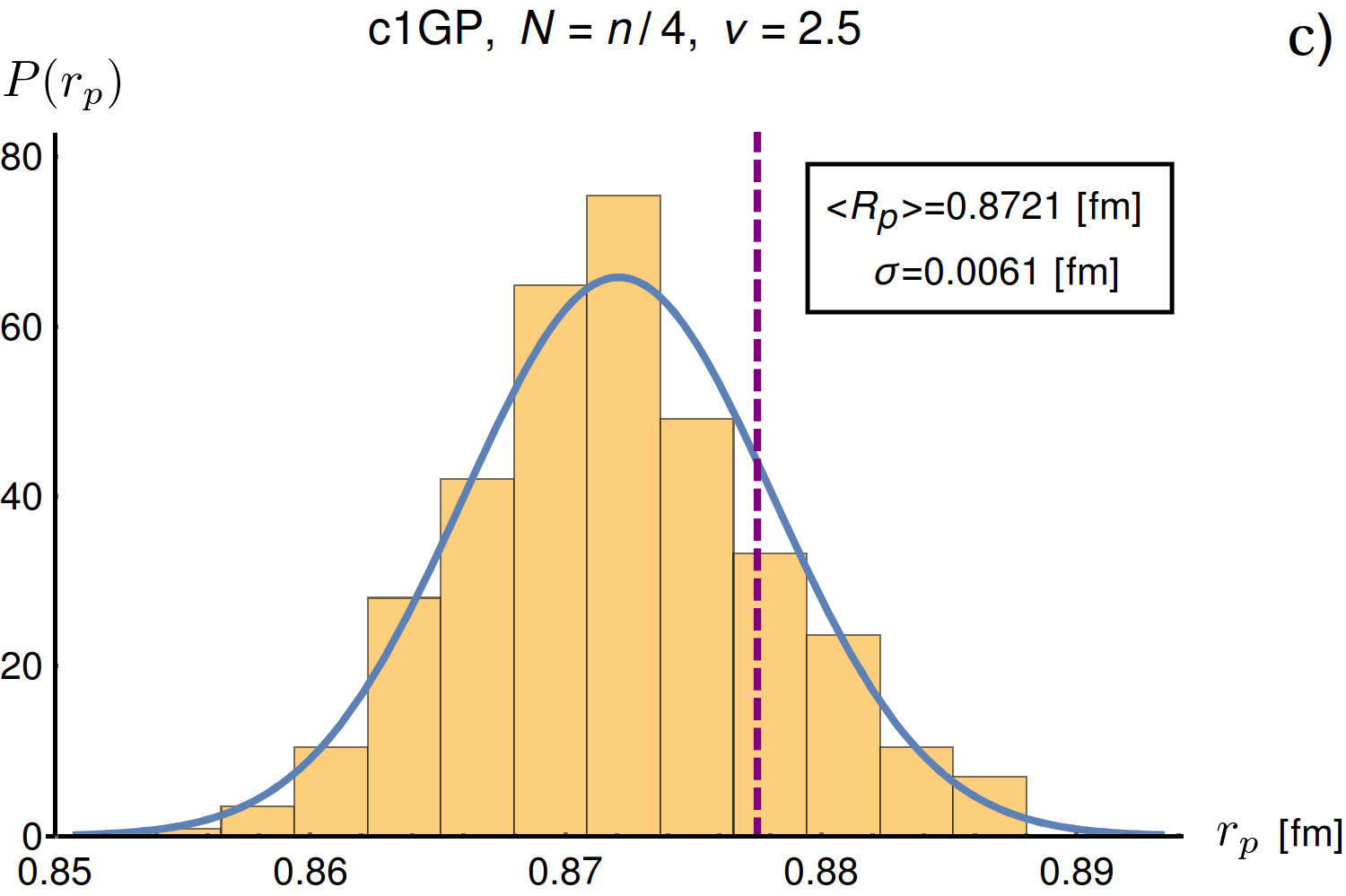}
  \includegraphics[scale = 0.55]{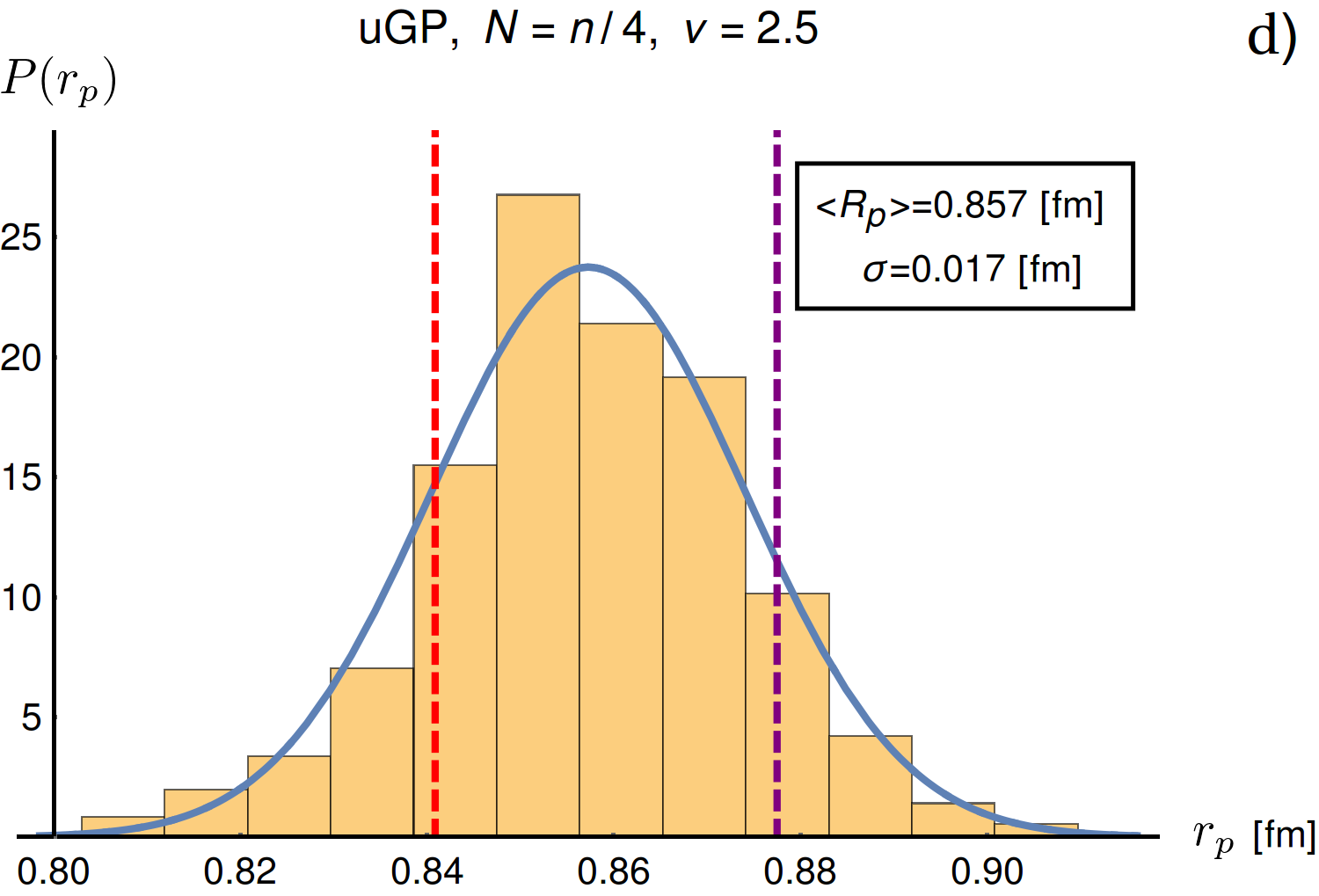}

  \end{center}
\caption{Histogram plots of MCMC samples of the radius $r_p$ for cGP (a), c$_0$GP (b), c$_1$GP (c) and uGP (d) with $N = n/4$ and $\nu = 2.5$ in the low regime case. The red and purple vertical dashed lines indicate the values of 0.84087 fm and 0.8775 fm respectively.}
\label{fig:lowhist_n4}
\end{figure}

\begin{figure}
\begin{center}
   \includegraphics[scale = 0.55]{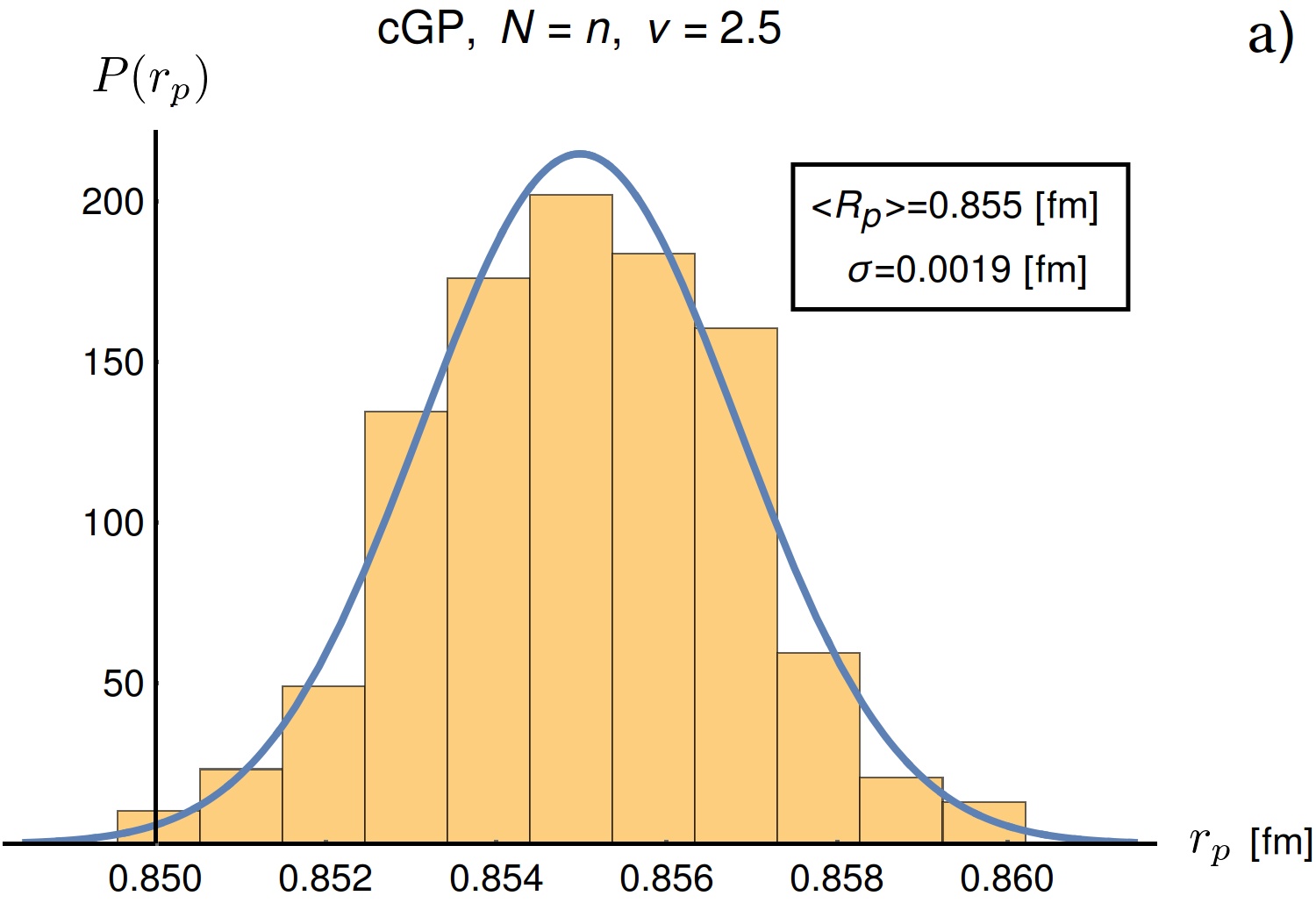}
  \includegraphics[scale = 0.55]{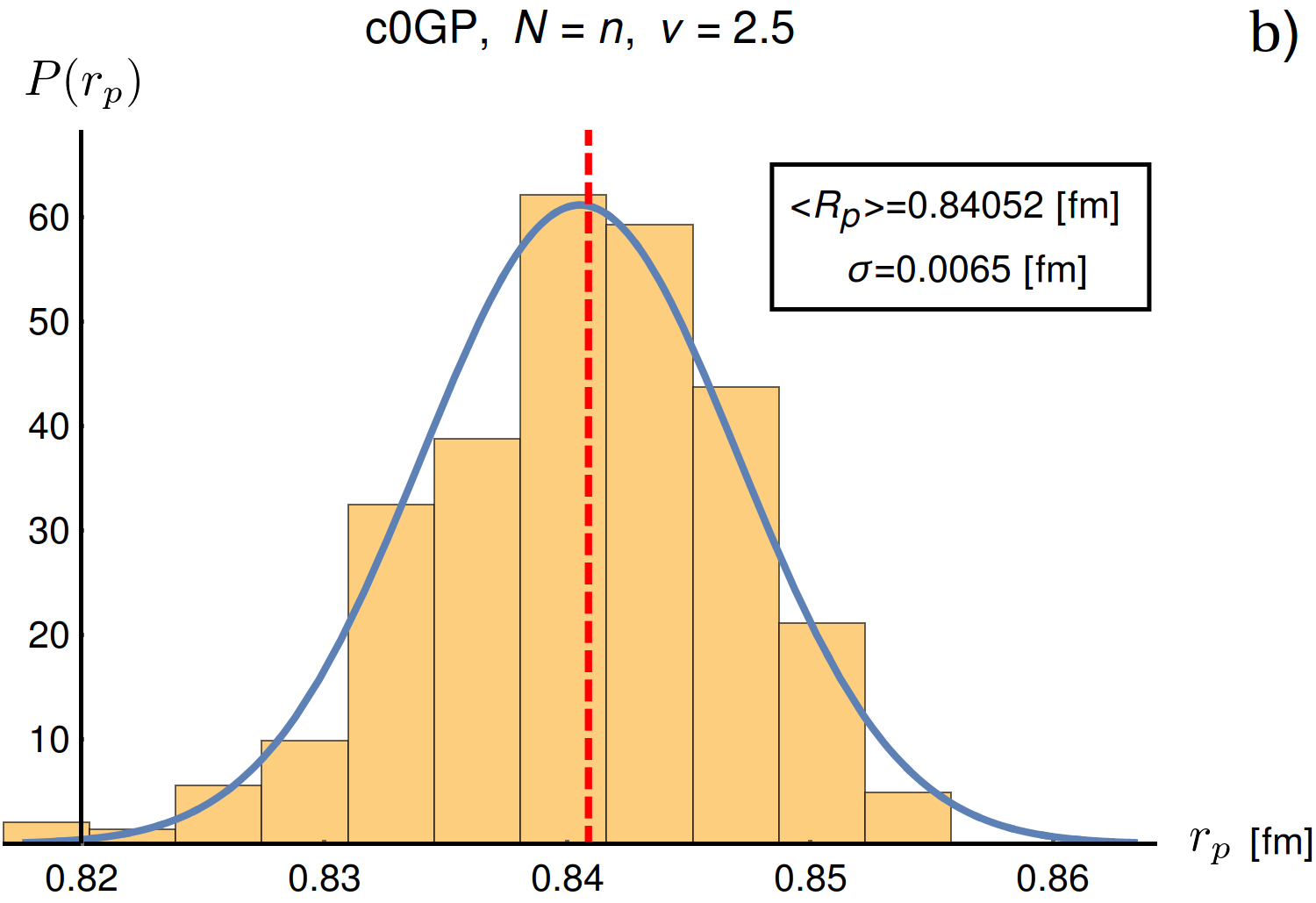}
  \\
   \includegraphics[scale = 0.55]{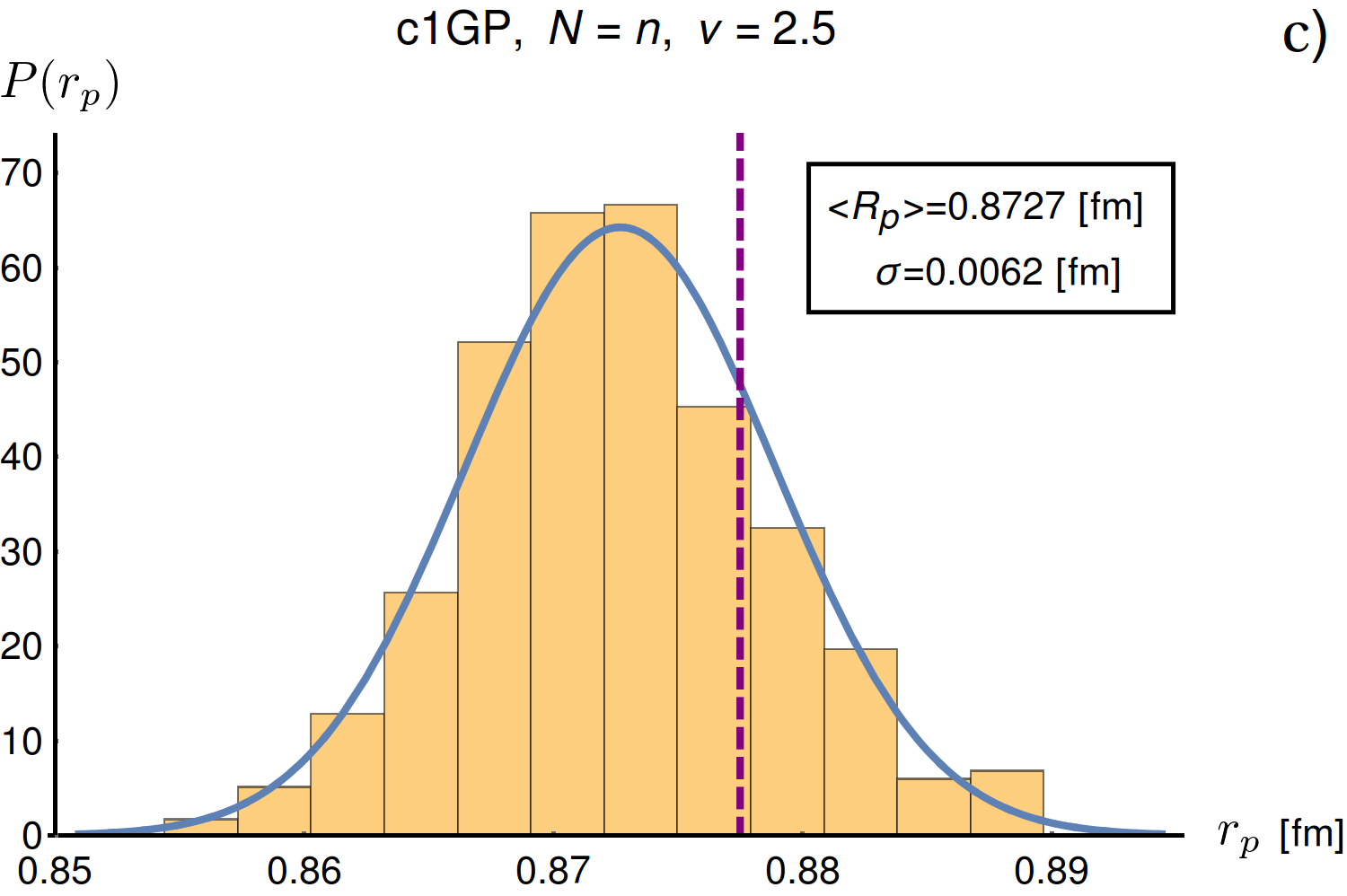}
  \includegraphics[scale = 0.55]{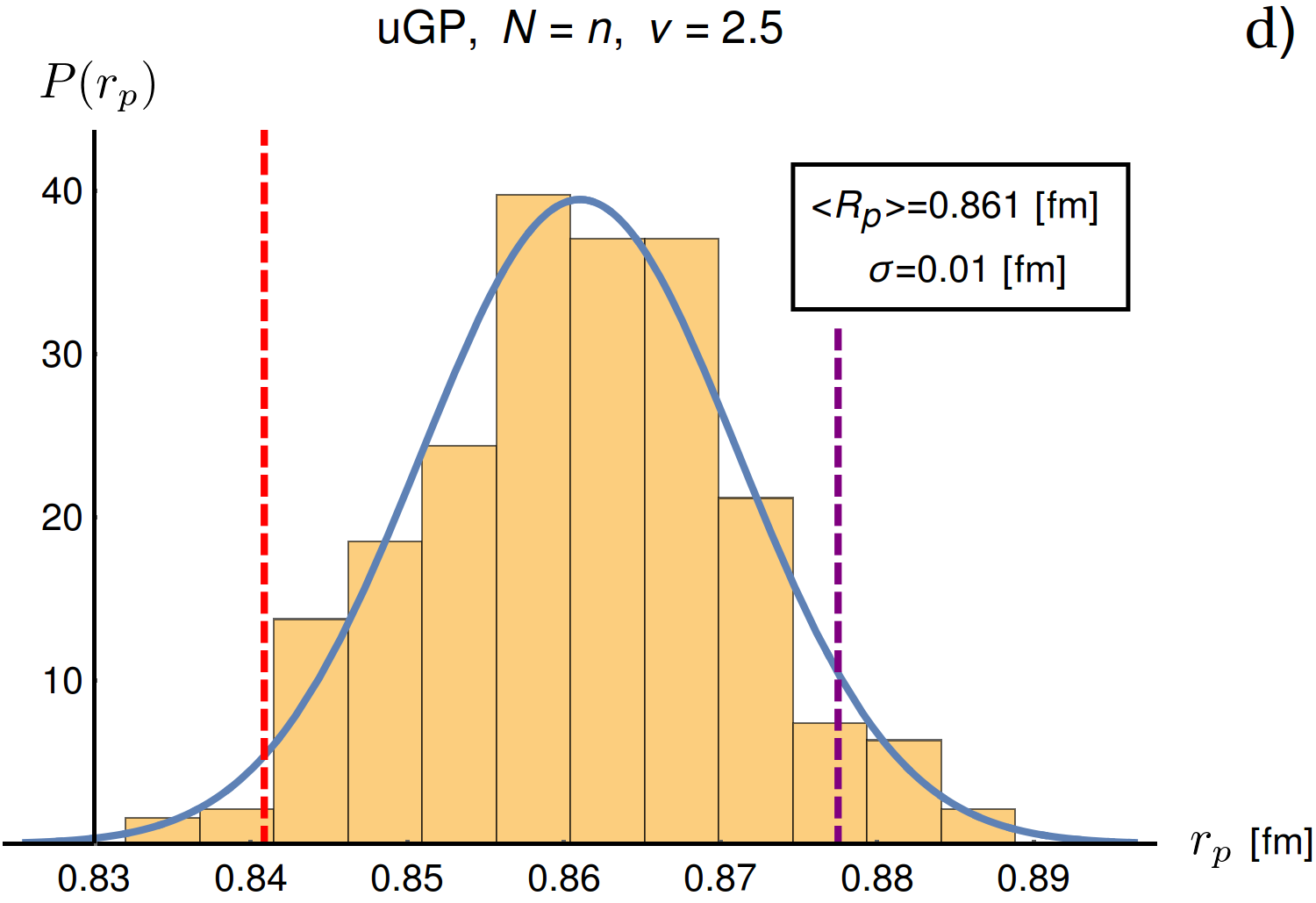}
  \end{center}
\caption{Histogram plots of MCMC samples of the radius $r_p$ for cGP (a), c$_0$GP (b), c$_1$GP (c) and uGP (d) with $N = n$ and $\nu = 2.5$ in the low regime. The red and purple vertical dashed lines indicate the values of 0.84087 fm and 0.8775 fm respectively.}
\label{fig:lowhist_n}
\end{figure}

\begin{figure}
\begin{center}
   \includegraphics[scale = 0.5]{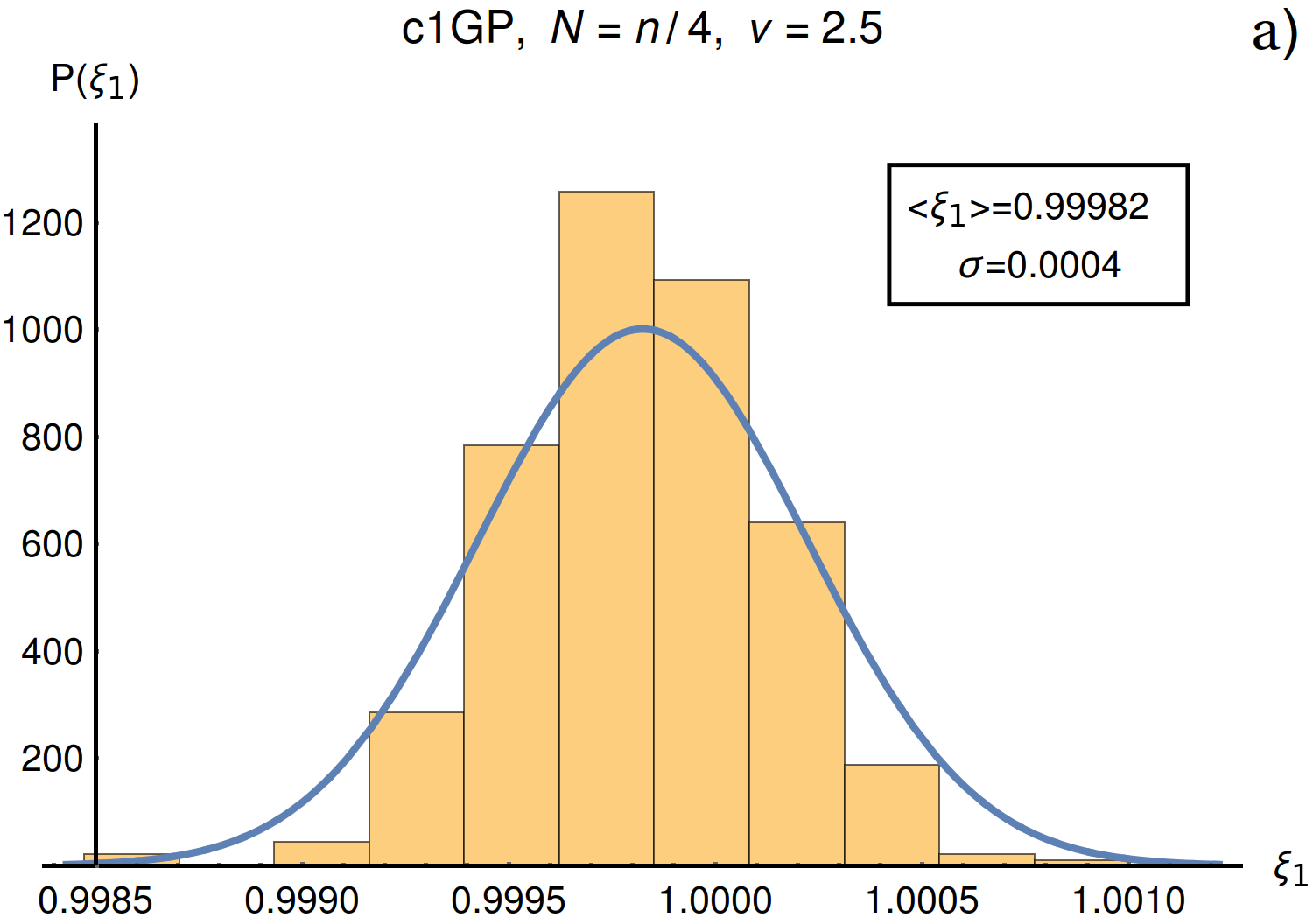}
  \includegraphics[scale = 0.5]{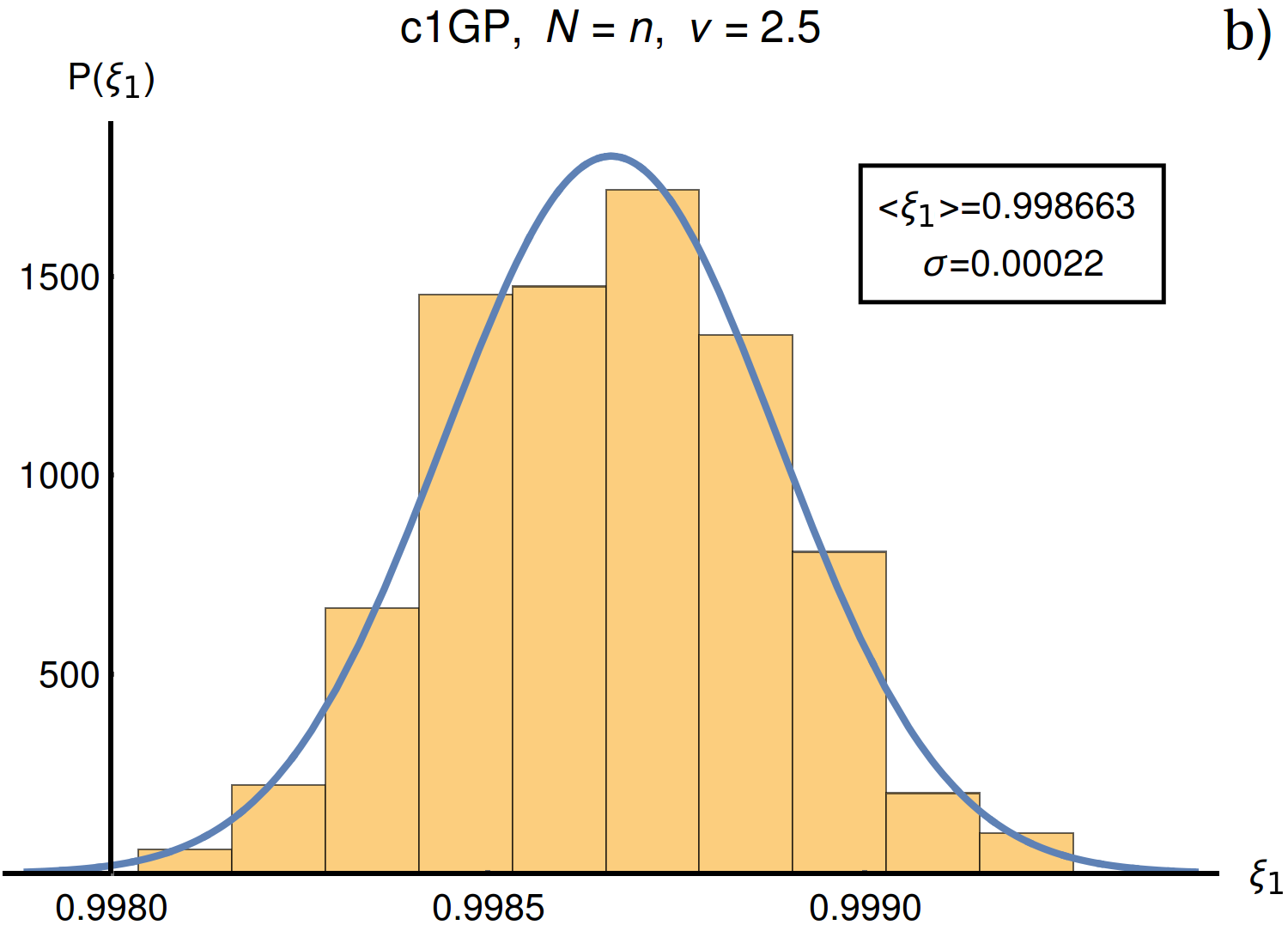}
  \\
   \includegraphics[scale = 0.5]{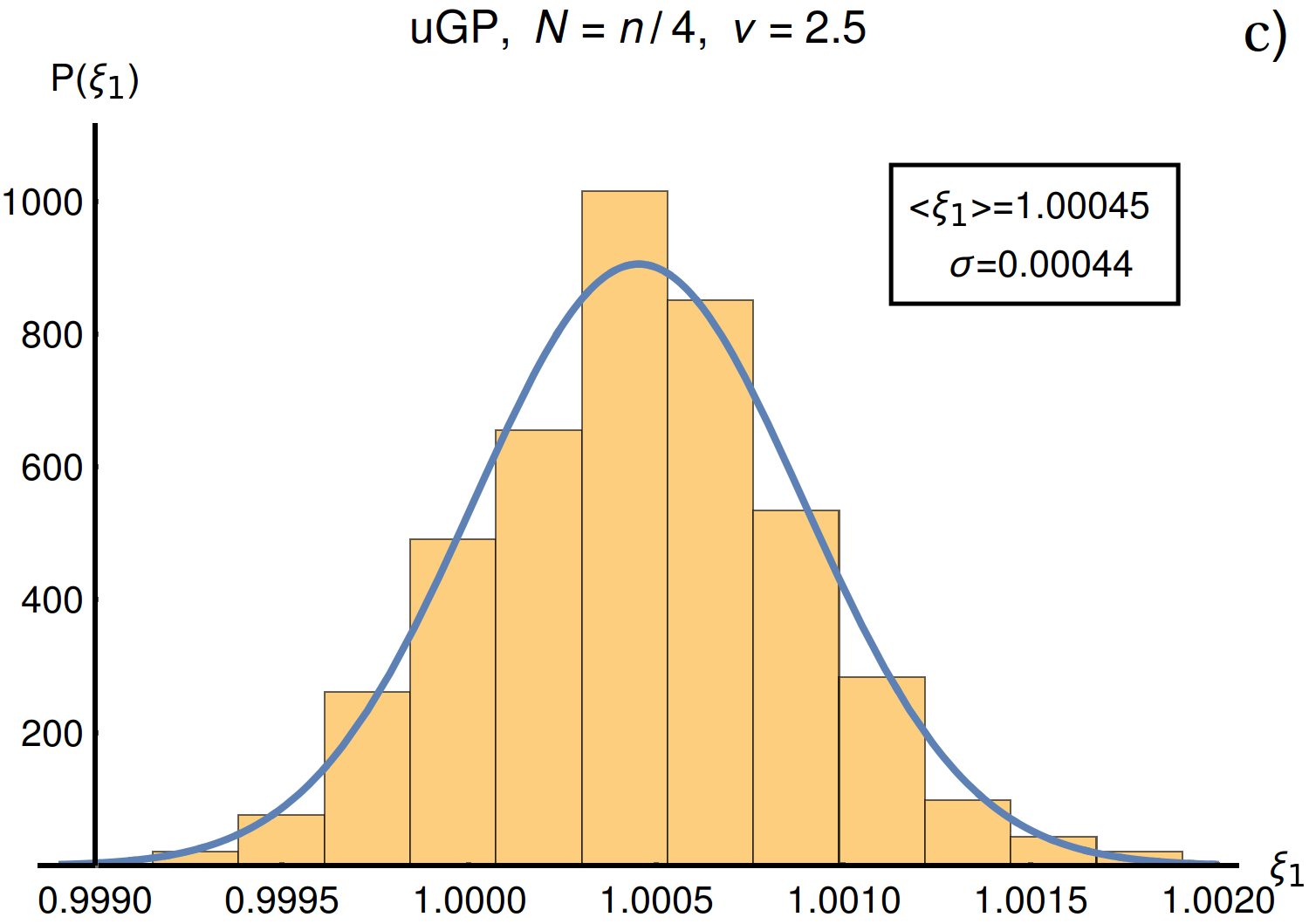}
  \includegraphics[scale = 0.5]{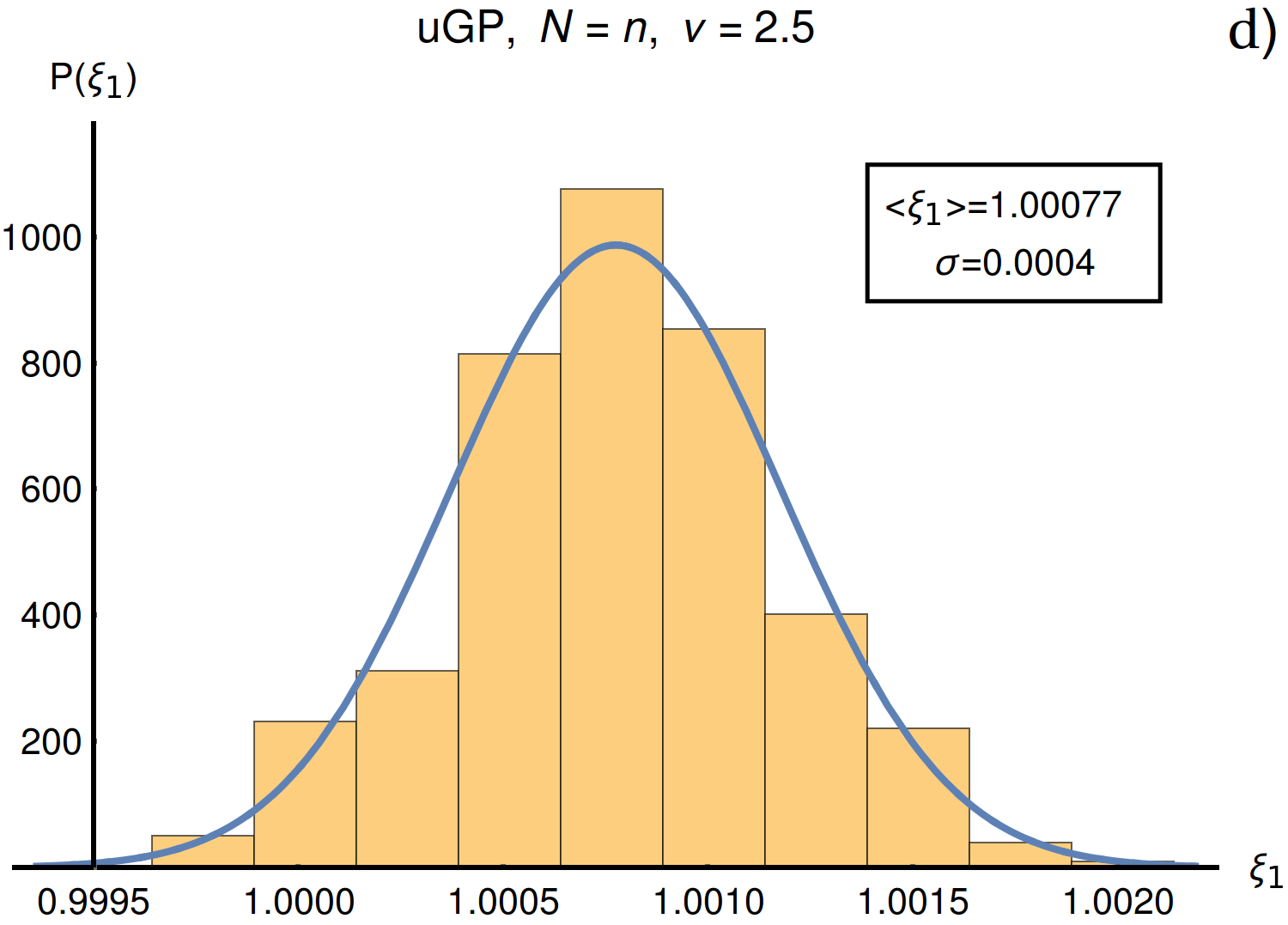}
  \end{center}
\caption{Histogram plots of MCMC samples of $\xi_1$ ($n_0G_E(0)$) for c$1$GP and uGP with $N = \{n/4, n\}$ and $\nu = 2.5$ for the full dataset. (a) ($N=n/4$) and (b) ($N=n$) show the results of c$_1$GP, (c) ($N=n/4$) and (d) ($N=n$) show the results of uGP.}
\label{fig:highhist_f0}
\end{figure}

\begin{figure}
\begin{center}
   \includegraphics[scale = 0.5]{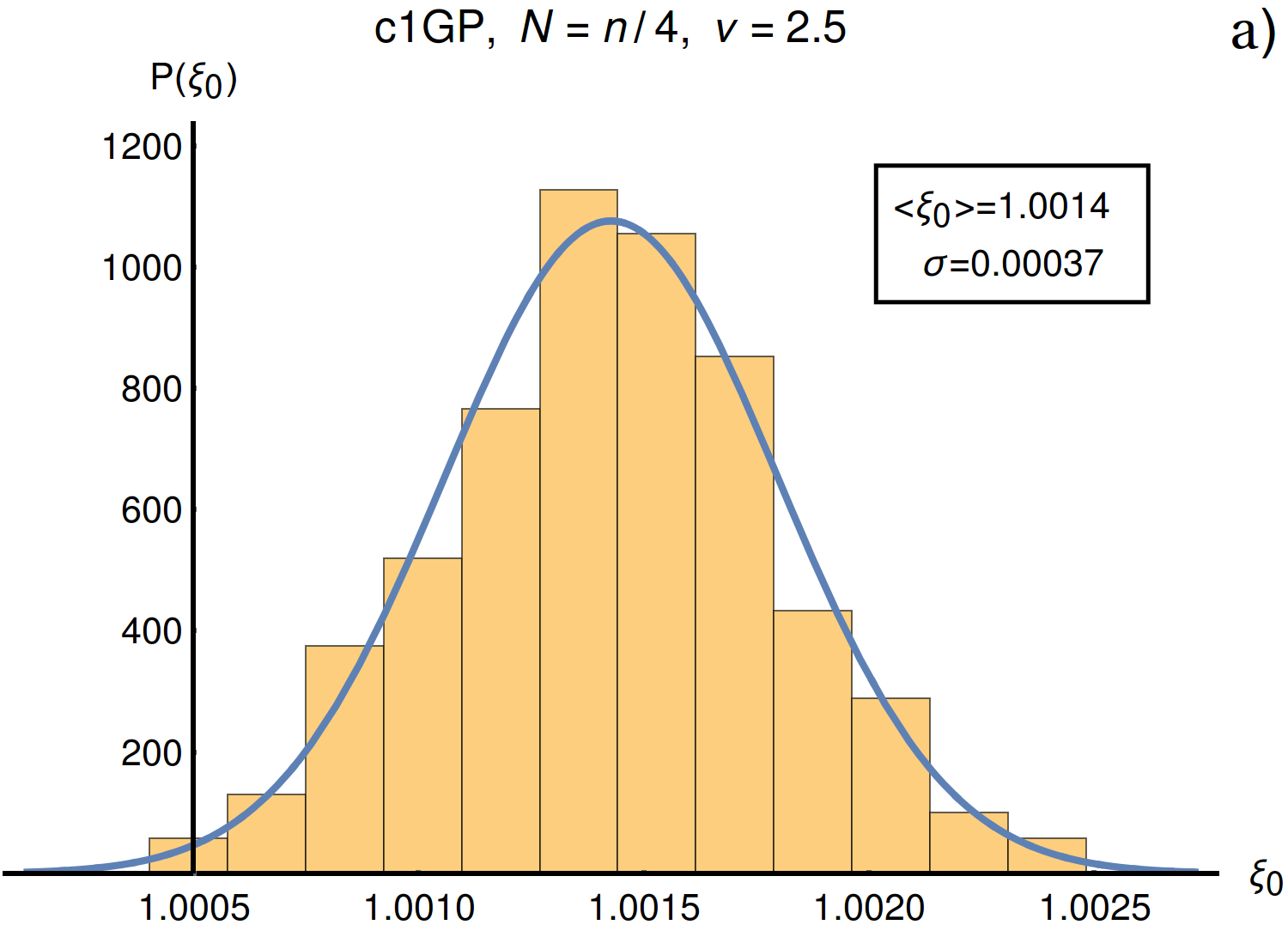}
  \includegraphics[scale = 0.5]{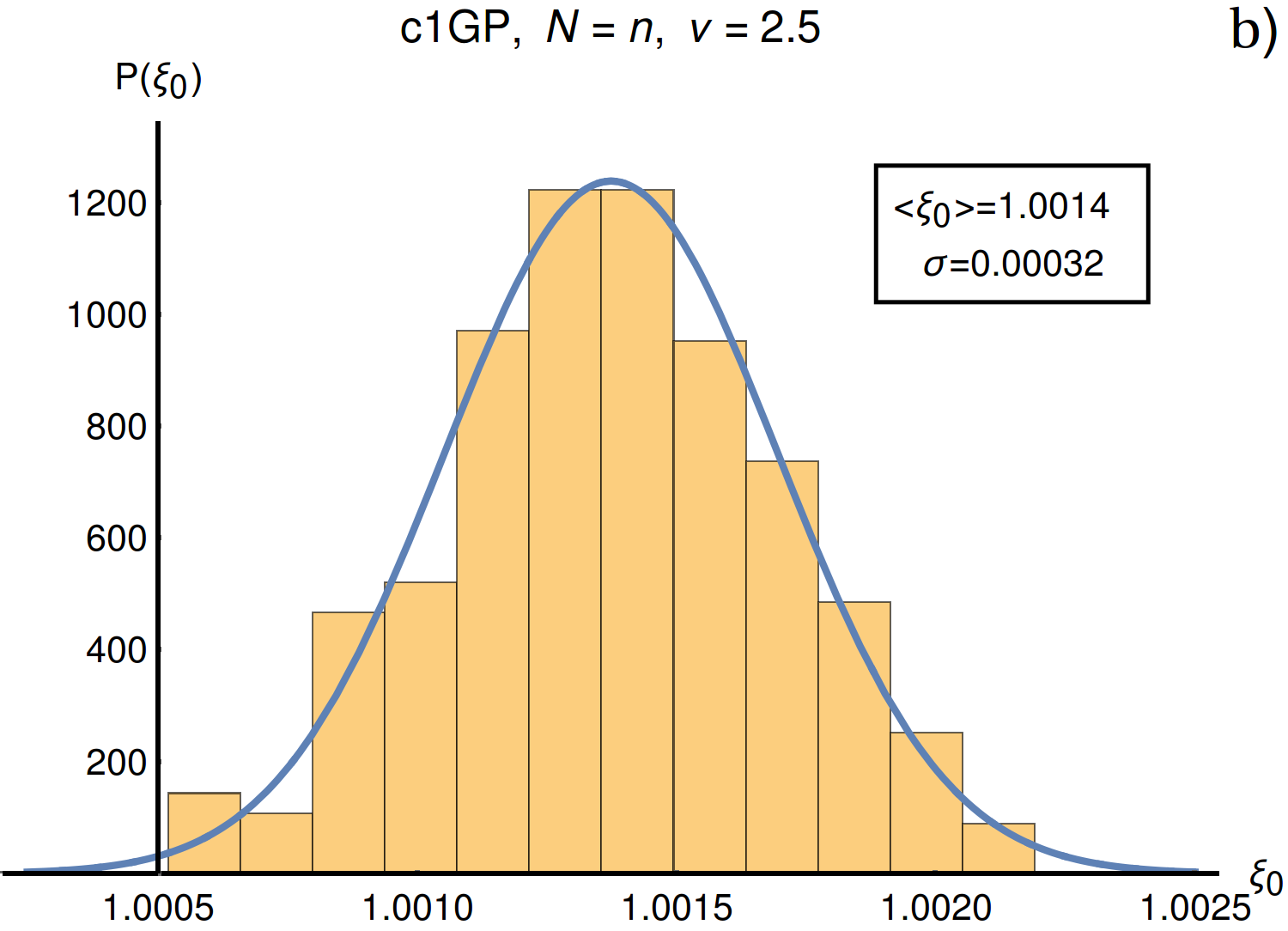}
  \\
   \includegraphics[scale = 0.5]{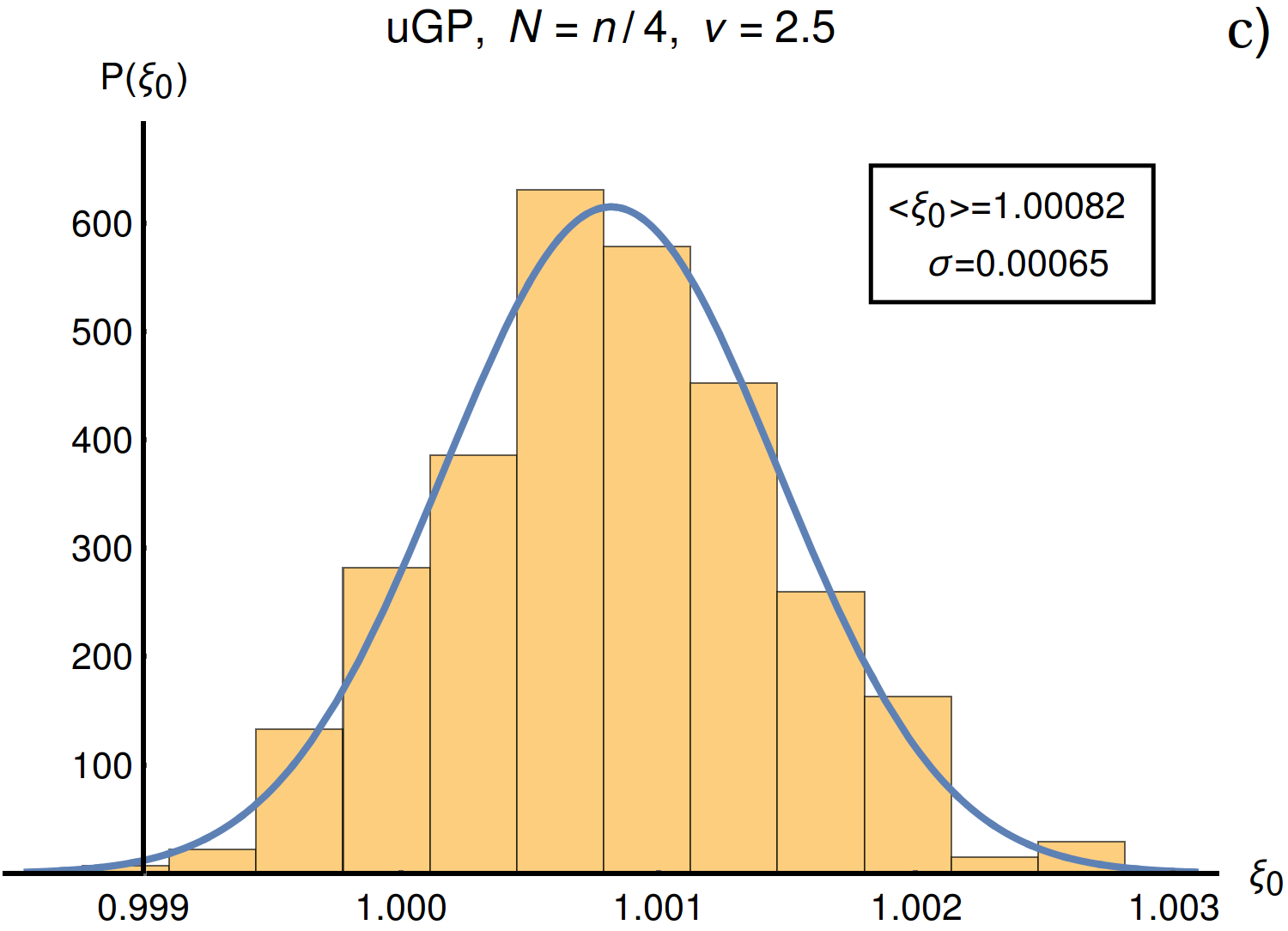}
  \includegraphics[scale = 0.5]{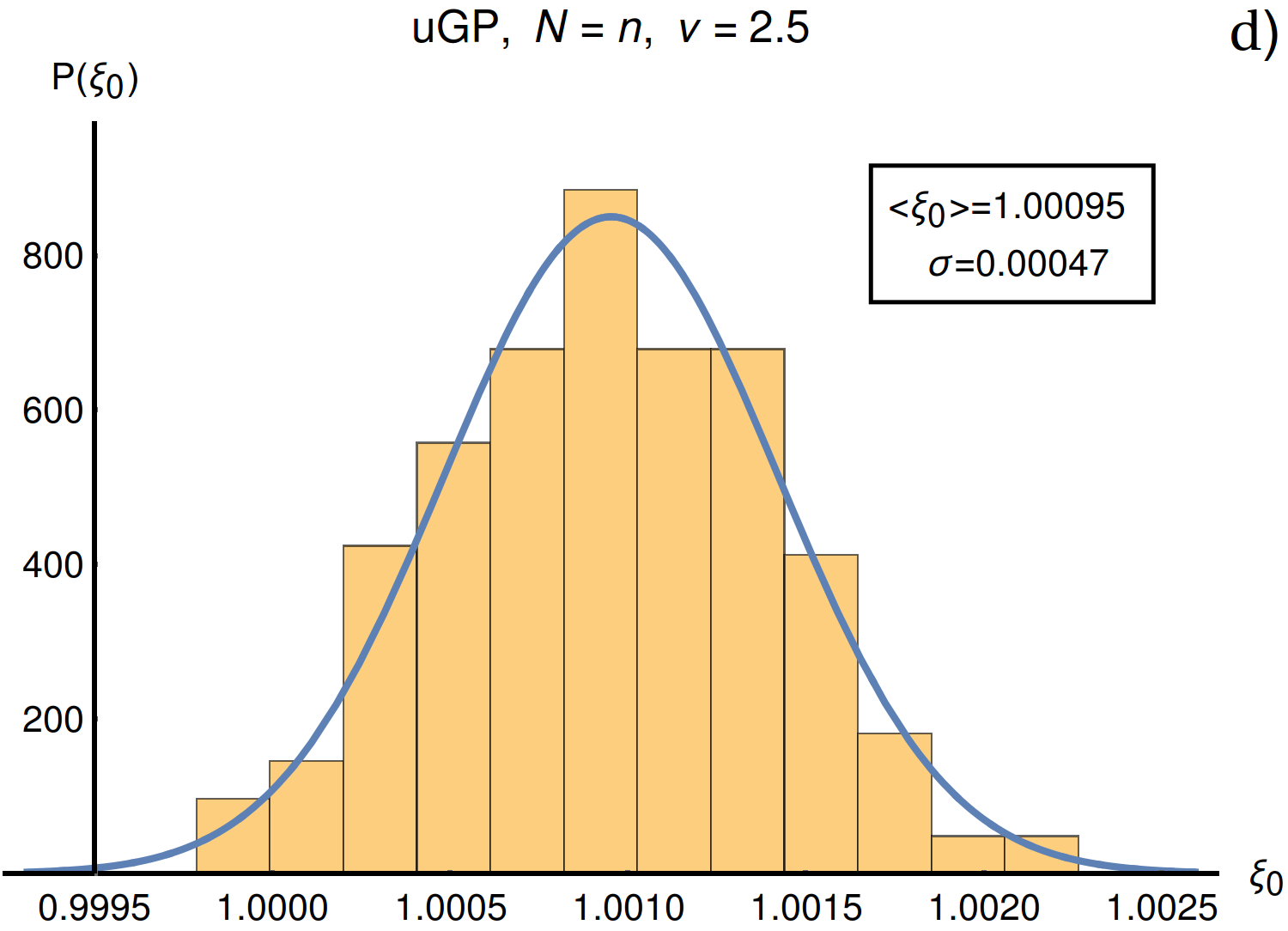}
  \end{center}
\caption{Histogram plots of MCMC samples of $\xi_1$ ($n_0G_E(0)$) for c$1$GP and uGP with $N = \{n/4, n\}$ and $\nu = 2.5$ for the low regime. (a) ($N=n/4$) and (b) ($N=n$) show the results of c$_1$GP, (c) ($N=n/4$) and (d) ($N=n$) show the results of uGP.}
\label{fig:lowhist_f0}
\end{figure}

\newpage
\clearpage

\bibliography{proton_final.bbl}
\end{document}